\DeclareMathOperator*{\argmax}{argmax}
\newcommand{\E}{\mathop{\mathbb{E}}}
\newcommand{\PoA}{{\mathsf{PoA}}}
\newcommand{\OPT}{{\mathsf{OPT}}}
\newcommand{\STR}{{\mathsf{STR}}}
\newcommand{\PoS}{{\mathsf{PoS}}}
\newcommand{\vSW}{v_{\mathsf{SW}}}
\newcommand{\phiSF}{\phi_{\mathsf{SF}}}
\newtheorem{theorem}{Theorem}[section]
\theoremstyle{plain}
\newtheorem{lemma}[theorem]{Lemma}
\newtheorem{corollary}[theorem]{Corollary}
\newtheorem{proposition}[theorem]{Proposition}
\theoremstyle{definition}
\newtheorem{definition}[theorem]{Definition}
\newtheorem{remark}[theorem]{Remark}
\newtheorem{example}[theorem]{Example}
\title{The power of mediators: Price of anarchy and stability in Bayesian games with submodular social welfare}
\author{Kaito Fujii\footnote{National Institute of Informatics and SOKENDAI. Email: \texttt{fujiik@nii.ac.jp}}}
\begin{document}

\maketitle

\begin{abstract}
This paper investigates the role of mediators in Bayesian games by examining their impact on social welfare through the price of anarchy (PoA) and price of stability (PoS). Mediators can communicate with players to guide them toward equilibria of varying quality, and different communication protocols lead to a variety of equilibrium concepts collectively known as Bayes (coarse) correlated equilibria. To analyze these equilibrium concepts, we consider a general class of Bayesian games with submodular social welfare, which naturally extends valid utility games and their variant, basic utility games. These frameworks, introduced by \citet{Vet}, have been developed to analyze the social welfare guarantees of equilibria in games such as competitive facility location, influence maximization, and other resource allocation problems.

We provide upper and lower bounds on the PoA and PoS for a broad class of Bayes (coarse) correlated equilibria. Central to our analysis is the strategy representability gap, which measures the multiplicative gap between the optimal social welfare achievable with and without knowledge of other players' types. For monotone submodular social welfare functions, we show that this gap is $1-1/\rme$ for independent priors and $\Theta(1/\sqrt{n})$ for correlated priors, where $n$ is the number of players. These bounds directly lead to upper and lower bounds on the PoA and PoS for various equilibrium concepts, while we also derive improved bounds for specific concepts by developing smoothness arguments. Notably, we identify a fundamental gap in the PoA and PoS across different classes of Bayes correlated equilibria, highlighting essential distinctions among these concepts.
\end{abstract}

\tableofcontents

\section{Introduction}
Mediators play a significant role in shaping outcomes in Bayesian games, where players make decisions under uncertainty about other players' private information.
By communicating with players and facilitating coordination, mediators can influence players' decision-making, guiding them toward equilibria with varying quality of social welfare.
Depending on the communication protocols employed, various equilibrium concepts arise, collectively referred to as Bayes (coarse) correlated equilibria \citep{Forges93,Fujii23} (see \Cref{figure}), which have recently attracted attention in information design studies \citep{BM19}.
These equilibrium concepts generalize Nash equilibria by allowing for coordination among players assisted by mediators, opening up the possibilities both improved and degraded social welfare.

A key question in the study of Bayesian games is how mediators affect the quality of equilibria, particularly in terms of the price of anarchy (PoA) and the price of stability (PoS).
The PoA measures the extent to which mediators can deteriorate the quality of equilibria compared to the optimal social welfare, whereas the PoS quantifies how close mediators can bring the best achievable equilibria to the optimal social welfare.
Understanding these metrics is crucial for evaluating the positive and negative influence of mediators and designing communication strategies to optimize social welfare in real-world policymaking.

To address this question, we focus on a general class of Bayesian games with submodular social welfare, which naturally extends valid utility games and their variant, basic utility games in the complete-information setting.
Since the introduction by \citet*{Vet}, valid and basic utility games have stood out as one of the most extensively studied frameworks of resource allocation games with diverse applications, including competitive facility location \citep*{Vet,BT17}, competitive influence maximization \citep*{BKS07,HK13}, and selfish routing \citep*{Vet}.

In the complete-information setting, \citet*{Vet} proved that the PoA for mixed Nash equilibria is guaranteed to be at least $1/2$ in any valid utility game.
This bound was later extended to (coarse) correlated equilibria, a class of equilibria realized by mediators, by developing a general proof technique known as smoothness arguments by \citet*{Roughgarden15}.
In other words, mediators cannot reduce the quality of equilibria to less than half of the optimal social welfare, demonstrating that valid utility games are robust against malicious mediators.

The goal of this study is to analyze the PoA and PoS in the Bayesian version of valid utility games.
We suppose that each player $i \in [n]$ possesses private information represented by a random variable $\theta_i \in \Theta_i$, referred as their \textit{type}, and the type profile $(\theta_1,\theta_2,\dots,\theta_n)$ is generated from a common prior distribution $\rho \in \Delta(\Theta_1 \times \dots \times \Theta_n)$.
To extend valid utility games to Bayesian setting, we consider a varying action set $A_i^{\theta_i}$ depending on their own type $\theta_i$ for each player $i$.\footnote{%
This definition based on type-dependent action sets is different from the standard formulation with fixed action sets $A_i$ and type-dependent utility functions $v_i \colon (\Theta_1 \times \dots \times \Theta_n) \times (A_1 \times \dots \times A_n) \to \bbR$.
However, as described in \Cref{setting}, any Bayesian game can be reformulated with type-dependent action spaces by treating each type-action pair $(\theta_i, a_i)$ as an action. The key feature of our formulation is the existence of a submodular social welfare function consistent across all type profiles, which facilitates our theoretical analysis.
}
If there is a monotone submodular function $f \colon 2^E \to \bbR_{\ge 0}$, where $E = {\bigcup_{i \in N} \bigcup_{\theta_i \in \Theta_i} A_i^{\theta_i}}$, that serves as a social welfare function satisfying the conditions for valid utility games for all type profiles, we refer to this Bayesian game as a \textit{Bayesian valid utility game} (Bayesian basic utility games can be defined similarly).

In practical examples of Bayesian valid and basic utility games, mediators commonly emerge as follows:
\begin{enumerate}
\item Transportation planners (mediators) coordinate commuters (players) with varying schedules (types) to choose transit options and avoid peak-hour congestion.
\item Volunteer coordinators (mediator) assist participants (players) with different availability or capabilities (type) in selecting suitable tasks.
\item Network planners (mediator) help devices (players) with different locations and demands (type) select appropriate base stations.
\end{enumerate}
Submodularity naturally arises in these applications, as the marginal gain in social welfare diminishes due to ``crowding'' when many users select the same resource.
See \Cref{applications} for more detailed descriptions on these applications.

\begin{table}[t]
\caption{Summary of our main results. Note that the PoA decreases as the equilibrium concept is coarsened, while the PoS increases. Each of ``valid'' and ``basic'' stands for valid and basic utility games, and ``independent'' and ``correlated'' represent the assumption for type prior distributions.}
\label{table}
\centering
\begin{tabular}{c|cccccccc}
& BNE, SF/ANFCE, ANF/SFCCE & Com.Eq. & BS, ANF/SFCBS \\
\hline
PoA (valid, independent) & $1/2$ & $1/2$ & $\in \left[ \frac{1-1/\rme}{2}, 0.441 \right]$ \\
PoA (valid, correlated) & $\Theta\left(\frac{1}{\sqrt{n}}\right)$ & $\Theta\left(\frac{1}{\sqrt{n}}\right)$ & $\Theta\left(\frac{1}{\sqrt{n}}\right)$ \\
PoS (basic, independent) & $1-1/\rme$ & $\le 4/5$ & $1$ \\
PoS (basic, correlated) & $\Theta\left(\frac{1}{\sqrt{n}}\right)$ & $\le 4/5$ & $1$
\end{tabular}
\end{table}

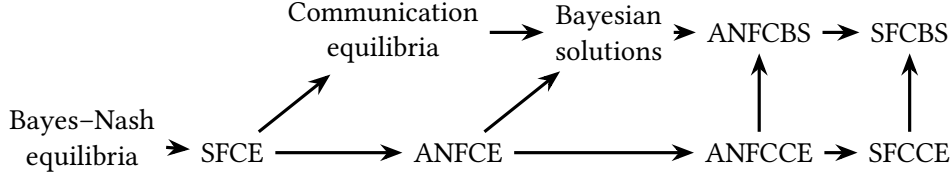
\begin{figure}[t]
\centering
\begin{tikzpicture}[yscale=0.8, line width=1.2pt]
\node [align=center] (BNE) at (0,0.2) {Bayes--Nash\\equilibria};
\node (SFCE) at (2,0) {SFCE};
\node [align=center] (ComEq) at (4,2) {Communication\\equilibria};
\node (ANFCE) at (5,0) {ANFCE};
\node [align=center] (BS) at (7,2) {Bayesian\\solutions};
\node (ANFCCE) at (9,0) {ANFCCE};
\node (SFCCE) at (11,0) {SFCCE};
\node [align=center] (CBS) at (9,2) {ANFCBS};
\node [align=center] (SFCBS) at (11,2) {SFCBS};
\draw [-{Stealth}] (BNE) -- (SFCE);
\draw [-{Stealth}] (SFCE) -- (ANFCE);
\draw [-{Stealth}] (SFCE) -- (ComEq);
\draw [-{Stealth}] (ANFCE) -- (BS);
\draw [-{Stealth}] (ComEq) -- (BS);
\draw [-{Stealth}] (ANFCE) -- (ANFCCE);
\draw [-{Stealth}] (ANFCCE) -- (SFCCE);
\draw [-{Stealth}] (ANFCCE) -- (CBS);
\draw [-{Stealth}] (BS) -- (CBS);
\draw [-{Stealth}] (CBS) -- (SFCBS);
\draw [-{Stealth}] (SFCCE) -- (SFCBS);
\end{tikzpicture}
\caption{Relations of various classes of Bayes correlated equilibria and Bayes coarse correlated equilibria. The tail of each arrow is a subset of its head. Each of ``CE'', ``CCE'', ``CBS'', ``SF'', and ``ANF'' stands for ``correlated equilibria'', ``coarse correlated equilibria'', ``coarse Bayesian solutions'', ``strategic-form'', and ``agent-normal-form'', respectively. See \Cref{bce} for definitions.}
\label{figure}
\end{figure}

\subsection{Our contribution}
In this study, we analyze the PoA and PoS for various equilibrium concepts in Bayesian valid and basic utility games.
The PoA and PoS are defined as the ratio of the expected social welfare value achieved by the worst and best equilibrium to the optimal social welfare value, respectively.
The optimal social welfare value is defined as $\E_{(\theta_1,\dots,\theta_n) \sim \rho} \left[ \max_{(a_1,\dots,a_n) \in A^{\theta_1}_1 \times \dots \times A^{\theta_n}_n} f(\{a_1,\dots,a_n\}) \right]$, selecting an optimal action profile for each type profile.
The optimal action profile can depend on all players' types, while in equilibria, each player knows only their own type.
This difference is unique to Bayesian settings and poses new technical challenges.
To investigate this difference, we introduce a new notion called the \textit{strategy representability gap} (SR gap), which represents the multiplicative difference between the optimal social welfare when each player's action depends only on their own type and when it depends on the entire type profile.

We separately analyze the cases with or without the assumption that players' types are stochastically independent.
In the case of independent prior distributions, as an application of the correlation gap of monotone submodular functions \citep*{ADSY12}, the SR gap is bounded below by $1-1/\rme$ (\Cref{sec:sr-independent}).
The analysis becomes more involved for the $\Theta(1/\sqrt{n})$ lower bound for general correlated prior distributions (\Cref{sec:sr-correlated}).
A critical part of the proof is to analyze at most $\sqrt{n}$ heavy-weight actions and other light-weight actions separately for each player $i \in N$ and type $\theta_i \in \Theta_i$ using the multilinear extension of submodular functions, which is a key tool for submodular maximization \citep*{CCPV11}.
Moreover, these lower bounds on the SR gap are proved to be tight.

Once we establish bounds on the SR gap, we can directly derive upper and lower bounds on the PoA and PoS for various concepts of Bayes (coarse) correlated equilibria by incorporating the smoothness argument.
Additionally, in the case of independent prior distributions, these lower bounds on the PoA can be improved to $1/2$ for some concepts of Bayes correlated equilibria (called communication equilibria and agent-normal-form correlated equilibria), while we provide an upper bound of $0.441$ for another concept (called Bayesian solutions).
This reveals a gap in the PoA among natural extensions of correlated equilibria, highlighting the nuanced differences between these equilibrium concepts.
A similar gap also appears in the PoS for Bayesian basic utility games.
The distinction between the different definitions of Bayes correlated equilibria has been overlooked in most existing studies on the PoA and PoS.
To our knowledge, our result is the first to show that the PoA and PoS can differ among these concepts.

Our main contributions are summarized as follows (see \Cref{table}).
The equilibrium concepts relevant to our results are briefly introduced in \Cref{concepts}, while more detailed descriptions of all concepts shown in \Cref{figure} are provided in \Cref{bce}.
\begin{itemize}
\item In Bayesian valid utility games with independent or correlated prior distributions, the PoA for strategic-form coarse Bayesian solutions is at least $(1-1/\rme)/2$ and $\Omega(1/\sqrt{n})$, respectively (\Cref{subsection-poa-cbs}).
\item In Bayesian valid utility games with independent prior distributions, the PoA for strategic-form coarse correlated equilibria (\Cref{subsection-poa-sfcce}) and communication equilibria (\Cref{subsection-poa-comeq}) is shown to be $1/2$ using extended smoothness arguments.
\item There exists a Bayesian valid utility game with an independent prior distribution in which the PoA for Bayesian solutions is at most $0.441$ (\Cref{subsection-poa-bs}).
\item In Bayesian basic utility games, the PoS for Bayesian solutions is $1$ (\Cref{subsection-pos-bs}).
\item In Bayesian basic utility games with independent or correlated prior distributions, the PoS for Bayes--Nash equilibria is $1-1/\rme$ and $\Theta(1/\sqrt{n})$, respectively (\Cref{subsection-pos-bne}).
\item There exists a Bayesian basic utility game with an independent prior distribution in which the PoS for communication equilibria is $4/5$ (\Cref{subsection-pos-comeq}).
\end{itemize}

\begin{remark}
We can obtain almost the same PoA results for Bayesian basic utility games as those for the valid case, which is a more general setting.
The tight upper bound of $1/2$ established in the non-Bayesian setting by \citet{Vet} also applies to the basic case and extends directly to the Bayesian setting.
Since Bayesian basic utility games are a special case of Bayesian valid utility games, all lower bounds established for the valid case also apply. In particular, in the independent prior case, the PoA lower bound of $1/2$ holds for many equilibrium concepts, including ANFCEs and communication equilibria, and is tight. In the correlated prior case, the lower bound of $\Omega(1/\sqrt{n})$ applies to all solution concepts considered in this paper and is tight up to a constant factor from \Cref{poa-ub-sr-gap}.
The only result specific to the valid case is the $0.441$ upper bound for Bayesian solutions in the independent prior case. Whether this result extends to the basic case remains an open question.
\end{remark}

\subsection{Related work}
\citet*{Vet} first proposed valid and basic utility games with applications to competitive facility location, selfish routing, and multiple-item auctions, and proved that their PoA for mixed Nash equilibria is at least $1/2$.
\citet*{Vet} also proved that an action profile that maximizes the social welfare function is a pure Nash equilibrium in basic utility games, which implies that the PoS for pure Nash equilibria equals $1$.
Since then, the framework of valid (and basic) utility games has been studied as a main tool for analyzing the PoA \citep*{BKS07,HK13}.

\citet*{Roughgarden15} proposed a unified proof framework called smoothness and extended the PoA lower bound from mixed Nash equilibria to coarse correlated equilibria.
The smoothness framework was extended to Bayesian games by \citet*{Roughgarden15,Syr12,ST13} to analyze Bayesian mechanisms, Bayesian congestion games, and other applications.
However, to our knowledge, there has been no investigation into Bayesian valid utility games.

The PoA and PoS in Bayesian games have been explored mainly for Bayes--Nash equilibria, but there are a few notable exceptions that have examined Bayes (coarse) correlated equilibria.
\citet*{HST15} considered the PoA for agent-normal-form coarse correlated equilibria (ANFCCEs) in smooth mechanisms with no-regret dynamics.
\citet{CKK15} analyzed the PoA for a concept of Bayes coarse correlated equilibria, referred to in our terminology as agent-normal-form coarse Bayesian solutions (ANFCBSs), in generalized second-price auctions
Additionally, \citet*{JL23} provided lower bounds on the PoS for Bayesian solutions and ANFCBSs in Bayesian first-price auctions.
\citet*{Fujii23} extended the PoA bound via smoothness to the intersection of communication equilibria and ANFCEs, and also proposed no-regret dynamics converging to it.

Information design is a field that analyzes how mediators (referred to as \textit{sender}) influence the decision-making processes of players (referred to as \textit{receivers}).
Bayes correlated equilibria have gained significant attention as an outcome of information design in multi-receiver settings \citep{BM19}.
Recent studies \citep{AB19,BB17,DX17,FS22} have explored algorithmic information design with submodular maximization.
However, these papers focus on Bayesian persuasion scenarios, where the sender possesses an informational advantage over the receivers and seeks to maximize their own payoff, rather than the social welfare.
In contrast, out study considers a setting where the receivers have private information, and the sender aims to either minimize or maximize social welfare.

\section{Preliminaries}

\subsection{Bayesian games with submodular social welfare}
Let $N = \{1,2,\dots,n\}$ be the set of all players.
The subscript $-i$ denotes all players except player $i \in N$.
Each player $i \in N$ is associated with a type $\theta_i \in \Theta_i$, where $\Theta_i$ is the set of finite possible types.
Let $\Theta = \Theta_1 \times \dots \times \Theta_n$ be the set of all type profiles.
The type profile $\theta = (\theta_1,\dots,\theta_n) \in \Theta$ is randomly generated from a commonly known prior distribution $\rho \in \Delta(\Theta)$, and each player $i \in N$ is notified of their type $\theta_i$.
The prior distribution $\rho$ is called a product distribution or independent if there exist $\rho_i \in \Delta(\Theta_i)$ for each $i \in N$ such that $\rho(\theta) = \prod_{i \in N} \rho_i(\theta_i)$ for all $\theta \in \Theta$.
We write the marginal distribution of $\theta_i$ as $\rho_i$ and the distribution of $\theta_{-i}$ conditioned on $\theta_i \in \Theta_i$ as $\rho|_{\theta_i}$.

In a Bayesian valid or basic utility game, each player $i \in N$ with type $\theta_i \in \Theta_i$ selects one from a set of finite possible actions $A_i^{\theta_i}$.
We can assume that $A_i^{\theta_i}$ is disjoint among all $i \in N$ and $\theta_i \in \Theta_i$ without loss of generality.
Let $A_i = \bigcup_{\theta_i \in \Theta_i} A_i^{\theta_i}$ be the set of all actions of player $i \in N$ and $A = A_1 \times \dots \times A_n$ the set of all action profiles.
The utility function of each player $i \in N$ is defined as $v_i \colon A \to \bbR_{\ge 0}$.
Let $A^\theta = A_1^{\theta_1} \times \dots \times A_n^{\theta_n}$ be the set of all action profiles for the type profile $\theta \in \Theta$, where each player $i \in N$ must choose their action from $A_i^{\theta_i}$.

In Bayesian games, each player's strategy can be represented by a map that associates each type to an action.
Let $S_i = \prod_{\theta_i \in \Theta_i} A_i^{\theta_i}$ be the set of all possible strategies of player $i \in N$.
Each strategy $s_i \in S_i$ determines the action $s_i(\theta_i) \in A_i^{\theta_i}$ of player $i \in N$ for each type $\theta_i \in \Theta_i$.
Let $S = S_1 \times \dots \times S_n$ be the set of all strategy profiles.
For notational simplicity, we use $s(\theta)$ to represent $(s_1(\theta_1),\dots,s_n(\theta_n))$.

Let $E = \bigcup_{i \in N} A_i$ be the set of all actions.
To define the social welfare function $\vSW \colon A \to \bbR_{\ge 0}$, we prepare a set function $f \colon 2^E \to \bbR_{\ge 0}$ with the ground set $E$.
Using this set function, the social welfare function is defined by $\vSW(a) = f(\{a_1,\dots,a_n\})$.
In valid and basic utility games, this set function is assumed to be non-negative, monotone, and submodular.
We define a set function to be monotone if $f(X) \le f(Y)$ for any $X, Y \subseteq E$ with $X \subseteq Y$, and submodular if $f(X \cup \{u\}) - f(X) \ge f(Y \cup \{u\}) - f(Y)$ for any $X, Y \subseteq E$ with $X \subseteq Y$ and $u \in E \setminus Y$.

This varying-action-set formulation can capture not only scenarios in which the type directly represents the set of actions, but also scenarios in which the type represents the weight of each player by replicating actions for each type. 
See \Cref{setting} for more detailed discussions.

\begin{remark}
Note that in the original study \citep*{Vet}, each action $u \in E$ is defined as a subset of the set of resources $\tilde{E}$.
The social welfare function $\tilde{f} \colon 2^{\tilde{E}} \to \bbR_{\ge 0}$ is a set function with the ground set $\tilde{E}$, and the social welfare value is $\tilde{f}(\tilde{a}_1 \cup \dots \cup \tilde{a}_n)$ if each player $i \in N$ chooses $\tilde{a}_i \subseteq \tilde{E}$.
This original notation can be reduced to our definition by setting $f \colon 2^E \to \bbR_{\ge 0}$ as $f(X) = \tilde{f}(\bigcup_{u \in X} u)$ for each $X \subseteq E$, treating each action $u \in E$ as a single element.
These different definitions are equivalent in terms of the PoA and PoS.
\end{remark}

A Bayesian game with the social welfare function $\vSW$ represented by a non-negative monotone submodular function $f$ is called a \textit{Bayesian valid utility game} if
\begin{itemize}
\item[(i)] $\vSW(a) \ge \sum_{i \in N} v_i(a)$ for any $a \in A$, which we call \textit{total utility condition}, and
\item[(ii)] $v_i(a) \ge \vSW(a) - \vSW(\emptyset_i, a_{-i})$ hold for any $a \in A$ and $i \in N$, which we call \textit{marginal contribution condition}, where $\emptyset_i$ represents a (hypothetical) action of player $i \in N$ selecting nothing.
\end{itemize}
If the marginal contribution condition holds with equality for every $a \in A$ and $i \in N$, the game is called a \textit{Bayesian basic utility game}.
A non-Bayesian special case ($\Theta$ is a singleton) is called a valid or basic utility game, respectively.

\subsection{Equilibrium concepts in Bayesian games}
\label{concepts}
We concisely introduce definitions of equilibrium concepts in Bayesian games relevant to our results.
For completeness, we provide a detailed overview of equilibrium concepts in \Cref{bce}.

First, we introduce communication equilibria, Bayesian solutions, and strategic-form coarse Bayesian solutions (SFCBSs), which can be naturally realized by mediators who communicate with players bidirectionally.
These concepts are expressed as a type-dependent action-profile distribution $\pi \in \prod_{\theta \in \Theta} \Delta(A^\theta)$.
Once type profile $\theta \sim \rho$ is generated, each player $i \in N$ privately reports their type $\theta_i$ to the mediator.
Based on the reported type profile $\theta \in \Theta$, the mediator privately recommends $a_i \in A_i$ to each player $i \in N$ according to $a \sim \pi(\theta)$.
The type-dependent distribution $\pi$ is called a communication equilibrium if no player can improve their payoff by reporting an untrue type $\theta' \in \Theta_i$ or choosing a non-recommended action $\phi(a_i) \in A_i^{\theta_i}$.
Bayesian solutions specifically consider the latter deviations, which are natural when the mediator can verify the reported types or knows the type profile in advance.
SFCBSs further assume that each player can deviate unilaterally to a single strategy independent of the recommended action.

\begin{definition}\label{def:com}
A type-dependent distribution $\pi \in \prod_{\theta \in \Theta} \Delta(A^\theta)$ is a communication equilibrium
if
for any $i \in N$, $\theta_i,\theta'_i \in \Theta_i$, and $\phi \colon A_i^{\theta'_i} \to A_i^{\theta_i}$, it holds that
\begin{equation*}
\E_{\theta_{-i} \sim \rho|_{\theta_i}} \left[ \E_{a \sim \pi(\theta)} \left[ v_i(a) \right] \right]
\ge
\E_{\theta_{-i} \sim \rho|_{\theta_i}} \left[ \E_{a \sim \pi(\theta'_i,\theta_{-i})} \left[ v_i(\phi(a_i), a_{-i}) \right] \right].
\end{equation*}
Bayesian solutions are defined by the constraints only for $\theta'_i = \theta_i$.
A type-dependent distribution $\pi \in \prod_{\theta \in \Theta} \Delta(A^\theta)$ is a \textit{strategic-form coarse Bayesian solution} (SFCBS)
if
for any $i \in N$ and $s_i \in S_i$, it holds that
\begin{equation*}
\E_{\theta \sim \rho} \left[ \E_{a \sim \pi(\theta)} \left[ v_i(a) \right] \right]
\ge
\E_{\theta \sim \rho} \left[ \E_{a \sim \pi(\theta)} \left[ v_i(s_i(\theta_i), a_{-i}) \right] \right].
\end{equation*}
\end{definition}

Next, we introduce strategic-form coarse correlated equilibria (SFCCEs) and Bayes--Nash equilibria, which are defined as a strategy-profile distribution $\sigma \in \Delta(S)$.
In an SFCCE, a mediator recommends a strategy $s_i$ to each player $i \in N$ according to $s \sim \sigma$ without knowing the type profile $\theta \sim \rho$.
If no player can improve their payoff by deviating unilaterally to a single strategy $s'_i \in S_i$ independent of the recommended action, $\sigma$ is called an SFCCE.
Bayes--Nash equilibria are a natural generalization of Nash equilibrium to Bayesian games.

\begin{definition}
A distribution $\sigma \in \Delta(S)$ is a strategic-form coarse correlated equilibrium
if
for any $i \in N$ and $s'_i \in S_i$, it holds that
\begin{equation*}
\E_{\theta \sim \rho} \left[ \E_{s \sim \sigma} \left[ v_i(s(\theta)) \right] \right]
\ge                                        
\E_{\theta \sim \rho} \left[ \E_{s \sim \sigma} \left[ v_i(s'_i(\theta_i), s_{-i}(\theta_{-i})) \right] \right].
\end{equation*}
Furthermore, if the recommended strategies $s_1,s_2,\dots,s_n$ are stochastically independent, then $\sigma$ is a Bayes--Nash equilibrium.
\end{definition}

While a type-dependent action-profile distribution $\pi \in \prod_{\theta \in \Theta} \Delta(A^\theta)$ can decide the action profile $a \sim \pi(\theta)$ depending on the type profile $\theta \sim \rho$, a strategy distribution $\sigma \in \Delta(S)$ generate $s \sim \sigma$ independent of $\theta \sim \rho$ and then decides $s(\theta) \in A^\theta$.
The former is strictly broader than the latter, and a type-dependent action-profile distribution $\pi \in \prod_{\theta \in \Theta} \Delta(A^\theta)$ is called \textit{strategy-representable} if it can be expressed as a strategy distribution.

For each of these equilibrium concepts, we can define the price of anarchy and stability as the ratio of the social welfare value achieved by the worst and best equilibrium to that achieved by an optimal action profile for each type profile.

\begin{definition}[{\citep{KP99,ADKTWR08}}]
For a set of equilibria $\Sigma \subseteq \Delta(S)$ defined as distributions over strategy profiles, the price of anarchy and stability are defined as
\begin{align*}
\PoA_\Sigma = \frac{\inf_{\sigma \in \Sigma} \E_{\theta \sim \rho} \left[ \E_{s \sim \sigma} \left[ \vSW(s(\theta)) \right] \right]}{\E_{\theta \sim \rho} \left[ \max_{a \in A^\theta} \vSW(a) \right]}
&&
\text{and}
&&
\PoS_\Sigma = \frac{\sup_{\sigma \in \Sigma} \E_{\theta \sim \rho} \left[ \E_{s \sim \sigma} \left[ \vSW(s(\theta)) \right] \right]}{\E_{\theta \sim \rho} \left[ \max_{a \in A^\theta} \vSW(a) \right]}.
\end{align*}
For equilibrium concepts $\Pi \subseteq \prod_{\theta \in \Theta} \Delta(A^\theta)$ defined as a distribution over actions for each type profile, $\PoA_\Pi$ and $\PoA_\Pi$ are defined by replacing the numerator with $\inf / \sup_{\pi \in \Pi} \E_{\theta \sim \rho} \left[ \E_{a \sim \pi(\theta)} \left[ \vSW(a) \right] \right]$.
\end{definition}

\subsection{Submodular functions}
We introduce several useful notations for submodular functions.
As in the submodular optimization literature, given a set $X$ and an element $u$, we use $X + u$ and $X - u$ as shorthands for $X \cup \{u\}$ and $X \setminus \{u\}$, respectively.
For a set function $f \colon 2^E \to \bbR$, we write $f(u | X) = f(X + u) - f(X)$ and $f(Y | X) = f(X \cup Y) - f(X)$ for any $X, Y \subseteq E$ and $u \in E$.

Submodular functions have various useful properties.
In the proof, we use the following bound of the correlation gap for monotone submodular functions \citep*{Von07}.

\begin{proposition}[{\citep*{Von07}}]
\label{correlation-gap}
Let $f \colon 2^E \to \bbR_{\ge 0}$ be any non-negative monotone submodular function.
Let $\pi \in \Delta(2^E)$ be a distribution over subsets of the ground set $E$ and $\hat{\pi}$ its independent counterpart defined by $\hat{\pi}(X) = \prod_{u \in X} p_u \prod_{u \in E \setminus X} (1-p_u)$ for all $X \subseteq E$, where $p_u = \Pr_{X \sim \pi}(u \in X)$.
Then it holds that $\E_{X \sim \hat{\pi}} \left[ f(X) \right] \ge (1-1/\rme) \E_{X \sim \pi} \left[ f(X) \right]$.
\end{proposition}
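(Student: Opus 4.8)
The plan is to pass to the multilinear extension of $f$ and run a one-dimensional Gr\"onwall-type argument. Define $F \colon [0,1]^E \to \bbR_{\ge 0}$ by $F(x) = \E[f(R(x))]$, where $R(x) \subseteq E$ contains each $u$ independently with probability $x_u$. With $p = (p_u)_{u \in E}$ as in the statement, the independent counterpart $\hat{\pi}$ is precisely the law of $R(p)$, so $\E_{X \sim \hat{\pi}}[f(X)] = F(p)$; writing $V := \E_{X \sim \pi}[f(X)]$, it then suffices to show $F(p) \ge (1 - 1/\rme)\,V$.

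The heart of the argument is a pointwise gradient inequality that I would establish first: for every $x \in [0,1]^E$,
\begin{equation*}
\sum_{u \in E} p_u\, \frac{\partial F}{\partial x_u}(x) \ \ge\ V - F(x).
\end{equation*}
To prove it, I would use two standard properties of the multilinear extension of a non-negative monotone submodular function: $F$ is monotone (so all partials $\partial F/\partial x_u \ge 0$), and, since $\partial^2 F / \partial x_u \partial x_v \le 0$ for $u \ne v$ by submodularity, $F$ is concave along every coordinatewise non-negative direction; also $F(\bfone_S) = f(S)$. Fixing $S \subseteq E$ and applying concavity of $F$ along the non-negative direction from $x$ to $x \vee \bfone_S$ together with monotonicity gives $f(S) = F(\bfone_S) \le F(x \vee \bfone_S) \le F(x) + \sum_{u \in S}(1-x_u)\frac{\partial F}{\partial x_u}(x) \le F(x) + \sum_{u \in S}\frac{\partial F}{\partial x_u}(x)$, hence $\sum_{u \in S}\frac{\partial F}{\partial x_u}(x) \ge f(S) - F(x)$. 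Taking the expectation over $S \sim \pi$ and using $\Pr_{S \sim \pi}(u \in S) = p_u$ turns the left-hand side into $\sum_{u} p_u \frac{\partial F}{\partial x_u}(x)$, which yields the claim.

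With the gradient inequality in hand, I would finish by integrating along the segment from $\bfzero$ to $p$. Let $h(t) = F(t p)$ for $t \in [0,1]$, so $h(0) = f(\emptyset) \ge 0$, $h(1) = F(p)$, and by the chain rule and the inequality above $h'(t) = \sum_{u} p_u \frac{\partial F}{\partial x_u}(tp) \ge V - h(t)$. Then $\frac{\mathrm{d}}{\mathrm{d}t}\bigl(\rme^{t} h(t)\bigr) = \rme^{t}\bigl(h'(t)+h(t)\bigr) \ge \rme^{t} V$, and integrating over $[0,1]$ gives $\rme\, F(p) - f(\emptyset) \ge (\rme-1)V$, i.e.\ $F(p) \ge (1-1/\rme)V + f(\emptyset)/\rme \ge (1-1/\rme)V$, as required.

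I expect the only non-routine step to be the pointwise gradient inequality---specifically the reduction to a single set $S$ via concavity of $F$ along non-negative directions, which is where submodularity enters; the averaging over $S \sim \pi$ (which is what converts a ``best fixed set'' bound into the correlation-gap statement governed by the marginals $p_u$) and the final one-variable integration are mechanical. A minor point to check along the way is that differentiating $F$---a multilinear polynomial in finitely many variables---under the expectation and on the boundary of the cube is legitimate, which is immediate.
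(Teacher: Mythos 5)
The paper does not prove this proposition at all: it is imported directly from \citet{Von07}, so there is no internal proof to compare against. Your argument is correct as written: the pointwise inequality $\sum_{u} p_u \frac{\partial F}{\partial x_u}(x) \ge V - F(x)$ follows exactly as you describe from nonnegativity of the partials (monotonicity of $f$) together with concavity of $F$ along nonnegative directions (diagonal Hessian entries vanish by multilinearity, off-diagonal ones are nonpositive by submodularity), the averaging over $S \sim \pi$ correctly converts the fixed-set bound into one governed by the marginals $p_u$, and the Gr\"onwall step with $h(t)=F(tp)$, using $\frac{\mathrm{d}}{\mathrm{d}t}\bigl(\rme^{t}h(t)\bigr) \ge \rme^{t}V$ and discarding $h(0)=f(\emptyset)\ge 0$, is airtight. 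This is essentially the standard multilinear-extension/differential-inequality proof of the correlation gap from the continuous-greedy literature, i.e.\ it reproves the cited result along the expected lines rather than by a genuinely different route.
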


Next, we present another lemma concerning a distribution that randomly selects one element from each group, where the ground set is partitioned into disjoint groups.
The proof directly follows from the following result proved by \citet*{QS22}.

\begin{proposition}[{\citep*{QS22}}]
\label{weak-negative-regression}
Suppose that a distribution $\pi \in \Delta(2^E)$ satisfies weak negative regression, that is, for any $u \in E$ and any monotone function $f \colon 2^E \to \bbR$, it holds that $
\E_{X \sim \pi} \left[ f(X - u) \mid u \in X \right]
\le
\E_{X \sim \pi} \left[ f(X - u) \mid u \not\in X \right]
$.
Then for any non-negative monotone submodular function $f \colon 2^E \to \bbR$,
it holds that $\E_{X \sim \pi} \left[ f(X) \right] \ge \E_{X \sim \hat{\pi}} \left[ f(X) \right]$,
where $\hat{\pi}$ is the distribution that independently contains each element $u \in E$ with probability $\Pr_{X \sim \pi}(u \in X)$.
\end{proposition}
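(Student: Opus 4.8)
This proposition is stated as a result of \citet*{QS22}, so within the paper it can simply be cited; for the record, here is how I would prove it directly. The plan is a hybrid argument: enumerate the ground set and, one coordinate at a time, replace the dependent inclusion indicator by an independent $\mathrm{Bernoulli}$ variable of the same marginal, showing that each such swap can only \emph{decrease} the expected value of a non-negative monotone submodular function.

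Fix an enumeration $E = \{u_1,\dots,u_m\}$, write $p_u = \Pr_{X \sim \pi}(u \in X)$, and define distributions $\pi = \pi_0, \pi_1, \dots, \pi_m = \hat\pi$, where a draw from $\pi_k$ is obtained by sampling $X' \sim \pi$ and independent bits $b_1,\dots,b_k$ with $b_j \sim \mathrm{Bernoulli}(p_{u_j})$, then setting $X = \{u_j : j \le k,\ b_j = 1\} \cup (X' \cap \{u_{k+1},\dots,u_m\})$. Every $\pi_k$ has the same one-element marginals $p_u$, and $\pi_m = \hat\pi$; so it suffices to prove $\E_{\pi_{k-1}}[f(X)] \ge \E_{\pi_k}[f(X)]$ for each $k$ and telescope.

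Fix $k$ and set $v := u_k$. Conditioning on the part common to $\pi_{k-1}$ and $\pi_k$ --- the fresh set $R = \{u_j : j < k,\ b_j = 1\}$ and the tail $W_0 = X' \cap \{u_{k+1},\dots,u_m\}$ --- the two distributions differ only in the indicator of $v$: under $\pi_{k-1}$ it is $\mathbf 1[v \in X']$ given $W_0$, while under $\pi_k$ it is an independent $\mathrm{Bernoulli}(p_v)$. Let $h(S) = f(R \cup S)$, which is still non-negative, monotone, and submodular. Using $h(W) = h(W \setminus v) + \mathbf 1[v \in W]\,h(v \mid W \setminus v)$ together with the independence of $v$ from $R$ under $\pi$, a direct computation gives
\begin{equation*}
\E_{\pi_{k-1}}[f(X)] - \E_{\pi_k}[f(X)] = p_v(1-p_v)\,\E_R\!\left[\ \E\!\left[h(v \mid W_0)\ \middle|\ v \in X'\right] - \E\!\left[h(v \mid W_0)\ \middle|\ v \notin X'\right]\right].
\end{equation*}
Now $S \mapsto h(\{v\}) - h(v \mid S)$ is non-negative (since $h(v\mid S) \le h(v\mid\emptyset) \le h(\{v\})$) and monotone (its monotonicity is exactly submodularity of $h$), and the marginal of $\pi$ on $\{u_{k+1},\dots,u_m\}$ inherits weak negative regression with respect to $v$ --- just apply the hypothesis for $\pi$ to the monotone lift $X \mapsto \phi(X \cap \{u_{k+1},\dots,u_m\})$ of any monotone $\phi$ on the tail. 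Invoking weak negative regression for $\phi = h(\{v\}) - h(v\mid\cdot)$ yields $\E[h(v\mid W_0)\mid v\in X'] \ge \E[h(v\mid W_0)\mid v\notin X']$, so the displayed difference is non-negative, which closes the induction.

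The step needing the most care is the bookkeeping around the hybrid: verifying that weak negative regression of $\pi$ is inherited by every conditional/marginal distribution used in the swap, so that the hypothesis is genuinely available at all $m$ steps, and checking the signs in the one-coordinate identity so that the two diminishing-returns effects --- the decreasing marginal gain $h(v\mid\cdot)$ from submodularity and the decreasing conditional inclusion probability of $v$ from negative regression --- reinforce rather than cancel. The remaining calculations are elementary.
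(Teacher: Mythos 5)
Your proposal is correct, and it supplies something the paper itself does not contain: the paper states this proposition as a black-box citation to \citet*{QS22} and never proves it, so there is no internal proof to compare against. Your hybrid (one-coordinate-at-a-time replacement) argument is the natural way to prove it and is essentially the inductive strategy behind the cited result. The key computation checks out: with $v=u_k$, $R$ the already-resampled prefix (independent of $X'$), $W_0=X'\cap\{u_{k+1},\dots,u_m\}$, and $h(S)=f(R\cup S)$, the decomposition $h(W_0\cup\{v\}\epsilon)=h(W_0)+\epsilon\,h(v\mid W_0)$ gives exactly your identity
$\E_{\pi_{k-1}}[f(X)]-\E_{\pi_k}[f(X)]=p_v(1-p_v)\,\E_R\bigl[\E[h(v\mid W_0)\mid v\in X']-\E[h(v\mid W_0)\mid v\notin X']\bigr]$,
and your sign analysis is right: $S\mapsto -h(v\mid S)$ is monotone by submodularity, its lift $X\mapsto -h(v\mid X\cap\{u_{k+1},\dots,u_m\})$ is monotone on $2^E$ and satisfies $f(X-v)=f(X)$ for the lifted test function since $v$ is not in the tail, so the weak negative regression hypothesis at $u=v$ applies verbatim and yields $\E[h(v\mid W_0)\mid v\in X']\ge\E[h(v\mid W_0)\mid v\notin X']$; conditioning on $R$ is legitimate because $R$ is independent of $X'$. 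Two cosmetic remarks: when $p_v\in\{0,1\}$ the conditional expectations are not both defined, but the factor $p_v(1-p_v)$ makes that step trivially an equality, and non-negativity of $f$ is never actually used in your argument (monotonicity and submodularity suffice), so the constant $h(\{v\})$ in your test function is dispensable. Neither point affects correctness.
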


\begin{lemma}
\label{nc-gap}
Let $f \colon 2^E \to \bbR_{\ge 0}$ be any non-negative monotone submodular function and the ground set $E$ is decomposed into disjoint subsets, i.e., $E = E_1 \cup \dots \cup E_k$.
Each element $u \in E$ is associated with probability $p_u \in [0,1]$ such that $\sum_{u \in E_\ell} p_u = 1$ for each $\ell \in [k]$.
Let $\pi \in \Delta(2^E)$ be the product distribution defined by $\pi(X) = \prod_{\ell=1}^k p_{u_\ell}$ if $X$ contains exactly one element $u_\ell$ from each subset $E_\ell$ and $\pi(X) = 0$ otherwise.
Let $\hat{\pi} \in \Delta(2^E)$ be the distribution that independently contains each element $u \in E$ with probability $p_u$.
Then $\E_{X \sim \pi} \left[ f(X) \right] \ge \E_{X \sim \hat{\pi}} \left[ f(X) \right]$.
\end{lemma}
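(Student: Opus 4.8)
The plan is to deduce the claim from Proposition~\ref{weak-negative-regression}: it suffices to check that the product distribution $\pi$ satisfies weak negative regression and that its single-element marginals are exactly the $p_u$, so that the independent counterpart of $\pi$ in the sense of that proposition coincides with the $\hat\pi$ of the statement. The marginals are immediate: since $\pi$ picks the element of each group independently, and within $E_\ell$ selects $u$ with probability $p_u$ (this is why $\sum_{v \in E_\ell} p_v = 1$ makes $\pi$ a probability distribution), summing over the free choices in the other groups gives $\Pr_{X \sim \pi}(u \in X) = p_u$ for every $u$.

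Before verifying weak negative regression I would dispose of degenerate marginals by reducing to the case $p_u \in (0,1)$ for all $u$. Elements with $p_u = 0$ never appear in a realized set under either $\pi$ or $\hat\pi$, so they may be deleted from $E$ without affecting either distribution. If $p_u = 1$, then (after this deletion) the group $E_\ell$ containing $u$ is the singleton $\{u\}$, so $u$ belongs to every set drawn from $\pi$ or $\hat\pi$; replacing $f$ by $Y \mapsto f(Y + u)$ on ground set $E - u$ — still non-negative, monotone, and submodular — removes $u$ while preserving both expectations. Iterating, all marginals lie strictly in $(0,1)$, so the conditional expectations in the definition of weak negative regression are well defined.

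The core step is then the verification of weak negative regression. Fix $u \in E_\ell$ and an arbitrary monotone $h \colon 2^E \to \bbR$ (only monotonicity is needed here). Conditioned on $u \in X$, a draw from $\pi$ has the form $X = \{u\} \cup Z$, where $Z$ is the independent choice from the groups $E_{\ell'}$ with $\ell' \neq \ell$; hence $X - u = Z$ and $\E_{X \sim \pi}[h(X - u) \mid u \in X] = \E_Z[h(Z)]$. Conditioned on $u \notin X$, a draw has the form $X = \{v\} \cup Z$ with $v \in E_\ell \setminus \{u\}$ chosen with probability $p_v / (1 - p_u)$, independently of the same $Z$; hence $X - u = X$ and $\E_{X \sim \pi}[h(X - u) \mid u \notin X] = \E_v \E_Z[h(\{v\} \cup Z)]$. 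Since $h$ is monotone, $h(Z) \le h(\{v\} \cup Z)$ realization by realization, and taking expectations yields $\E_{X \sim \pi}[h(X - u) \mid u \in X] \le \E_{X \sim \pi}[h(X - u) \mid u \notin X]$, which is exactly weak negative regression. Applying Proposition~\ref{weak-negative-regression} to the non-negative monotone submodular $f$ gives $\E_{X \sim \pi}[f(X)] \ge \E_{X \sim \hat\pi}[f(X)]$.

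I expect the only points requiring care to be the bookkeeping of the two conditional distributions — specifically isolating the common factor $Z$ so that monotonicity can be applied pointwise — and the clean handling of the degenerate marginals $p_u \in \{0,1\}$; once Proposition~\ref{weak-negative-regression} is available, nothing deeper is involved.
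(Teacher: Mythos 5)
Your proposal is correct and follows essentially the same route as the paper: both verify that $\pi$ satisfies weak negative regression (the paper by noting that conditioning on $u \in X$ kills the rest of $u$'s group while leaving other groups untouched, you by the explicit coupling through the common choice $Z$ from the other groups) and then invoke Proposition~\ref{weak-negative-regression}. Your treatment of the degenerate marginals $p_u \in \{0,1\}$ is a careful addition the paper leaves implicit, but it does not change the substance of the argument.
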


\begin{proof}
Conditioning $u \in X$, the probability that $X$ includes each other element in the same partition decreases to $0$, while conditioning $u \not\in X$, the probability increases.
This conditioning does not influence the distribution of elements in the other partitions.
Therefore, the distribution $\pi$ satisfies weak negative regression.
From \Cref{weak-negative-regression}, the statement follows.
\end{proof}

For a set function $f \colon 2^E \to \bbR$, its multilinear extension $F \colon [0,1]^E \to \bbR$ is defined as $F(x) = \E_{X \sim x} \left[ f(X) \right]$ for each $x \in [0,1]^E$, where $X \sim x$ represents that $X$ independently contains each element $u \in E$ with probability $x_u$.
The multilinear extension of submodular functions has various useful properties.
Here we introduce some of these properties that will be used in our proofs.

\begin{lemma}
\label{mle}
(i) The derivative of the multilinear extension is $\nabla_u F(x) = \E_{X \sim x} \left[ f(X+u) - f(X-u) \right]$.
(ii) If $f$ is submodular, for any $x \in [0,1]^E$, a function $g \colon [0,1] \to \bbR$ defined by $g(t) = F(tx)$ is concave.
(iii) If $f$ is non-negative and submodular, then $F(x) \le k F(x / k)$ for any $x \in [0,1]^E$ and $k \ge 1$.
\end{lemma}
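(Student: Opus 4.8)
The plan is to read off all three parts from the explicit multilinear form of $F$, namely $F(x) = \sum_{S \subseteq E} f(S) \prod_{u \in S} x_u \prod_{u \in E \setminus S}(1 - x_u)$, which exhibits $F$ as a polynomial that is affine in each coordinate $x_u$ when the remaining coordinates are fixed.

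\textbf{Part (i).} Fix $u \in E$ and split the sum over $S$ according to whether $u \in S$. This gives $F(x) = x_u\, \E_{X \sim x}\!\left[f(X + u)\right] + (1 - x_u)\, \E_{X \sim x}\!\left[f(X - u)\right]$, where the two expectations are over $X \sim x$ but the summands $f(X+u)$ and $f(X-u)$ do not depend on whether $u \in X$, so neither expectation depends on $x_u$. Differentiating with respect to $x_u$ then yields $\nabla_u F(x) = \E_{X \sim x}\!\left[f(X+u) - f(X - u)\right]$ directly.

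\textbf{Parts (ii) and (iii).} For (ii), set $g(t) = F(tx)$ and differentiate twice using part (i): $g'(t) = \sum_{u} x_u \nabla_u F(tx)$ and $g''(t) = \sum_{u \ne v} x_u x_v\, \partial_v \nabla_u F(tx)$, where the diagonal ($u = v$) terms vanish because $F$ is affine in each coordinate. Applying the conditioning of part (i) once more, $\partial_v \nabla_u F(y) = \E_{X \sim y}\!\left[f(X + u + v) - f(X + u) - f(X + v) + f(X)\right]$, which is nonpositive by submodularity; since $x \ge 0$, every surviving term of $g''(t)$ is $\le 0$, so $g$ is concave on $[0,1]$. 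For (iii), apply (ii) to the given $x$ (note $x/k \in [0,1]^E$ since $k \ge 1$) and use nonnegativity only through $g(0) = F(\mathbf 0) = f(\emptyset) \ge 0$: writing $1/k = (1-1/k)\cdot 0 + (1/k)\cdot 1$, concavity gives $g(1/k) \ge (1 - 1/k)\, g(0) + (1/k)\, g(1) \ge (1/k)\, g(1)$, i.e.\ $F(x/k) \ge F(x)/k$, which rearranges to the claim.

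\textbf{Main obstacle.} None of these steps is genuinely hard; the content is essentially careful bookkeeping of derivatives of the multilinear polynomial. The two points deserving attention are (a) checking in part (i) that the conditioning truly removes all dependence on $x_u$, so that the derivative formula comes out clean (and then reusing the same device for the mixed second derivative in (ii)), and (b) noticing that nonnegativity of $f$ enters part (iii) solely via $g(0) \ge 0$ — without it the inequality $F(x) \le k F(x/k)$ fails.
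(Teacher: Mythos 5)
Your proof is correct and follows essentially the same route as the paper: differentiate the explicit multilinear polynomial for (i), observe that the mixed second partials are non-positive by submodularity (diagonal terms vanishing by multilinearity) for (ii), and combine concavity with $g(0)=f(\emptyset)\ge 0$ for (iii). Your write-up is if anything slightly more careful than the paper's sketch, so no gaps to report.
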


\begin{proof}
\begin{itemize}
\item[(i)] The derivative can be obtained by differentiating the explicit formula of the multilinear extension
$F(x) = \sum_{X \subseteq E} f(X) \prod_{u \in X} x_u \prod_{u \in E \setminus X} (1-x_u)$.

\item[(ii)] We can check concavity by considering the second derivative $g''(t) = x^\top \nabla^2 F(tx) x $.
From submodularity, every entry of the Hessian of the multilinear extension is non-positive.
Since $x$ is a non-negative vector, $g''(t)$ is always non-positive.
Hence, $g$ is a concave function.

\item[(iii)] Define $g(t) = F(tx)$.
From (ii), $g$ is concave. 
From concavity and non-negativity, $g(x/k) \ge \left( 1- \frac{1}{k} \right) g(0) + \frac{1}{k} g(x) \ge \frac{1}{k} g(x)$, which proves the inequality.
\end{itemize}
\end{proof}

\section{Strategy representability gap}
\label{section-srgap}

In Bayesian games, even if all players' strategies are controlled by the centralized decision-maker, it is impossible to achieve the optimal social welfare value.
We call the multiplicative gap between these two social welfare values the \textit{strategy representability gap} (SR gap for short).

\begin{definition}
We define the strategy representability gap of a Bayesian game as
\begin{equation*}
\frac{
\max_{s \in S} \E_{\theta \sim \rho} \left[ \vSW(s(\theta)) \right]
}{
\E_{\theta \sim \rho} \left[ \max_{a \in A^\theta} \vSW(a) \right]
},
\end{equation*}
where $A^\theta = A_1^{\theta_1} \times \dots \times A_n^{\theta_n}$ is the set of all action profiles under type profile $\theta \in \Theta$.
For a class of Bayesian games, its strategy representability gap is defined as the infimum of the strategy representability gaps across all Bayesian games in this class.
\end{definition}

Note that the SR gap is independent of the distinction between Bayesian valid and basic utility games, as it depends solely on the social welfare function and not on the individual utility functions.

\subsection{Independent case}\label{sec:sr-independent}

First, we provide tight upper and lower bounds on the SR gap for the case where the type prior distribution $\rho$ is a product distribution.

\begin{proposition}
\label{sr-independent}
For any Bayesian valid or basic utility game, the strategy representability gap is at least $1-1/\rme$ if the prior distribution is a product distribution.
\end{proposition}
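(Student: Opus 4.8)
The plan is to build a (possibly randomized) strategy profile whose expected social welfare is at least $(1-1/\rme)\cdot\E_{\theta\sim\rho}\bigl[\max_{a\in A^\theta}\vSW(a)\bigr]$; since every randomized strategy profile is a mixture of deterministic ones in $S$, some deterministic $s\in S$ then attains at least this value, which gives the claimed lower bound on the SR gap. Fix, for each $\theta\in\Theta$, an optimal action profile $a^\star(\theta)\in\argmax_{a\in A^\theta}\vSW(a)$ and set $O(\theta)=\{a^\star_1(\theta),\dots,a^\star_n(\theta)\}\subseteq E$, so that the denominator of the SR gap equals $\E_{\theta\sim\rho}[f(O(\theta))]$. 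Write $\rho_{-i}=\prod_{j\neq i}\rho_j$, which, by the product assumption, is the common value of $\rho|_{\theta_i}$ over all $\theta_i\in\Theta_i$.

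Consider the following strategy: player $i$, upon learning $\theta_i$, privately samples $\tilde\theta_{-i}\sim\rho_{-i}$ (independently across players) and plays the action $a^\star_i(\theta_i,\tilde\theta_{-i})\in A_i^{\theta_i}$; denote the resulting random action of player $i$ by $S_i$. This is a legitimate mixed strategy, since it depends only on $\theta_i$ and private randomness. Two consequences follow from $\rho$ being a product distribution. First, $S_1,\dots,S_n$ are mutually independent, because $\theta_1,\dots,\theta_n$ are independent and the samples $\tilde\theta_{-i}$ are drawn independently of everything else. Second, the pair $(\theta_i,\tilde\theta_{-i})$ has distribution $\rho_i\times\rho_{-i}=\rho$, so $S_i$ has the same law as $a^\star_i(\theta)$ for $\theta\sim\rho$; hence, using that the action sets $A_i$ are pairwise disjoint, $\Pr[S_i=u]=\Pr_{\theta\sim\rho}[a^\star_i(\theta)=u]=\Pr_{\theta\sim\rho}[u\in O(\theta)]=:p_u$ for every $u\in A_i$, with $\sum_{u\in A_i}p_u=1$.

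Therefore the realized action set $T=\{S_1,\dots,S_n\}$ is distributed exactly as the product distribution $\pi$ of \Cref{nc-gap} applied to the partition $E=\bigcup_{i\in N}A_i$ with probabilities $p_u$, whereas $O(\theta)$ (for $\theta\sim\rho$) is a generally correlated distribution on the same ground set with identical singleton marginals $p_u$; both share the same independent counterpart $\hat\pi$, which includes each $u\in A_i$ independently with probability $p_u$. Now \Cref{nc-gap} gives $\E[f(T)]\ge\E_{X\sim\hat\pi}[f(X)]$, and \Cref{correlation-gap} applied to the law of $O(\theta)$ gives $\E_{X\sim\hat\pi}[f(X)]\ge(1-1/\rme)\,\E_{\theta\sim\rho}[f(O(\theta))]$. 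Since $\vSW(S(\theta))=f(T)$, chaining these yields $\E[\vSW(S(\theta))]=\E[f(T)]\ge(1-1/\rme)\,\E_{\theta\sim\rho}[\max_{a\in A^\theta}\vSW(a)]$, completing the argument.

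The proof uses only non-negativity, monotonicity, and submodularity of $f$, consistent with the remark that the SR gap does not depend on the valid/basic distinction. I expect the only delicate step to be the coupling: checking that the product structure of $\rho$ makes $S_i$ share the law of $a^\star_i(\theta)$ and makes $S_1,\dots,S_n$ independent — exactly the feature that fails for correlated priors in \Cref{sec:sr-correlated} — and that the proposed rule is a genuine strategy (one action per own-type). Granting that, the bound is an immediate consequence of \Cref{correlation-gap} and \Cref{nc-gap}.
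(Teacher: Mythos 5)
Your proposal is correct and follows essentially the same route as the paper: the sampled strategy (player $i$ drawing $\tilde\theta_{-i}\sim\rho_{-i}$ and playing $a^\star_i(\theta_i,\tilde\theta_{-i})$) induces exactly the distribution the paper calls $\pi^*$, and you chain the same two lemmas (\Cref{nc-gap} and \Cref{correlation-gap}) through the common independent counterpart $\hat\pi$. The coupling details you flag (independence of the $S_i$ under the product prior, disjointness of the action sets, and passing from a randomized to a deterministic strategy profile) are all handled the same way in the paper, so no gap remains.
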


\begin{proof}
Let $a^\theta \in A^\theta$ be an optimal action profile for each $\theta \in \Theta$, i.e., $a^\theta \in \argmax_{a \in A^\theta} \vSW(a)$.
We consider a distribution $\pi \in \Delta(2^E)$ defined by $\pi(X) = \Pr_{\theta \sim \rho}(X = \{a_i^\theta,\dots,a_n^\theta\})$ for each $X \subseteq E$.
We define its independent counterpart $\hat{\pi} \in \Delta(2^E)$ that independently contains each element $u \in E$ with probability $\Pr_{X \sim \pi}(u \in X)$.
Then from \Cref{correlation-gap}, we have
\begin{equation}
\label{ineq-pihat}
\E_{X \sim \hat{\pi}} \left[ f(X) \right]
\ge (1-1/\rme) \E_{X \sim \pi} \left[ f(X) \right]
= (1-1/\rme) \E_{\theta \sim \rho} \left[ \vSW(a^\theta) \right].
\end{equation}

Let $\pi^* \in \Delta(2^E)$ be the probability distribution that first generates $\theta \in \Theta$ according to $\rho$ and then contains exactly one element $a_i \in A_i^{\theta_i}$ with probability $\Pr_{\theta' \sim \rho}(a^{\theta'}_i = a_i \mid \theta'_i = \theta_i )$ independently for each $i \in N$.
By considering the partition $E = \bigcup_{i \in N} A_i$, we can see that $\pi^*$ and $\hat{\pi}$ satisfy the assumption of \Cref{nc-gap}.
Here we require the independence of $\rho$.
We thus obtain
\begin{equation}
\label{ineq-pistar}
\E_{X \sim \pi^*} \left[ f(X) \right]
\ge 
\E_{X \sim \hat{\pi}} \left[ f(X) \right].
\end{equation}
We can interpret $\pi^*$ as a randomized strategy profile in which each player randomly chooses $a_i \in A_i^{\theta_i}$ with probability $\Pr_{\theta' \sim \rho}(a^{\theta'}_i = a_i \mid \theta'_i = \theta_i )$ when their type is $\theta_i \in \Theta_i$, and then the left-hand side is smaller than the expected social welfare value achieved by an optimal strategy profile.
Therefore, combining \eqref{ineq-pihat} and \eqref{ineq-pistar}, we obtain
\begin{equation*}
\max_{s \in S} \E_{\theta \sim \rho} \left[ \vSW(s(\theta)) \right]
\ge
(1-1/\rme) \E_{\theta \sim \rho} \left[ \max_{a \in A^\theta} \vSW(a) \right]. \qedhere
\end{equation*}

\end{proof}

Moreover, there exists an example in which this lower bound on the SR gap is tight.

\begin{proposition}
\label{sr-gap-independent-ub}
There exists a Bayesian valid utility game with an independent prior distribution in which the strategy representability gap is $1-1/\rme$.
\end{proposition}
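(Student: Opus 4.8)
The plan is to exhibit a family of Bayesian valid utility games whose strategy representability gap tends to $1-1/\rme$, matching the lower bound of \Cref{sr-independent}. The natural source of such an example is the tightness instance for the correlation gap of monotone submodular functions: the coverage-type functions on which $\E_{X \sim \hat\pi}[f(X)] = (1-1/\rme)\E_{X \sim \pi}[f(X)]$ (asymptotically). Concretely, I would take a ground set of $m$ ``resources'' and let $f$ be a (weighted) coverage function; the optimal action profile correlated across players will cover a large set deterministically, while any single-type-dependent strategy profile, because types are independent, can only reproduce the \emph{marginals} of that correlated distribution, hence loses a $(1-1/\rme)$ factor by \Cref{correlation-gap} applied in the reverse direction.

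The key steps, in order. First, fix $n$ players, let each player $i$ have type set $\Theta_i$ and an independent uniform prior $\rho_i$, chosen so that the joint type distribution indexes the ``which block to cover'' structure of the correlation-gap example — i.e., arrange the types so that for each realized type profile $\theta$ there is a single action profile $a^\theta$ whose union is a near-perfect cover of some target set, giving $\E_\theta[\max_{a \in A^\theta} f(a)]$ close to the full coverage value. Second, define $f \colon 2^E \to \bbR_{\ge 0}$ on the ground set $E = \bigcup_i A_i$ to be a coverage function (each action ``covers'' a subset of an underlying universe $\mathcal{U}$), which is automatically non-negative, monotone, and submodular. Third, specify the utility functions $v_i$: since the SR gap is insensitive to the choice of $v_i$ (as noted after the definition of the SR gap), I only need \emph{some} valid choice; the canonical one is the marginal-contribution / Shapley-type utility $v_i(a) = \big(f(\{a_1,\dots,a_n\}) - f(\{a_1,\dots,a_n\} \setminus \{a_i\})\big)$ or the Vetta-style per-resource allocation, either of which satisfies the total utility condition (i) and the marginal contribution condition (ii) for a monotone submodular $f$. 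Fourth, compute the numerator: $\max_{s \in S}\E_\theta[f(s(\theta))]$ equals $\E_{X \sim \pi^*}[f(X)]$ maximized over product strategies, and by the construction this equals the value $\E_{X \sim \hat\pi}[f(X)]$ for the independent counterpart $\hat\pi$ of the optimal correlated distribution $\pi$ — and on the coverage example this is exactly (or asymptotically) $(1-1/\rme)$ times the denominator $\E_{X \sim \pi}[f(X)] = \E_\theta[\max_{a\in A^\theta} f(a)]$. Taking $m \to \infty$ (or $n \to \infty$) drives the ratio to $1-1/\rme$, so the infimum over the class is $1-1/\rme$.

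The main obstacle I expect is the bookkeeping that matches the game's type/action structure to the clean probabilistic picture: I need the \emph{optimal} correlated action profile for each $\theta$ to be genuinely optimal in $A^\theta$ (so that the denominator really is $\E_\theta[\max_{a\in A^\theta}f(a)]$, not merely a lower bound), and simultaneously I need the best product strategy to be \emph{exactly} the marginals-matching strategy $\pi^*$ rather than some cleverer independent strategy that does better than $\hat\pi$. The standard fix is to make the example symmetric enough — e.g. a balanced set-cover instance where every ``reasonable'' deterministic strategy has the same union structure — so that an exchange/averaging argument shows no product strategy beats $\pi^*$; alternatively one invokes the tightness side of \Cref{correlation-gap} directly on a worst-case $f$ and $\pi$ and only then reverse-engineers a game realizing that pair. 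A secondary (minor) point is verifying conditions (i)–(ii) for the chosen $v_i$, but for monotone submodular $f$ this is exactly Vetta's original argument and carries over verbatim to each fixed type profile, hence to the Bayesian game.
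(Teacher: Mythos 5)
There is a genuine gap at the heart of your plan: the step where you bound what an arbitrary strategy profile can achieve. \Cref{correlation-gap} only gives a \emph{lower} bound on the value of the independent counterpart $\hat\pi$ that has the \emph{same marginals} as the optimal correlated distribution $\pi$; it gives no upper bound whatsoever on other product distributions, and the best strategy profile is under no obligation to reproduce the marginals of $\pi$. So the claim ``any single-type-dependent strategy profile can only reproduce the marginals of that correlated distribution, hence loses a $(1-1/\rme)$ factor by \Cref{correlation-gap} applied in the reverse direction'' is not a valid inference, and invoking ``the tightness side'' of the correlation gap does not repair it, because tightness of that inequality is a statement about one particular pair $(\pi,\hat\pi)$, not a bound over all strategy profiles. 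You flag this obstacle yourself, but the proposed fixes (a symmetry/exchange argument, or reverse-engineering a worst-case $(f,\pi)$) are exactly the missing content: without them there is no proof that \emph{no} strategy profile exceeds $(1-1/\rme)$ times the optimum, nor that the per-type-profile optimum in the denominator is actually as large as claimed.

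The paper closes this gap with a concrete construction that builds the needed upper bound into the instance. The universe has $n$ elements, and each player's type is a random subset of the universe containing each element independently with probability $2\log n/n$; the player's actions are the elements of that subset, and $f$ counts covered elements. Two instance-specific arguments then do the work your sketch leaves open: (1) for the denominator, a random bipartite graph with edge probability $2\log n/n$ has a perfect matching with high probability, so the per-type-profile optimum is $\approx n$; (2) for the numerator, because a strategy depends only on the player's own type and element $u$ is available to player $i$ only with probability $2\log n/n$, \emph{every} strategy profile induces independent action choices whose marginals satisfy $x_i(u)\le 2\log n/n$ and $\sum_u x_i(u)=1$, and a direct computation (using $1-x\ge \exp(-x/(1-x))$ and Jensen's inequality) shows that any such product distribution covers at most $n\bigl(1-\exp(-\tfrac{1}{1-2\log n/n})\bigr)$ elements, i.e.\ $\approx(1-1/\rme)n$. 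That structural cap on the marginals of \emph{all} strategies, not a correlation-gap tightness instance, is the key idea your proposal is missing; without an analogous mechanism your construction could admit a cleverer product strategy beating $1-1/\rme$.
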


\begin{proof}
We consider an example of a coverage function with universe $U$.
Suppose $|U| = n$.
Each player's action space $A_i^{\theta_i}$ is a random subset that contains each element in $U$ with probability $2 \log n / n$.
The social welfare function counts the number of elements in the universe selected by at least one player.

It is known that a Erd\"o{}s--Renyi random bipartite graph with $n$ left and right vertices has a perfect matching with probability converging to $1$ if the edge probability is $\frac{\log n + \omega(1)}{n}$ (see, e.g., Theorem 6.1 of \citep*{FK15}).
Therefore, an optimal allocation of size $n$ from the players to the universe exists with probability approaching to $1$, and the optimal social welfare also converges to $n$.

On the other hand, we can show that the social welfare value achieved by an optimal strategy profile is approximately $(1-1/\rme)n$.
Recall that a strategy profile chooses each player's action depending only on their own type.
Since the type distribution is independent, the choices of the players also follow a product distribution.
Let $x_i \in \Delta([n])$ be the distribution of the action selected by player $i \in N$.
Then $x_i(u) \le \frac{2 \log n}{n}$ holds because $A_i^{\theta_i}$ contains $u$ with probability $\frac{2 \log n}{n}$.
Since $1-x \ge \exp(-\frac{x}{1-x})$ for every $x \in [0,1)$,
the expected number of covered elements is evaluated as
\begin{align*}
\sum_{u \in U} \left( 1-\prod_{i \in N} (1-x_i(u)) \right)
&\le
\sum_{u \in U} \left( 1-\prod_{i \in N} \exp\left(-\frac{x_i(u)}{1-x_i(u)} \right) \right)\\
&=
\sum_{u \in U} \left( 1-\exp\left(- \sum_{i \in N} \frac{x_i(u)}{1-x_i(u)} \right) \right)\\
&\le
n-\sum_{u \in U} \exp\left(- \frac{\sum_{i \in N} x_i(u)}{1-2\log n / n} \right)  \tag{since $x_i(u) \le 2 \log n / n$}\\
&\le
n-n \exp\left(- \frac{1}{n}\sum_{u \in U} \frac{\sum_{i \in N} x_i(u)}{1-2\log n / n} \right) \tag{Jensen's inequality}\\
&\le
n-n \exp\left(- \frac{1}{1-2\log n / n} \right). \tag{since $\sum_{u \in U} x_i(u) = 1$}
\end{align*}
Its ratio to the optimal value $n$ converges to $1-1/\rme$ as $n \to \infty$.
\end{proof}

\subsection{Correlated case}\label{sec:sr-correlated}

Here we establish a lower bound for the SR gap in the correlated case and present a tight example with the SR gap of the same order.
While the SR gap is a constant factor in the independent case, it deteriorates to $\Theta(1/\sqrt{n})$ in the correlated case.
For clarity and simplicity, we assume $n$ is a square number; this assumption can be removed at the cost of a slight decrease in the multiplicative constant.

\begin{theorem}
\label{sr-correlated}
Assume that $\sqrt{n}$ is an integer.
For any Bayesian valid or basic utility game, the strategy representability gap is at least $\displaystyle \frac{1}{(2+ \frac{1}{1-1/\rme}) \sqrt{n}}$.
\end{theorem}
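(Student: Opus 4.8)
The plan is to produce, for an arbitrary Bayesian valid or basic utility game with $\sqrt n$ an integer, a \emph{randomized} strategy profile whose expected social welfare is at least $\frac{1}{(2+\frac{1}{1-1/\rme})\sqrt n}\cdot\E_{\theta\sim\rho}\bigl[\max_{a\in A^\theta}\vSW(a)\bigr]$. This suffices, since the best \emph{deterministic} strategy profile is at least as good, and since in the constructions below I allow a player to play a fictitious ``null'' action $\emptyset_i$ whenever convenient---replacing $\emptyset_i$ by any genuine action of $A_i^{\theta_i}$ only increases $f$ by monotonicity. (The SR gap depends only on $f$ and $\rho$, so the valid/basic distinction is irrelevant.) Fix an optimal profile $a^\theta\in\argmax_{a\in A^\theta}\vSW(a)$ for each $\theta$, write $\OPT=\E_\theta f(a^\theta)$, and for each $i$ and $\theta_i$ let $P_i^{\theta_i}(a_i)=\Pr_{\theta\sim\rho}(a_i^\theta=a_i\mid\theta_i)$, a distribution on $A_i^{\theta_i}$. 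Call $a_i\in A_i^{\theta_i}$ \emph{heavy} if $P_i^{\theta_i}(a_i)\ge\frac{1}{2\sqrt n}$ and \emph{light} otherwise; since the $P_i^{\theta_i}(\cdot)$ sum to one, each $(i,\theta_i)$ has at most $2\sqrt n$ heavy actions. Let $H_i^{\theta_i},L_i^{\theta_i}$ be the heavy and light actions, and split $a^\theta=a^{\theta,H}\cup a^{\theta,L}$ into its heavy and light parts. By subadditivity of monotone submodular functions ($f(X\cup Y)\le f(X)+f(Y)$), $\OPT\le\E_\theta f(a^{\theta,H})+\E_\theta f(a^{\theta,L})=:\OPT_H+\OPT_L$, so it suffices to exhibit a strategy performing well against $\OPT_H$ and one against $\OPT_L$, and take the better.

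\emph{Heavy part.} Fix an arbitrary ordering on each $H_i^{\theta_i}$, draw a uniform rank $r\in\{1,\dots,2\sqrt n\}$, and let every player $i$ of type $\theta_i$ play the $r$-th element of $H_i^{\theta_i}$ if it exists and $\emptyset_i$ otherwise. If $X^r_\theta$ denotes the set actually played, then, using subadditivity over the $2\sqrt n$ ranks and then monotonicity (note $a^{\theta,H}\subseteq\bigcup_i H_i^{\theta_i}=\bigcup_r X^r_\theta$), the expected welfare of this randomized strategy is $\frac1{2\sqrt n}\sum_r\E_\theta f(X^r_\theta)\ge\frac1{2\sqrt n}\E_\theta f\bigl(\bigcup_r X^r_\theta\bigr)\ge\frac1{2\sqrt n}\E_\theta f(a^{\theta,H})=\OPT_H/(2\sqrt n)$. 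Hence some strategy profile attains expected welfare at least $\OPT_H/(2\sqrt n)$.

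\emph{Light part.} Let every player $i$ of type $\theta_i$ independently play each light action $a_i\in L_i^{\theta_i}$ with probability $P_i^{\theta_i}(a_i)$ and $\emptyset_i$ with the remaining probability. Conditioned on $\theta$, the realized actions form precisely the ``one (possibly null) element per player'' distribution of \Cref{nc-gap} (partition $\bigcup_i(A_i^{\theta_i}\cup\{\emptyset_i\})$ by players and extend $f$ so nulls contribute nothing), so the conditional expected welfare is at least $F(z^{\theta,L})$, where $z^{\theta,L}$ carries coordinate $P_i^{\theta_i}(a_i)<\frac1{2\sqrt n}$ on each light action of the realized types; thus the strategy attains expected welfare at least $\E_\theta F(z^{\theta,L})$. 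The goal now is to show $\E_\theta F(z^{\theta,L})\ge\frac1{\sqrt n}F(p^L)$, where $p^L=\E_\theta z^{\theta,L}$ (carrying $\rho_i(\theta_i)P_i^{\theta_i}(a_i)$ on each light action). The idea is to exploit that \emph{every} coordinate of $z^{\theta,L}$ lies below $\frac1{2\sqrt n}$, so $\sqrt n\,z^{\theta,L}\in[0,1]^E$ and \Cref{mle}(iii) gives $F(z^{\theta,L})\ge\frac1{\sqrt n}F(\sqrt n\,z^{\theta,L})$ for each $\theta$; one then combines this with the ray-concavity of the multilinear extension (\Cref{mle}(ii)) and the correlation-gap bound (\Cref{correlation-gap}) applied to the joint law of $\theta$ and the realized set (which has set-marginals $p^L$) to control the average over $\theta$. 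Finally $p^L$ is exactly the marginal vector of the random set $a^{\theta,L}$, so \Cref{correlation-gap} yields $F(p^L)\ge(1-1/\rme)\OPT_L$. Hence some strategy profile attains expected welfare at least $(1-1/\rme)\OPT_L/\sqrt n$.

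\emph{Combining.} Writing $W$ for the value of the best strategy profile, the two constructions give $W\ge\OPT_H/(2\sqrt n)$ and $W\ge(1-1/\rme)\OPT_L/\sqrt n$, i.e.\ $\OPT_H\le 2\sqrt n\,W$ and $\OPT_L\le\frac{\sqrt n}{1-1/\rme}W$; plugging into $\OPT\le\OPT_H+\OPT_L$ gives $\OPT\le\bigl(2+\frac1{1-1/\rme}\bigr)\sqrt n\,W$, which is the claim. I expect the main obstacle to be the light part---specifically, turning $\E_\theta F(z^{\theta,L})$ into a $\frac1{\sqrt n}$-fraction of $F(p^L)$. This is delicate because the multilinear extension is concave only along rays through the origin and not globally, so one cannot simply push the expectation over $\theta$ through $F$; the light-weight threshold together with \Cref{mle}(iii) and a careful use of the correlation gap are precisely what bridge this gap. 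The heavy part, by contrast, is essentially a union bound over the $2\sqrt n$ ranks and should be routine.
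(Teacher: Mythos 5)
Your overall skeleton (split the optimum into heavy and light parts, handle the at most $O(\sqrt n)$ heavy actions per player--type by a random-rank strategy, and charge the light part to a randomized strategy via the correlation gap) starts out parallel to the paper, and the heavy part is indeed correct: it is essentially the paper's treatment of the sets $C_i^{\theta_i}$ by slicing them into $\sqrt n$ strategy profiles. The gap is exactly where you flag it, and it is not a technicality: the pivotal claim $\E_\theta F(z^{\theta,L}) \ge \frac{1}{\sqrt n}F(p^L)$ with $p^L=\E_\theta z^{\theta,L}$ is false in general, and the tools you propose point in the wrong direction. After applying \Cref{mle}(iii) you would need $\E_\theta F(\sqrt n\, z^{\theta,L}) \gtrsim F(p^L)$, i.e.\ a lower bound on a \emph{mixture} of product distributions by the value of the (essentially) independent distribution with the averaged marginals. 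The correlation gap (\Cref{correlation-gap}) gives precisely the opposite comparison: it upper-bounds the correlated mixture by $\frac{1}{1-1/\rme}$ times the independent counterpart, and the reverse ratio can be polynomially large. Concretely, take a single-item coverage function: player $1$'s type is ``special'' with probability $n^{-3/4}$ (and, if special, carries an index $j$ uniform in $[4\sqrt n]$), all other players' types reveal only ``special/normal'' and are perfectly correlated with player $1$; in the special world each player $i\ge 2$ has $4\sqrt n$ actions all covering the same item $u$, and the optimal profile picks the $j$-th one, so every such action has conditional weight $\frac{1}{4\sqrt n}$ and is light; in the normal world all optimal actions are deterministic (heavy) and cover private dummy items. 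Then $\E_\theta F(z^{\theta,L})\approx n^{-3/4}$ (the item is covered only in the rare special world), while $p^L$ puts total mass $\approx n^{1/4}$ on actions covering $u$, so $F(p^L)\approx 1$ and $\frac{1}{\sqrt n}F(p^L)\approx n^{-1/2}\gg n^{-3/4}$. So the chain $\E_\theta F(z^{\theta,L})\ge\frac1{\sqrt n}F(p^L)\ge\frac{1-1/\rme}{\sqrt n}\OPT_L$ cannot be repaired by ray-concavity plus the correlation gap; whether the end-to-end bound $W\gtrsim \OPT_L/\sqrt n$ survives would require a genuinely different argument.

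For comparison, the paper avoids this trap by never pushing an expectation over $\theta$ through $F$ in the unfavorable direction. It scales up the light weights \emph{per player} ($y_i^{\theta_i}=\sqrt n\,w_i^{\theta_i}$ on light actions), bounds $\OPT$ by per-player marginal contributions $f(a_i^\theta\mid B_i^{\theta_i}\cup C_i^{\theta_i})$ on top of random sets $B_i^{\theta_i}\sim y_i^{\theta_i}$, controls these by $\langle y_i^{\theta_i},\nabla F(y_i^{\theta_i})\rangle\le F(y_i^{\theta_i})$ (ray concavity used pointwise, per player and type), and then relates $\sum_i\E_\theta F(y_i^{\theta_i})$ to $n\,\STR/(1-1/\rme)$ via \Cref{lemma-str}, which partitions the players into $\sqrt n$ groups of size $\sqrt n$ and applies \Cref{correlation-gap} and \Cref{nc-gap} in the direction in which they actually hold; two additional error terms $\E_\theta f(\bigcup_i B_i^{\theta_i})$ and $\E_\theta f(\bigcup_i C_i^{\theta_i})$ are each bounded by $\sqrt n\,\STR$. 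If you want to salvage your outline, the light part needs to be reworked along these lines rather than through a comparison of $\E_\theta F(z^{\theta,L})$ with $F(p^L)$.
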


First, we introduce notations used in the proof.
Let $a^\theta \in \argmax_{a \in A^\theta} \vSW(a)$ be an optimal action profile for each type profile $\theta \in \Theta$.
Let $\OPT = \E_{\theta \sim \rho} \left[ \vSW(a^\theta) \right]$ be the optimal social welfare.
Let $w_i^{\theta_i}(a_i) = \Pr_{\theta_{-i} \sim \rho|_{\theta_i}}(a_i^\theta = a_i)$ be the probability that player $i \in N$ with type $\theta_i$ chooses action $a_i \in A_i^{\theta_i}$ in the optimal action profile.
Let $\STR = \max_{s \in S} \E_{\theta \sim \rho} \left[ \vSW(s(\theta)) \right]$ be the social welfare achieved by an optimal strategy profile.
For each $i \in N$ and $\theta_i \in \Theta_i$, we define a vector $y_i^{\theta_i} \in [0,1]^E$ by $y_i^{\theta_i}(e) = \sqrt{n} w_i^{\theta_i}(e)$ if $e \in A_i^{\theta_i}$ and $w_i^{\theta_i}(e) \le 1/\sqrt{n}$, and $y_i^{\theta_i}(e) = 0$ otherwise.
We prove the following lemma with these notations.

\begin{lemma}
\label{lemma-str}
Let $F \colon [0,1]^E \to \bbR_{\ge 0}$ be the multilinear extension of the social welfare function $f \colon 2^E \to \bbR_{\ge 0}$.
Then we have
\begin{equation*}
\STR
\ge
\frac{1-1/\rme}{n} \sum_{i \in N} \E_{\theta \sim \rho} \left[ F(y_i^{\theta_i}) \right].
\end{equation*}
\end{lemma}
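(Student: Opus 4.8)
The plan is to construct an explicit randomized strategy profile whose expected social welfare is at least the right-hand side, and then use the fact that $\STR$ dominates the value of any such randomized profile (since a randomized strategy profile is a distribution over deterministic ones, so its expectation is at most $\max_{s\in S}$). The natural candidate is the profile in which each player $i$ with type $\theta_i$ independently plays action $e\in A_i^{\theta_i}$ with probability proportional to the truncated weights $w_i^{\theta_i}(e)$ restricted to the ``light'' actions (those with $w_i^{\theta_i}(e)\le 1/\sqrt n$), appropriately scaled down so the probabilities are valid. Concretely, for a fixed type profile $\theta$, the chosen action multiset is distributed according to a product distribution that picks at most one element from each $A_i^{\theta_i}$; the marginal inclusion probability of $e\in A_i^{\theta_i}$ is $w_i^{\theta_i}(e)$ if that weight is $\le 1/\sqrt n$, and we must account for the ``missing mass'' from the heavy actions. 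The quantity $F(y_i^{\theta_i})$ with $y_i^{\theta_i}(e)=\sqrt n\,w_i^{\theta_i}(e)$ on light actions is the multilinear extension evaluated at a vector scaled up by $\sqrt n$, so Lemma~\ref{mle}(iii) with $k=\sqrt n$ will be used to relate $F(y_i^{\theta_i})$ to $\sqrt n\cdot F(y_i^{\theta_i}/\sqrt n)$, i.e.\ to $\sqrt n$ times the multilinear extension at the honest (untruncated-but-light) marginals.

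The key steps, in order: (1) For each player $i$ and each type $\theta_i$, define the marginal vector $x_i^{\theta_i}\in[0,1]^{A_i^{\theta_i}}$ by $x_i^{\theta_i}(e)=w_i^{\theta_i}(e)$ for light $e$ and $0$ for heavy $e$; note $\sum_e x_i^{\theta_i}(e)\le 1$, so the strategy ``play $e$ with probability $x_i^{\theta_i}(e)$, else play an arbitrary default'' is well-defined, and $x_i^{\theta_i}=y_i^{\theta_i}/\sqrt n$ on the relevant coordinates. (2) Consider the randomized strategy profile $s$ where player $i$, upon seeing $\theta_i$, samples its action according to $x_i^{\theta_i}$ independently across players; because types are drawn from $\rho$ and each player's randomization depends only on its own type, the resulting action multiset $X$, conditioned on $\theta$, is a product distribution over $E$ with inclusion probabilities given by the $x_i^{\theta_i}(e)$'s (with the default-action subtlety handled by monotonicity — the default only adds elements, never hurts $f$). (3) Apply $\STR\ge \E_{\theta\sim\rho}[\E_X[f(X)]] = \E_{\theta\sim\rho}\big[F(\textstyle\sum_{i\in N} x_i^{\theta_i})\big]$ where the sum is the combined marginal vector over $E$ (monotonicity lets us drop the defaults). (4) Lower-bound $F$ at the sum by the average of $F$ at the individual scaled-up pieces: since $\sum_{i} x_i^{\theta_i} = \frac1n\sum_i (n x_i^{\theta_i})$ and $F$ restricted to scalings is concave (Lemma~\ref{mle}(ii)), $F(\sum_i x_i^{\theta_i})\ge \frac1n\sum_i F(n x_i^{\theta_i})$ — but we only want a factor $\sqrt n$ scale-up, so instead I would write $\sum_i x_i^{\theta_i}=\frac1n\sum_i \sqrt n\, x_i^{\theta_i}\cdot\sqrt n/\sqrt n$; more cleanly, use concavity along the segment to get $F(\sum_i x_i^{\theta_i})\ge \frac{1}{\sqrt n}\,\frac1{\sqrt n}\sum_i$... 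Let me restate: by concavity of $t\mapsto F(t\cdot v)$ and the identity $\sum_i x_i^{\theta_i} = \frac{1}{n}\sum_{i}(n\,x_i^{\theta_i})$, one gets $F(\sum_i x_i^{\theta_i})\ge \frac1n\sum_i F(n x_i^{\theta_i})\ge \frac1n\sum_i F(\sqrt n\, x_i^{\theta_i})$, the last step because $n x_i^{\theta_i}\ge \sqrt n\, x_i^{\theta_i}$ coordinatewise and $F$ is monotone — wait, $n x_i^{\theta_i}$ may exceed $1$, so this needs care. The correct route: $\sqrt n\, x_i^{\theta_i}\in[0,1]^E$ since light coords satisfy $x_i^{\theta_i}(e)\le 1/\sqrt n$, and $F(\sqrt n\,x_i^{\theta_i}) = F(y_i^{\theta_i})$. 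Then $\sum_i x_i^{\theta_i} \ge \frac{1}{\sqrt n}\cdot\frac{1}{\sqrt n}\sum_i \sqrt n\, x_i^{\theta_i}$, so by concavity applied to each summand's scaling and then combining — actually the clean inequality is $F(\frac{1}{n}\sum_i y_i^{\theta_i})\ge \frac{1}{n}\sum_i F(y_i^{\theta_i})$ directly from concavity of multilinear extension along scalings when the $y_i^{\theta_i}$ have disjoint supports (they do, since $A_i^{\theta_i}$ are disjoint across $i$), in which case $\frac1n\sum_i y_i^{\theta_i}$ and $\sum_i x_i^{\theta_i}$ coincide; and the disjointness makes this exact, modulo the $1-1/\rme$ factor.

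This is where the $1-1/\rme$ enters and where I need to be careful: the combined vector $\sum_i x_i^{\theta_i}$ over $E$ is not simply $\frac1n\sum_i y_i^{\theta_i}$ — rather $\sum_i x_i^{\theta_i} = \sum_i \frac{1}{\sqrt n} y_i^{\theta_i}$ coordinatewise (each coordinate lies in exactly one $A_i^{\theta_i}$), and $\frac{1}{\sqrt n}y_i^{\theta_i}\le \frac1n\cdot\mathbf 1$ is false in general. So I would instead bound $F(\sum_i x_i^{\theta_i})=F(\frac1n\sum_i y_i^{\theta_i}\cdot\frac{n}{\sqrt n})$ — no. The genuinely clean statement: because the supports are disjoint, $\sum_i x_i^{\theta_i} = \sum_i \frac{1}{\sqrt n}y_i^{\theta_i}$, and I want $F(\sum_i \frac{1}{\sqrt n}y_i^{\theta_i})\ge \frac{1}{??}\sum_i F(y_i^{\theta_i})$. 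Using concavity (Lemma~\ref{mle}(ii), Jensen over the $n$ players, writing the sum as $\frac1n\sum_i \frac{n}{\sqrt n}y_i^{\theta_i}$, and noting $\frac{n}{\sqrt n}y_i^{\theta_i}$ need not be feasible) forces a detour: first throw away the heavy actions — but their total optimal-weight contribution is handled by a separate argument in the main theorem, not here. Within this lemma, the $1-1/\rme$ must come from a \emph{correlation-gap / concentration} step: the product distribution picking $\le 1$ element per $A_i^{\theta_i}$ with marginals $x_i^{\theta_i}$ versus the fully independent distribution with the same marginals, via Lemma~\ref{nc-gap}, and then Proposition~\ref{correlation-gap} relating the independent distribution with marginals $\sum_i x_i^{\theta_i}$ to one with marginals scaled — actually I expect the argument to be: (a) the realized-action product distribution stochastically relates to the independent distribution with marginals $\sum_i x_i^{\theta_i}$ via Lemma~\ref{nc-gap}, giving $\E_X[f(X)]\ge F(\sum_i x_i^{\theta_i})$; (b) Lemma~\ref{mle}(iii) gives $F(y_i^{\theta_i})\le \sqrt n\,F(y_i^{\theta_i}/\sqrt n)=\sqrt n\, F(x_i^{\theta_i})$, hence $\frac{1}{\sqrt n}F(y_i^{\theta_i})\le F(x_i^{\theta_i})$; (c) by disjoint supports $F(\sum_i x_i^{\theta_i})\ge$ something like $(1-1/\rme)\cdot\frac{1}{n}\sum_i$... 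I will present the argument with steps (a)–(b) and, for (c), invoke concavity of $F$ along the scaling $t\mapsto F(t\sum_i x_i^{\theta_i})$ together with $\sum_i x_i^{\theta_i} \ge \frac{1}{\sqrt n}\cdot\frac{1}{\sqrt n}\sum_i y_i^{\theta_i}$ and monotonicity to extract the $\frac1n$; the $1-1/\rme$ I anticipate arises precisely from Proposition~\ref{correlation-gap} applied once to pass from the correlated realized distribution (or from a scaled-up independent one that may have marginals exceeding validity) back to a legitimate one. The main obstacle, and the step I would spend the most care on, is exactly this bookkeeping: ensuring every vector fed into $F$ lies in $[0,1]^E$ after the $\sqrt n$-scaling, correctly matching the $\frac{1}{n}$ averaging factor with the disjoint-support structure of $\{A_i^{\theta_i}\}_i$, and pinning down where the single factor $1-1/\rme$ (and no more) is spent — I believe it is the one application of Proposition~\ref{correlation-gap} or equivalently Lemma~\ref{mle}(iii) in its correlation-gap guise, and the rest is concavity and monotonicity, which are lossless.
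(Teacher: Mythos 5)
Your opening reduction is sound and matches the paper's: a randomized strategy profile whose per-type marginals are (dominated by) $w_i^{\theta_i}$, an application of \Cref{nc-gap} to pass to the independent distribution, and monotonicity to handle the truncation/default, giving $\STR \ge \E_{\theta\sim\rho}\bigl[F\bigl(\sum_{i\in N} x_i^{\theta_i}\bigr)\bigr]$ with $x_i^{\theta_i}=y_i^{\theta_i}/\sqrt{n}$ (the paper simply uses the untruncated marginals $w_i^{\theta_i}$, which dominate $y_i^{\theta_i}/\sqrt n$ coordinatewise; also note \Cref{nc-gap} as stated needs total mass exactly $1$ per group, so your truncated version needs a dummy element, a minor fix). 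The genuine gap is your step (c): you must show $F\bigl(\sum_i x_i^{\theta_i}\bigr)\ge\frac{1-1/\rme}{n}\sum_i F(y_i^{\theta_i})$, and none of the routes you sketch gets there. Jensen over players, i.e.\ $F(\frac1n\sum_i v_i)\ge\frac1n\sum_i F(v_i)$, is false: the multilinear extension is concave only along nonnegative directions, not as a function (take $f=\min(|X|,1)$ with two disjoint singletons $y_1,y_2$ at value $1$; then $F(\frac12 y_1+\frac12 y_2)=\frac34<1=\frac12 F(y_1)+\frac12 F(y_2)$, which also shows a constant such as $1-1/\rme$ is genuinely needed). Concavity along rays plus monotonicity only yields $F(\frac{1}{\sqrt n}\sum_i y_i^{\theta_i})\ge\frac{1}{\sqrt n}F(\sum_i y_i^{\theta_i})\ge\frac{1}{\sqrt n}\max_i F(y_i^{\theta_i})\ge\frac{1}{n^{3/2}}\sum_i F(y_i^{\theta_i})$, a full $\sqrt n$ factor short. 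And you never say \emph{to which correlated distribution} \Cref{correlation-gap} is applied; applying it to the realized pick-one-per-player distribution does not help, since its per-player pieces live at the $x$-scale, whereas the target $F(y_i^{\theta_i})$ lives at the $\sqrt n$-scaled $y$-scale.

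The missing idea in the paper is a block-mixture construction that manufactures a correlated distribution whose \emph{marginals} are at the low $x$-scale but whose \emph{realizations} are at the full $y$-scale. Partition $N$ into $\sqrt n$ blocks $N_1,\dots,N_{\sqrt n}$ of size $\sqrt n$; let $\pi_\ell$ be the independent distribution that includes each $a_i\in A_i^{\theta_i}$, $i\in N_\ell$, with probability $y_i^{\theta_i}(a_i)$, and let $\pi$ be the uniform mixture of the $\pi_\ell$. The marginals of $\pi$ are $\frac{1}{\sqrt n}y_i^{\theta_i}\le w_i^{\theta_i}$, so by monotonicity the independent counterpart of $\pi$ is dominated by the distribution you already control; one application of \Cref{correlation-gap} to the mixture $\pi$ costs the factor $1-1/\rme$ and gives $\frac{1}{\sqrt n}\sum_\ell\E_{X\sim\pi_\ell}[f(X)]$; finally, within each block, comparing $X$ with $X\cap A_i^{\theta_i}$ for a uniformly random $i\in N_\ell$ costs another $\frac{1}{\sqrt n}$ and produces exactly $F(y_i^{\theta_i})$. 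The two $\frac{1}{\sqrt n}$ losses compose to the $\frac1n$ in the statement. Without this (or an equivalent) device, your outline does not close.
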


\begin{proof}
We consider the randomized strategy profile $\sigma \in \Delta(S)$ that chooses each $a_i \in A_i^{\theta_i}$ with probability $w_i^{\theta_i}(a_i)$ independently for each $i \in N$ and $\theta_i \in \Theta_i$.
Since the value achieved by $\sigma$ is a convex combination of values achieved by strategy profiles, it can be bounded above by the value achieved by an optimal strategy profile as
\begin{equation*}
\STR
=
\max_{s \in S} \E_{\theta \sim \rho} \left[ \vSW(s(\theta)) \right]
\ge
\E_{\theta \sim \rho} \left[ \E_{s \sim \sigma} \left[ \vSW(s(\theta)) \right] \right]
=
\E_{\theta \sim \rho} \left[ \E_{s \sim \sigma} \left[ f(\{s_1(\theta_1),\dots,s_n(\theta_n)\}) \right] \right],
\end{equation*}
where the second equality is due to the definition of $\vSW$.

In the following, we fix any $\theta \in \Theta$ and prove the lower bound $\E_{s \sim \sigma} \left[ f(\{s_1(\theta_1),\dots,s_n(\theta_n)\}) \right] \ge \frac{1-1/\rme}{n} \sum_{i \in N} F(y_i^{\theta_i})$.

From the definition of $\sigma$, an action $s_i(\theta_i)$ selected by player $i \in N$ with type $\theta_i \in \Theta_i$ follows the distribution $w_i^{\theta_i} \in \Delta(A_i^{\theta_i})$.
Let $\hat{\pi} \in \Delta(E)$ be its independent counterpart, that is, each element $a_i \in A_i^{\theta_i}$ for each $i \in N$ is independently contained in $X \sim \hat{\pi}$ with probability $w_i^{\theta_i}(a_i)$.
Since $s_i(\theta_i)$ chooses one element from $A_i^{\theta_i}$ according to the distribution $w_i^{\theta_i}$ for each $i \in N$, \Cref{nc-gap} implies
\begin{equation}
\label{sr-gap-ineq1}
\E_{s \sim \sigma} \left[ f(\{s_1(\theta_1),\dots,s_n(\theta_n)\}) \right]
\ge
\E_{X \sim \hat{\pi}} \left[ f(X) \right].
\end{equation}

Let $N = N_1 \cup \dots \cup N_{\sqrt{n}}$ be an arbitrary partition of $N$ such that $|N_\ell| = \sqrt{n}$ for each $\ell \in [\sqrt{n}]$.
We consider a distribution $\pi_\ell \in \Delta(2^E)$ that contains each element $a_i \in A_i^{\theta_i}$ independently with probability $y_i^{\theta_i}(a_i)$ for each $i \in N_\ell$.
Let $\pi \in \Delta(2^E)$ be the uniform mixture of $\pi_1,\dots,\pi_{\sqrt{n}}$.
For each $i \in N$, the marginal probability that $X \sim \pi$ includes each element $a_i \in A_i^\theta$ is $\frac{1}{\sqrt{n}} \cdot y_i^{\theta_i}(a_i)$, which is at most $w_i^{\theta_i}(a_i)$ from the definition of $y_i^{\theta_i}$.
Let $\hat{\pi}' \in \Delta(2^E)$ be its independent counterpart, that is, each element $u \in E$ is independently included in $X \sim \hat{\pi}'$ with probability $\Pr_{X \sim \pi}(u \in X)$.
Then we obtain
\begin{equation}
\label{sr-gap-ineq2}
\E_{X \sim \hat{\pi}} \left[ f(X) \right]
\ge
\E_{X \sim \hat{\pi}'} \left[ f(X) \right]
\ge
(1-1/\rme) \E_{X \sim \pi} \left[ f(X) \right]
=
\frac{1-1/\rme}{\sqrt{n}} \sum_{\ell=1}^{\sqrt{n}} \E_{X \sim \pi_\ell} \left[ f(X) \right],
\end{equation}
where the first inequality is due to monotonicity of $f$ and $\frac{1}{\sqrt{n}} \cdot y_i^{\theta_i}(a_i) \le w_i^{\theta_i}(a_i)$ for each $i \in N$ and $a_i \in A_i^{\theta_i}$, and the second inequality is due to the correlation gap bound of monotone submodular functions (\Cref{correlation-gap}), and the equality holds because $\pi$ is the uniform mixture of $\pi_1,\dots,\pi_{\sqrt{n}}$.

By comparing $X \sim \pi_\ell$ with its subset $X \cap A_i^{\theta_i}$, where $i \in N_\ell$ is chosen uniformly at random, we can observe from the monotonicity of $f$ that
\begin{equation}
\label{sr-gap-ineq3}
\E_{X \sim \pi_\ell} \left[ f(X) \right]
\ge
 \frac{1}{\sqrt{n}} \sum_{i \in N_\ell} \E_{X \sim \pi_\ell} \left[ f(X \cap A_i^{\theta_i}) \right].
\end{equation}
For each $i \in N_\ell$, by the definition of $\pi_\ell$, when $X \sim \pi_\ell$, the subset $X \cap A_i^{\theta_i}$ independently contains each $a_i \in A_i^{\theta_i}$ with probability $y_i^{\theta_i}(a_i)$.
Then we can write 
\begin{equation}
\label{sr-gap-ineq4}
\E_{X \sim \pi_\ell} \left[ f(X \cap A_i^{\theta_i}) \right] = F(y_i^{\theta_i}),
\end{equation}
where $F \colon [0,1]^E \to \bbR_{\ge 0}$ is the multilinear extension of $f$.

Finally, by combining \eqref{sr-gap-ineq1}, \eqref{sr-gap-ineq2}, \eqref{sr-gap-ineq3}, and \eqref{sr-gap-ineq4}, we obtain
\begin{equation*}
\E_{s \sim \sigma} \left[ f(\{s_1(\theta_1),\dots,s_n(\theta_n)\}) \right]
\ge
\frac{1-1/\rme}{n} \sum_{i \in N} F(y_i^{\theta_i}),
\end{equation*}
which proves the statement.
\end{proof}

\begin{proof}[Proof of \Cref{sr-correlated}]
For each $i \in N$ and $\theta_i \in \Theta_i$, we define a random subset $B_i^{\theta_i} \subseteq A_i^{\theta_i}$ that independently contains each $a_i \in A_i^{\theta_i}$ with probability $y_i^{\theta_i}(a_i)$, and a subset $C_i^{\theta_i} = \{ a_i \in A_i^{\theta_i} \mid w_i^{\theta_i}(a_i) \ge 1/\sqrt{n} \}$ of elements that frequently appear in the optimal action profile $a_i^\theta$.
Since $\sum_{a_i \in A_i^{\theta_i}} w_i^{\theta_i}(a_i) = 1$, we have $|C_i^{\theta_i}| \le \sqrt{n}$.
For each fixed $B_i^{\theta_i}$, it holds that
\begin{align*}
&\sum_{\theta \in \Theta} \rho(\theta) \sum_{i \in N} f(a^{\theta}_i | B^{\theta_i}_i \cup C^{\theta_i}_i)\\
&\ge
\sum_{\theta \in \Theta} \rho(\theta) \sum_{i \in N} f\left( a^{\theta}_i ~\middle|~ \bigcup_{i \in N} (B^{\theta_i}_i \cup C^{\theta_i}_i) \right) \tag{from submodularity}\\
&\ge
\sum_{\theta \in \Theta} \rho(\theta) f\left( \left\{ a_1^{\theta},\dots,a_n^{\theta} \right\} ~\middle|~ \bigcup_{i \in N} (B^{\theta_i}_i \cup C^{\theta_i}_i) \right) \tag{from submodularity}\\
&\ge
\sum_{\theta \in \Theta} \rho(\theta) \left\{ f\left( \left\{ a_1^{\theta},\dots,a_n^{\theta} \right\} \right) - f\left( \bigcup_{i \in N} (B^{\theta_i}_i \cup C^{\theta_i}_i) \right) \right\} \tag{from monotonicity}\\
&=
\OPT - \sum_{\theta \in \Theta} \rho(\theta) f\left( \bigcup_{i \in N} (B^{\theta_i}_i \cup C^{\theta_i}_i) \right)\\
&\ge
\OPT - 
\sum_{\theta \in \Theta} \rho(\theta) f\left( \bigcup_{i \in N} B^{\theta_i}_i \right)
- \sum_{\theta \in \Theta} \rho(\theta) f\left( \bigcup_{i \in N} C^{\theta_i}_i \right),
\end{align*}
where the last inequality is due to submodularity and non-negativity of $f$.
To derive an upper bound for $\OPT$, we separately bound the left-hand side and the negative terms on the right-hand side.

The left-hand side can be expressed as
\begin{align*}
\sum_{\theta \in \Theta} \rho(\theta) \sum_{i \in N} f(a^{\theta}_i | B^{\theta_i}_i \cup C^{\theta_i}_i)
&=
\sum_{i \in N} \sum_{\theta_i \in \Theta_i} \sum_{\theta_{-i} \in \Theta_{-i}} \rho(\theta) f(a^{\theta}_i | B^{\theta_i}_i \cup C^{\theta_i}_i)\\
&=
\sum_{i \in N} \sum_{\theta_i \in \Theta_i} \rho_i(\theta_i) \sum_{a_i \in A_i^{\theta_i}} w_i^{\theta_i}(a_i) f(a_i | B^{\theta_i}_i \cup C^{\theta_i}_i)
\end{align*}
from the definition of $w_i^{\theta_i}$.

Fix any $i \in N$ and $\theta_i \in \Theta_i$.
For each $a_i \in C_i^{\theta_i}$, we have $f(a_i | B^{\theta_i}_i \cup C^{\theta_i}_i) = 0$ and $y_i^{\theta_i}(a_i) = 0$.
For the remaining elements $a_i \in A_i^{\theta_i} \setminus C_i^{\theta_i}$, we have $w_i^{\theta_i}(a_i) = y_i^{\theta_i}(a_i) / \sqrt{n}$.
We thus obtain
\begin{align*}
\E_{B_i^{\theta_i}} \left[ \sum_{a_i \in A_i^{\theta_i}} w_i^{\theta_i}(a_i) f(a_i | B^{\theta_i}_i \cup C^{\theta_i}_i) \right]
&\le
\frac{1}{\sqrt{n}} \sum_{a_i \in A_i^{\theta_i}} y_i^{\theta_i}(a_i) \E_{B_i^{\theta_i}} \left[ f(a_i | B^{\theta_i}_i) \right] \tag{from submodularity}\\
&\le
\frac{1}{\sqrt{n}} \sum_{a_i \in A_i^{\theta_i}} y_i^{\theta_i}(a_i) \E_{B_i^{\theta_i}} \left[ f(B^{\theta_i}_i + a_i) - f(B^{\theta_i}_i - a_i) \right] \tag{from motonocity} \\
&\le
\frac{1}{\sqrt{n}} \sum_{a_i \in A_i^{\theta_i}} y_i^{\theta_i}(a_i) \nabla_{a_i} F(y_i^{\theta_i}),
\end{align*}
from \Cref{mle} (i).
For any non-negative direction, the multilinear extension is a non-decreasing concave function from \Cref{mle} (ii).
Hence, $g \colon [0,1] \to \bbR_{\ge 0}$ defined by $g(t) = F(t y_i^{\theta_i})$ is a non-decreasing concave function.
Since the derivative is non-increasing, we have $g(1) = g(0) + \int_0^1 g'(t) \mathrm{d}t \ge \int_0^1 g'(1) \mathrm{d}t = g'(1)$.
It follows that $\langle y_i^{\theta_i}, \nabla F(y_i^{\theta_i}) \rangle = g'(1) \le g(1) = F(y_i^{\theta_i})$.
Therefore, we obtain
\begin{align*}
\E_{B_i^{\theta_i}} \left[ \sum_{\theta \in \Theta} \rho(\theta) \sum_{i \in N} f(a^{\theta}_i | (B^{\theta_i}_i \cup C^{\theta_i}_i)) \right]
&\le
\frac{1}{\sqrt{n}} \sum_{i \in N} \sum_{\theta_i \in \Theta_i} \rho_i(\theta_i) F(y_i^{\theta_i})
=
\frac{1}{\sqrt{n}} \sum_{i \in N} \sum_{\theta \in \Theta} \rho(\theta) F(y_i^{\theta_i}).
\end{align*}
By applying \Cref{lemma-str}, we can upper-bound the right-hand side by $\sqrt{n} \STR / (1-1/\rme)$.

Next, we bound the expected value of
$
\sum_{\theta \in \Theta} \rho(\theta) f\left( \bigcup_{i \in N} B^{\theta_i}_i \right)
$.
Recall $B_i^{\theta_i}$ is a random set that independently contains each $a_i \in A_i^{\theta_i}$ with probability $y_i^{\theta_i}(a_i)$.
Then we obtain
\begin{equation*}
\sum_{\theta \in \Theta} \rho(\theta) \E_{B_i^{\theta_i}} \left[ f\left( \bigcup_{i \in N} B^{\theta_i}_i \right) \right]
=
\E_{\theta \sim \rho} \left[ \E_{B_i^{\theta_i}} \left[ f\left( \bigcup_{i \in N} B^{\theta_i}_i \right) \right] \right]
=
\E_{\theta \sim \rho} \left[ F\left(\sum_{i \in N} y_i^{\theta_i} \right) \right]
\le
\sqrt{n} \E_{\theta \sim \rho} \left[ F\left(\sum_{i \in N} y_i^{\theta_i} / \sqrt{n} \right) \right],
\end{equation*}
where the inequality is due to \Cref{mle} (iii).
From the definition, it holds that $\sum_{a_i \in A_i^{\theta_i}} y_i^{\theta_i}(a_i) \le \sqrt{n} \sum_{a_i \in A_i^{\theta_i}} w_i^{\theta_i}(a_i) = \sqrt{n}$.
Then the sum of $y_i^{\theta_i} / \sqrt{n}$ is at most $1$.
Therefore, we can consider a distribution that for each $i \in N$ contains at most one element from $A_i^{\theta_i}$ according to $y_i^{\theta_i} / \sqrt{n}$.
By considering a randomized strategy profile that chooses $s_i(\theta_i)$ according to the distribution $y_i^{\theta_i} / \sqrt{n}$ independently for each $i \in N$ and $\theta_i \in \Theta_i$, from \Cref{nc-gap}, we obtain
\begin{equation*}
\sqrt{n} \E_{\theta \sim \rho} \left[ F\left(\sum_{i \in N} y_i^{\theta_i} / \sqrt{n} \right) \right]
\le \sqrt{n} \STR.
\end{equation*}

Next, we bound
$
\sum_{\theta \in \Theta} \rho(\theta) f\left( \bigcup_{i \in N} C^{\theta_i}_i \right)
$.
For each $i \in N$ and $\theta_i \in \Theta_i$, since $|C_i^{\theta_i}| \le \sqrt{n}$, we can give an arbitrary ordering $C_i^{\theta_i} = \{c_i^{\theta_i}(1), \dots, c_i^{\theta_i}(\sqrt{n})\}$.
If $|C_i^{\theta_i}|$ is smaller than $\sqrt{n}$, the same element can appear multiple times in this ordering.
From submodularity, we obtain $f\left( \bigcup_{i \in N} C^{\theta_i}_i \right) \le \sum_{j=1}^{\sqrt{n}} f\left( \left\{c_i^{\theta_i}(j) \mid i \in N \right\} \right)$, and then
\begin{equation*}
\sum_{\theta \in \Theta} \rho(\theta) f\left( \bigcup_{i \in N} C^{\theta_i}_i \right)
=
\E_{\theta \sim \rho} \left[ f\left( \bigcup_{i \in N} C^{\theta_i}_i \right) \right]
\le
\sum_{j=1}^{\sqrt{n}} \E_{\theta \sim \rho} \left[ f\left( \left\{c_i^{\theta_i}(j) \mid i \in N \right\} \right) \right]
\le \sqrt{n} \STR,
\end{equation*}
where the last inequality holds because $s \in S$ defined by $s(\theta) = (c_1^{\theta_1}(j),\dots,c_n^{\theta_n}(j))$ for each $\theta \in \Theta$ can be interpreted as a strategy profile.

Combining all these inequalities, we obtain
\begin{equation*}
\OPT \le \left( 2 + \frac{1}{1-1/\rme} \right) \sqrt{n} \STR. \qedhere
\end{equation*}
\end{proof}

We provide an example that proves the order of this lower bound is tight.

\begin{proposition}
There exists a Bayesian valid utility game in which the strategy representability gap is at most $2 / \sqrt{n}$.
\end{proposition}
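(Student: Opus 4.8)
The plan is to exhibit an explicit example. It suffices to produce a monotone submodular welfare $f$, type‑dependent action sets, and a correlated prior $\rho$ with $\STR/\OPT \le 2/\sqrt n$: indeed the SR gap depends only on $f$ and the action sets, and any monotone submodular $f$ supports valid utility functions (for a fixed ordering of players put $v_i(a)=f(\{a_1,\dots,a_i\})-f(\{a_1,\dots,a_{i-1}\})$; then $\sum_i v_i(a)=f(\{a_1,\dots,a_n\})-f(\emptyset)\le \vSW(a)$, and $v_i(a)=f(a_i\mid\{a_1,\dots,a_{i-1}\})\ge f(a_i\mid\{a_j:j\ne i\})=\vSW(a)-\vSW(\emptyset_i,a_{-i})$ by submodularity, and $v_i\ge 0$ by monotonicity). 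I would aim for $\OPT=\Theta(n)$ and $\STR=O(\sqrt n)$, reverse‑engineering which steps of the proof of \Cref{sr-correlated} must be tight: every player–type pair should have roughly $\sqrt n$ "heavy" candidate actions carrying the bulk of the optimum, and these candidate sets should overlap across players so severely that the union of all of them has size $O(\sqrt n\cdot\STR)$, which pins $\STR$ from above.

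Concretely I would use a coverage / set‑cover style welfare on a universe $U$ with $|U|=\Theta(n)$ together with a correlated prior in which a single global random object — say a uniformly random permutation $\sigma\in S_{\sqrt n}$ or a uniform random partition — is "shared out" among the $n$ players, so that the full type profile $\theta$ determines $\sigma$, and hence an essentially disjoint cover of $U$ of value $\Theta(n)$, while each player's own type is almost uninformative about $\sigma$. The action sets would be $\theta$‑dependent and arranged so that for every realized $\theta$ the players' "$\theta$‑correct" actions tile $U$, whereas for any strategy profile $s$ the chosen pieces $\widetilde{s_i(\theta_i)}$ are forced to overlap so much that $\E_\theta[f(s(\theta))]=O(\sqrt n)$. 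The structural requirement is that the map "type profile $\mapsto$ which action is correct" has no useful low‑information approximation: no single action is correct on more than an $O(1/\sqrt n)$ fraction of type profiles, so any committed choice is off target almost always, and off‑target choices collapse onto a small common portion of $U$.

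The main obstacle is exactly the upper bound $\STR=O(\sqrt n)$ — ruling out all strategy profiles, not merely static ones. A coverage welfare is generous to strategies: any fixed convention that tiles $U$, even one ignoring the types, already yields $\Omega(n)$, so the action sets must simultaneously satisfy (i) $\bigcap_{\theta_i}A_i^{\theta_i}$ is too impoverished to support any static tiling, (ii) even a selection adapting to a player's own type cannot avoid large pairwise overlaps, and (iii) the per‑$\theta$ disjoint cover is nevertheless preserved. Threading these three constraints is the delicate point, and it is here that the correlation of $\rho$ is indispensable: by \Cref{sr-independent} an independent prior cannot drive the gap below $1-1/\rme$, so the example must genuinely use the correlation to destroy every decentralized approximation of the optimal tiling. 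Once such an instance is constructed, one verifies $\OPT=\Theta(n)$ and $\STR=O(\sqrt n)$ and tunes the absolute constants so that the ratio is at most $2/\sqrt n$.
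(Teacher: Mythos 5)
Your general plan points in the right direction --- a coverage-type welfare, a correlated prior driven by a hidden global random index that the full type profile reveals but each individual type does not, and action sets arranged so that ``off-target'' choices collapse onto few elements --- and your preliminary reduction (that any monotone submodular $f$ can be equipped with utilities $v_i(a)=f(a_i\mid\{a_1,\dots,a_{i-1}\})$ satisfying the total utility and marginal contribution conditions, so only the welfare structure matters for the SR gap) is correct. However, the proposal stops exactly where the proposition begins: the statement is an existence claim, and you never exhibit an instance. You list the three constraints that must be threaded simultaneously, declare this ``the delicate point,'' and then write ``once such an instance is constructed, one verifies $\OPT=\Theta(n)$ and $\STR=O(\sqrt n)$.'' That verification-to-be-done \emph{is} the proof; as written there is no construction to check and no argument that the constraints can in fact be met, so the proposal has a genuine gap.

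For comparison, the paper resolves your constraints with a concrete instance in which your feared tension does not arise. Types are vectors in $[n]^k$; the prior draws a hidden coordinate $j\in[k]$ and base values $\ell_1,\dots,\ell_k$, and hands the $n$ players the $n$ types obtained by varying only coordinate $j$. A player of type $\ell$ has exactly $k$ actions, the $h$th of which covers the single element $(h,\ell_h)$ of the universe $U=[k]\times[n]$. Thus the per-$\theta$ optimal cover (everyone plays coordinate $j$) has value $n$, while for any strategy profile each player can hit the hidden coordinate in at most one of the $k$ equally likely roles of $j$, contributing at most $n/k$ in expectation, and all non-$j$ actions of all players collapse onto at most $k$ shared elements; hence $\STR\le n/k+k=2\sqrt n$ at $k=\sqrt n$. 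Note in particular that your worry (i) about a ``static tiling'' ignoring types is dissolved not by impoverishing $\bigcap_{\theta_i}A_i^{\theta_i}$ but by making the covered element a function of the player's type coordinate, so no convention can spread out without knowing $j$. Supplying an explicit construction of this kind, together with the two-line accounting of $\OPT$ and $\STR$, is what your write-up is missing.
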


\begin{proof}
Suppose that the set of types is indexed by $k$-dimensional integral vectors.
Formally, we define $\Theta_i = L$ for each $i \in N$, where $L = [n]^k$.
We define the prior distribution $\rho$ as follows.
First, we sample $j \in [k]$ and $\ell_1,\ell_2,\dots,\ell_k \in [n]$ uniformly at random.
Then, we assign $n$ types $\{ (\ell_1,\dots,\ell_{j-1},i',\ell_{j+1},\dots,\ell_k) \mid i' \in [n] \}$ to $n$ players using a uniformly random permutation.

Each player with type $\ell \in L$ has $k$ actions, denoted by $A_i^\ell = \{v_1^\ell,\dots,v_k^\ell\}$.
The social welfare function is defined as a coverage function over the universe $U$, where $U = [k] \times [n]$.
The $h$th action $v_h^\ell$ covers $(h, \ell_h) \in U$.
Since the players' types are almost identical except for the $j$th entry, if multiple players choose an action with the same index other than $j$, their total contribution to the social welfare is only $1$.

The optimal social welfare is $n$, which is achieved when every player always chooses the $j$th action.
On the other hand, in an optimal strategy profile, each player does not know the other players' types and therefore cannot always choose the $j$th action.
Each player with type $\ell \in [n]^k$ appears in $k$ type profiles, each corresponding to different $j \in [k]$, but they can play the $j$th action only in one of these $k$ type profiles.
Hence, the expected social welfare obtained by each player's $j$th action is at most $1/k$, and the total contribution from all players' $j$th actions is at most $n/k$.
For each type profile, the contribution to the social welfare from actions other than the $j$th action is at most $k$.
Thus, the social welfare value achieved by an optimal strategy profile is at most $n/k + k$.
In the case of $k = \sqrt{n}$, the SR gap is at most $2 / \sqrt{n}$.
\end{proof}

\section{Price of anarchy in Bayesian valid utility games}

This section provides lower and upper bounds on the PoA in Bayesian valid utility games for various concepts of Bayes (coarse) correlated equilibria.
In \Cref{subsection-poa-cbs}, we derive lower bounds on the PoA for SFCBSs by combining the SR gap bounds and the smoothness arguments: $(1-1/\rme)/2$ in the independent case and $\Theta(1/\sqrt{n})$ in the correlated case.
In \Cref{subsection-poa-sfcce} and \Cref{subsection-poa-comeq}, we improve the lower bounds in the independent case to $1/2$ for SFCCEs and communication equilibria, respectively.
In \Cref{subsection-poa-bs}, we establish an upper bound of $0.441$ for Bayesian solutions when the prior distribution is independent.

\subsection{For strategic-form coarse Bayesian solutions}
\label{subsection-poa-cbs}
To prove the PoA lower bounds, we prepare the following lemma that compares an equilibrium with an optimal strategy profile.
The proof is a straightforward application of the smoothness argument and is deferred to \Cref{proofs}.

\begin{restatable}{lemma}{poacbs}
\label{poa-cbs}
In any Bayesian valid utility game, any strategic-form coarse Bayesian solution $\pi \in \prod_{\theta \in \Theta} \Delta(A^\theta)$ satisfies 
\begin{equation*}
\E_{\theta \sim \rho} \left[ \E_{a \sim \pi(\theta)} \left[ \vSW(a) \right] \right] 
\ge
\frac{1}{2}
\max_{s \in S} \E_{\theta \sim \rho} \left[ \vSW(s(\theta)) \right]
\end{equation*}
\end{restatable}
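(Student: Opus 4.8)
The plan is to run the standard Vetta-type smoothness argument, now driven by the SFCBS deviation condition rather than by a Nash or coarse-correlated one. Fix an optimal strategy profile $s^* \in \argmax_{s \in S} \E_{\theta \sim \rho}[\vSW(s(\theta))]$. For each player $i \in N$, its component $s_i^*$ is an admissible unilateral deviation in the sense of the SFCBS definition (\Cref{def:com}), since $s_i^* \in S_i$ is by construction a map sending each $\theta_i$ to an action in $A_i^{\theta_i}$. Applying the SFCBS inequality with this deviation gives, for every $i$,
\begin{equation*}
\E_{\theta \sim \rho}\!\left[\E_{a \sim \pi(\theta)}[v_i(a)]\right] \;\ge\; \E_{\theta \sim \rho}\!\left[\E_{a \sim \pi(\theta)}\bigl[v_i(s_i^*(\theta_i), a_{-i})\bigr]\right].
\end{equation*}

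Next I would lower-bound the right-hand side with the marginal contribution condition $v_i(s_i^*(\theta_i), a_{-i}) \ge \vSW(s_i^*(\theta_i), a_{-i}) - \vSW(\emptyset_i, a_{-i})$ (legitimate because $(s_i^*(\theta_i), a_{-i}) \in A^\theta$), sum over $i \in N$, and then, for each fixed type profile $\theta$ and recommended profile $a$, apply the key submodularity step: write each summand as the marginal $f\bigl(s_i^*(\theta_i) \mid \{a_j : j \ne i\}\bigr)$, use submodularity to enlarge the conditioning set to $\{a_1,\dots,a_n\} \cup \{s_1^*(\theta_1),\dots,s_{i-1}^*(\theta_{i-1})\}$, telescope over $i$, and finish with monotonicity to obtain $\sum_{i \in N}\bigl(\vSW(s_i^*(\theta_i), a_{-i}) - \vSW(\emptyset_i, a_{-i})\bigr) \ge \vSW(s^*(\theta)) - \vSW(a)$. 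Taking expectations over $\theta \sim \rho$ and $a \sim \pi(\theta)$ yields $\sum_{i \in N} \E_{\theta}\E_{a}[v_i(a)] \ge \E_{\theta}[\vSW(s^*(\theta))] - \E_{\theta}\E_{a}[\vSW(a)]$.

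Finally, I would combine this with the total utility condition $\vSW(a) \ge \sum_{i \in N} v_i(a)$, taken in expectation over $\theta$ and $a$, to get $\E_{\theta}\E_{a}[\vSW(a)] \ge \E_{\theta}[\vSW(s^*(\theta))] - \E_{\theta}\E_{a}[\vSW(a)]$, i.e.\ $2\,\E_{\theta}\E_{a}[\vSW(a)] \ge \max_{s \in S}\E_{\theta}[\vSW(s(\theta))]$, which is exactly the claimed bound.

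The argument is essentially routine; the only points needing care are (a) verifying that $s_i^*$ qualifies as an SFCBS deviation, which it does since strategies in $S_i$ are precisely the type-respecting action maps, and (b) the submodular telescoping inequality, which I expect to be the most delicate part to write cleanly, though it is insensitive to coincidences among the actions $a_i$ and $s_j^*(\theta_j)$ because $f(X+u)-f(X)=0$ whenever $u \in X$. Since SFCBS is the largest of the mediator-based equilibrium classes in \Cref{figure}, the same estimate — and hence the same factor of $1/2$ relative to the optimal strategy profile — transfers automatically to communication equilibria, Bayesian solutions, and the other sub-concepts.
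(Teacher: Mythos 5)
Your proposal is correct and follows essentially the same route as the paper's own proof: deviate to the components of an optimal strategy profile $s^*$ in the SFCBS condition, apply the marginal contribution condition, telescope via submodularity and monotonicity to get $\E[\vSW(s^*(\theta))] - \E[\vSW(a)]$, and close with the total utility condition to extract the factor $1/2$. The two points you flag (admissibility of $s_i^*$ as a deviation, and the telescoping step) are handled exactly as you describe in the paper's argument, so there is nothing to add.
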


By combining this lemma with the lower bounds on the strategy representability gap (\Cref{sr-independent} and \Cref{sr-correlated}), we can immediately obtain the PoA lower bounds.

\begin{corollary}
For Bayesian valid utility games with $n$ players, the price of anarchy for SFCBSs is at least $\frac{1-1/\rme}{2}$ and $\displaystyle \Omega(1/\sqrt{n})$ in the independent and correlated cases, respectively.
\end{corollary}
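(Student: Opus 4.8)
The plan is to chain together the two ingredients already established: the smoothness-type bound of \Cref{poa-cbs}, which lower-bounds the social welfare of any strategic-form coarse Bayesian solution by half the welfare of an optimal strategy profile, and the lower bounds on the strategy representability gap from \Cref{sr-independent} and \Cref{sr-correlated}, which relate the optimal strategy-profile welfare to the genuinely optimal welfare $\E_{\theta \sim \rho}\left[\max_{a \in A^\theta}\vSW(a)\right]$. Since the PoA is by definition the infimum over SFCBSs of the ratio of equilibrium welfare to optimal welfare, composing these two multiplicative guarantees immediately yields the claim.

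Concretely, I would fix an arbitrary SFCBS $\pi \in \prod_{\theta \in \Theta}\Delta(A^\theta)$ and write, using \Cref{poa-cbs},
\begin{equation*}
\E_{\theta \sim \rho}\left[\E_{a \sim \pi(\theta)}\left[\vSW(a)\right]\right]
\ge \frac{1}{2}\max_{s \in S}\E_{\theta \sim \rho}\left[\vSW(s(\theta))\right].
\end{equation*}
In the independent case, \Cref{sr-independent} gives $\max_{s \in S}\E_{\theta \sim \rho}\left[\vSW(s(\theta))\right] \ge (1-1/\rme)\,\E_{\theta \sim \rho}\left[\max_{a \in A^\theta}\vSW(a)\right]$; dividing both sides by $\E_{\theta \sim \rho}\left[\max_{a \in A^\theta}\vSW(a)\right]$ and taking the infimum over $\pi$ yields $\PoA \ge (1-1/\rme)/2$. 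In the correlated case, \Cref{sr-correlated} gives the factor $\frac{1}{(2+\frac{1}{1-1/\rme})\sqrt{n}}$ in place of $1-1/\rme$, so the same argument produces $\PoA \ge \frac{1}{2(2+\frac{1}{1-1/\rme})\sqrt{n}} = \Omega(1/\sqrt{n})$.

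There is no substantive obstacle here — the statement is a direct corollary, and all the work has been done in \Cref{poa-cbs}, \Cref{sr-independent}, and \Cref{sr-correlated}. The only point requiring a word of care is that \Cref{poa-cbs} is phrased in terms of the optimal strategy profile rather than the optimal action profile, so one must invoke the SR gap bound precisely to bridge that gap; since the SR gap depends only on the social welfare function and not on the players' individual utilities (as noted after the definition of the SR gap), it applies verbatim to Bayesian valid utility games.
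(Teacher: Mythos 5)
Your proposal is correct and matches the paper's argument exactly: the paper also obtains this corollary by composing \Cref{poa-cbs} with the SR gap lower bounds of \Cref{sr-independent} and \Cref{sr-correlated}. Nothing further is needed.
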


Next, we provide a trivial upper bound on the PoA derived from the SR gap.
This upper bound holds for all equilibrium concepts that generalize Bayes--Nash equilibria.

\begin{proposition}
\label{poa-ub-sr-gap}
In any Bayesian game, the price of anarchy for Bayes--Nash equilibria is at most the strategy representability gap.
\end{proposition}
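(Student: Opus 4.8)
The plan is to use the elementary fact that the expected social welfare of any distribution over strategy profiles is at most that of the best pure strategy profile, together with the existence of a Bayes--Nash equilibrium.

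First I would verify that $\SigmaBNE$ is nonempty, which is the only point requiring any real input. Since every $S_i = \prod_{\theta_i \in \Theta_i} A_i^{\theta_i}$ is finite, Nash's theorem applied to the normal-form game with player set $N$, pure strategy sets $(S_i)_{i \in N}$, and payoffs $u_i(s) = \E_{\theta \sim \rho}[v_i(s(\theta))]$ yields a product distribution $\sigma = \sigma_1 \times \cdots \times \sigma_n \in \Delta(S)$ admitting no profitable unilateral deviation to any $s'_i \in S_i$; by the definitions in \Cref{concepts} this is exactly a Bayes--Nash equilibrium.

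Next, for an arbitrary $\sigma \in \Delta(S)$ I would swap the (finite) expectations and bound an average by a maximum:
\begin{equation*}
\E_{\theta \sim \rho}\!\left[\E_{s \sim \sigma}\!\left[\vSW(s(\theta))\right]\right]
= \E_{s \sim \sigma}\!\left[\E_{\theta \sim \rho}\!\left[\vSW(s(\theta))\right]\right]
\le \max_{s \in S} \E_{\theta \sim \rho}\!\left[\vSW(s(\theta))\right].
\end{equation*}
Because $\SigmaBNE \neq \emptyset$, taking the infimum over $\sigma \in \SigmaBNE$ on the left preserves the inequality, so the numerator appearing in the definition of $\PoA_{\SigmaBNE}$ is at most the numerator appearing in the definition of the strategy representability gap. (The same computation shows that this bound in fact holds for every equilibrium class that is a subset of $\Delta(S)$ and contains a Bayes--Nash equilibrium, but only the $\SigmaBNE$ case is needed here.)

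Finally, assuming the game is nondegenerate so that $\E_{\theta \sim \rho}[\max_{a \in A^\theta}\vSW(a)] > 0$ (otherwise $\vSW$ vanishes on all feasible profiles and the statement is vacuous), dividing both sides by this common denominator gives $\PoA_{\SigmaBNE} \le$ strategy representability gap. I do not anticipate any genuine obstacle; the only thing to be careful about is not to skip the existence step, since the claim would fail if $\SigmaBNE$ were empty and the infimum were $+\infty$.
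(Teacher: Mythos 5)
Your proof is correct and follows essentially the same route as the paper: a Bayes--Nash equilibrium is a distribution over strategy profiles, so its expected welfare is bounded by that of an optimal strategy profile, and dividing by $\E_{\theta \sim \rho}[\max_{a \in A^\theta} \vSW(a)]$ gives the bound by the SR gap. Your extra step establishing nonemptiness of the Bayes--Nash equilibrium set via Nash's theorem on the (finite) strategic form is a careful addition the paper leaves implicit, but it does not change the argument.
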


\begin{proof}
Since every Bayes--Nash equilibrium can be expressed as a distribution over strategy profiles, its social welfare value is at most the value achieved by an optimal strategy profile.
By the definition of the SR gap, there exists a Bayesian game in which the social welfare achieved by an optimal strategy profile is at most the SR gap multiplied by the optimal social welfare.
Therefore, the PoA for Bayes--Nash equilibria in this game is at most the SR gap.
\end{proof}

From this proposition, we obtain the PoA upper bounds for Bayesian valid utility games.

\begin{corollary}
There exist $n$-player Bayesian valid utility games with independent and correlated prior distributions in which the price of anarchy for Bayes--Nash equilibria is at most $1/2$ and $\displaystyle O(1/ \sqrt{n} )$, respectively.
\end{corollary}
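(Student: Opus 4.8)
The statement splits into the independent and the correlated case, and in both cases the plan is to invoke \Cref{poa-ub-sr-gap}, which bounds the price of anarchy for Bayes--Nash equilibria in \emph{any} Bayesian game by that game's strategy representability gap. For the correlated case I would simply take the Bayesian valid utility game built at the end of \Cref{sec:sr-correlated} (the $k$-dimensional coverage instance with $k=\sqrt{n}$), whose strategy representability gap is at most $2/\sqrt{n}$; applying \Cref{poa-ub-sr-gap} to this game yields a PoA for Bayes--Nash equilibria of at most $2/\sqrt{n}=O(1/\sqrt{n})$, and there is nothing more to do.

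For the independent case the SR-gap route is not enough: combining \Cref{poa-ub-sr-gap} with \Cref{sr-gap-independent-ub} gives only the bound $1-1/\rme$, which exceeds $1/2$. Instead I would appeal to the classical tight example of \citet{Vet}, a (non-Bayesian) valid utility game whose price of anarchy for mixed Nash equilibria equals $1/2$. Under the conventions of \Cref{concepts} a valid utility game is precisely the degenerate Bayesian valid utility game in which each $\Theta_i$ is a singleton, and the corresponding point-mass prior is trivially a product distribution, so this instance qualifies as having an independent prior. In this degenerate setting a strategy coincides with an action, a distribution over strategy profiles with independent marginals is exactly a mixed strategy profile, and the Bayes--Nash equilibrium condition reduces to the mixed Nash equilibrium condition; hence the PoA for Bayes--Nash equilibria equals $1/2$ in this game. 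These constructions exist for infinitely many $n$ (e.g.\ every perfect square $n$ in the correlated case), which is all the statement asks for.

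The only genuinely non-mechanical point is this asymmetry between the two cases: the $1/2$ bound cannot be pushed through the strategy representability gap, so the argument must fall back on the non-Bayesian lower-bound construction, and one must check that embedding it as a Bayesian valid utility game with an independent prior is legitimate --- which the definitions already allow. If one prefers a self-contained treatment over citing \citet{Vet}, the alternative is to reproduce a small market-sharing or competitive-facility-location instance and verify the total-utility condition, the marginal-contribution condition, and the matching PoA bound by hand; this is routine but adds length without adding insight.
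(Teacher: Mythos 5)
Your proposal is correct, and for the correlated case it is exactly the paper's derivation: apply \Cref{poa-ub-sr-gap} to the coverage instance at the end of \Cref{sec:sr-correlated}, whose strategy representability gap is at most $2/\sqrt{n}$. For the independent case you depart from the corollary's literal framing (``from this proposition''), and you are right to do so: by \Cref{sr-independent} every independent-prior Bayesian valid utility game has SR gap at least $1-1/\rme$, so the route through \Cref{poa-ub-sr-gap} can certify nothing stronger than $1-1/\rme$, and the tight example of \Cref{sr-gap-independent-ub} only yields that weaker constant. The $1/2$ has to come from the tight complete-information instance of \citet{Vet}, viewed as a degenerate Bayesian valid utility game with singleton type sets (hence a trivially independent, product prior), in which the paper's notion of Bayes--Nash equilibrium reduces to a mixed Nash equilibrium --- precisely the fact the paper itself invokes in \Cref{subsection-poa-sfcce} to claim tightness of the $1/2$ lower bound. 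So your argument makes explicit what the paper leaves implicit, and your observation that the SR-gap route cannot reach $1/2$ under independence is accurate. Two cosmetic points remain: in the correlated case one should either assume $\sqrt{n}$ is an integer or pad with dummy players, and in the independent case one similarly pads Vetta's instance with single-action dummy players to make it an $n$-player game for every $n$; neither affects the bounds.
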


\subsection{For strategic-form coarse correlated equilibria}
\label{subsection-poa-sfcce}

We provide an improved PoA lower bound for SFCCEs, a subset of SFCBSs, under the assumption of independent priors.
The proof technique is a straightforward extension of the method developed by \citet*{Rou15} and \citet*{Syr12}.
However, to our knowledge, no one has provided an explicit proof, even for Bayes--Nash equilibria.
The proof is deferred to \Cref{proofs}.

\begin{restatable}{proposition}{poasfcce}
\label{poa-sfcce}
For any Bayesian valid utility game with an independent prior distribution, the price of anarchy for strategic-form coarse correlated equilibria is at least $1/2$.
\end{restatable}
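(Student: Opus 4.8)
The plan is to run the Bayesian smoothness argument of \citet{Rou15} and \citet{Syr12} for valid utility games. Fix an SFCCE $\sigma\in\Delta(S)$ and, for every type profile $\xi\in\Theta$, an optimal action profile $a^\xi\in\argmax_{a\in A^\xi}\vSW(a)$, so that $\OPT=\E_{\xi\sim\rho}[\vSW(a^\xi)]$. For each player $i$ I would use the randomized deviation $s'_i$ defined as follows: upon observing the true type $\theta_i$, draw a fresh type profile $w\sim\rho$ and play $a_i^{(\theta_i,w_{-i})}$, the $i$th coordinate of the optimal profile for the type profile that keeps $i$'s type equal to $\theta_i$ and replaces everyone else's types by the resampled $w_{-i}$. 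This action lies in $A_i^{\theta_i}$, so $s'_i\in S_i$ is a legal deviation, and the product structure of $\rho$ makes $(\theta_i,w_{-i})$ itself a genuine draw from $\rho$ conditioned on coordinate $i$ equalling $\theta_i$, so that from player $i$'s viewpoint the resampled profile is statistically indistinguishable from the true one.

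The argument then has three moves. First, substitute $s'_i$ into the SFCCE inequality for each $i$, average over the resampling $w$ (valid by linearity, as $w$ is drawn independently of $(\theta,s)$), and sum over $i\in N$; applying the total utility condition to the left-hand side yields $\E_\theta\E_s[\vSW(s(\theta))]\ge\sum_{i\in N}\E_{\theta,s,w}[v_i(a_i^{(\theta_i,w_{-i})},s_{-i}(\theta_{-i}))]$. Second, bound each summand from below by the marginal contribution condition, $v_i(a_i^{(\theta_i,w_{-i})},s_{-i}(\theta_{-i}))\ge f(a_i^{(\theta_i,w_{-i})}\mid\{s_j(\theta_j):j\ne i\})$, and then combine these marginal gains using monotonicity and submodularity of $f$ (a telescoping over the players, as in \Cref{poa-cbs}). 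Third, combine the two estimates to reach $2\,\E_\theta\E_s[\vSW(s(\theta))]\ge\OPT$, which is the claimed price-of-anarchy bound.

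The main obstacle — and the only nonroutine step — is the second move: one must make the benchmark that emerges from the telescoping equal to the full optimum $\OPT$ (the one allowed to depend on the entire type profile), rather than merely the best strategy-representable value $\STR=\max_{s\in S}\E_\theta[\vSW(s(\theta))]$. A direct telescoping of the marginal gains only certifies a benchmark of $\STR$, which via \Cref{sr-independent} is at least $(1-1/\rme)\OPT$ and would reproduce only the weaker SFCBS bound $(1-1/\rme)/2$. Closing this gap is exactly where independence of the prior, the submodular correlation-gap and negative-correlation tools (\Cref{correlation-gap}, \Cref{nc-gap}), and the strategy-representability of SFCCEs — the fact that the other players' actions $s_{-i}(\theta_{-i})$ are independent of the deviator's type $\theta_i$, in contrast to a general SFCBS — must all be brought to bear. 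Throughout, care is also needed to ensure that every deviation considered respects the type-dependent action sets $A_i^{\theta_i}$.
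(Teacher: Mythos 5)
Your setup --- the randomized deviation $s'_i(\theta_i)=a_i^{(\theta_i,w_{-i})}$ with a freshly resampled $w\sim\rho$ --- is exactly the paper's deviation, but the proposal is not a complete proof: you stop at what you yourself call the main obstacle, and the tools you propose for closing it are not the ones that work. The missing step is an exchange-of-variables argument, not a correlation-gap argument. After substituting $s'_i$ into the SFCCE inequality and averaging over $w$, the right-hand side is $\E_{\theta,w,s}\left[ v_i\left(a_i^{(\theta_i,w_{-i})}, s_{-i}(\theta_{-i})\right)\right]$. Because $\rho$ is a product distribution and both $s\sim\sigma$ and $w$ are drawn independently of $\theta$, the variables $\theta_{-i}$ and $w_{-i}$ are i.i.d.\ and independent of everything else in the expression, so they may be swapped: the expectation equals $\E_{\theta,w,s}\left[ v_i\left(a_i^{\theta}, s_{-i}(w_{-i})\right)\right]$. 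Now the deviator's action is the $i$-th coordinate of the true optimum for the \emph{full} type profile $\theta$, while the opponents' actions $s_{-i}(w_{-i})$ are evaluated at the independent resampled types. The marginal contribution condition, submodularity, and monotonicity then telescope over $i$ against the common background set $\{s_1(w_1),\dots,s_n(w_n)\}$, giving in expectation $\sum_{i\in N} v_i(a_i^\theta, s_{-i}(w_{-i})) \ge f(\{a_1^\theta,\dots,a_n^\theta\}) - f(\{s_1(w_1),\dots,s_n(w_n)\})$; the subtracted term has expectation equal to the equilibrium welfare because $w$ and $\theta$ are identically distributed, and rearranging yields the factor $1/2$ against the full benchmark $\OPT$ --- the quantity $\STR$ never enters.

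By contrast, the route you sketch for closing the gap --- bringing in \Cref{correlation-gap} and \Cref{nc-gap} to lift the telescoped benchmark from $\STR$ to $\OPT$ --- cannot give $1/2$: those tools lose a $(1-1/\rme)$ factor and would only reproduce the SFCBS bound $(1-1/\rme)/2$ that you correctly want to beat. Your instinct that independence of the prior and the independence of $s_{-i}(\theta_{-i})$ from $\theta_i$ are the essential ingredients is right, but they enter through the swap above (which is exactly what fails for SFCBSs, where the recommendation is coupled to the entire type profile), not through correlation-gap machinery. As written, the proposal identifies the correct deviation and the correct obstacle but does not supply the argument that overcomes it.
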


This lower bound applies to other equilibrium concepts that are a subset of SFCCEs.
Since the PoA for (non-Bayesian) valid utility games can be $1/2$ even for mixed Nash equilibria \citep*{Vet}, this lower bound is tight for all equilibrium concepts between Bayes--Nash equilibria and SFCCEs.

\subsection{For communication equilibria}
\label{subsection-poa-comeq}
We provide a lower bound on the price of anarchy for communication equilibria in the independent case by extending the smoothness argument.
The proof of \Cref{poa-sfcce} and the existing smoothness arguments for Bayesian games \citep{Syr12,Roughgarden15,ST13,Fujii23} require that an equilibrium be a distribution over strategy profiles (strategy representability) to handle deviations independently of the type profile.
We bypass this barrier by using incentive constraints for untruthful type reporting.

\begin{proposition}
For any Bayesian valid utility game with an independent prior distribution, the price of anarchy for communication equilibria is at least $1/2$.
\end{proposition}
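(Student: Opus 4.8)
The plan is to rerun the smoothness argument behind \Cref{poa-sfcce}, replacing the strategy-representability of the equilibrium — which is what makes that argument go through for SFCCEs — by the incentive constraints of a communication equilibrium against \emph{untruthful type reporting}. First I would record the deterministic inequality underlying Vetta's bound: in any valid utility game, for every $b \in A$ and every tuple of actions $c_1,\dots,c_n$ with $c_i \in A_i$,
\[
\sum_{i \in N} v_i(c_i, b_{-i}) \;\ge\; f(\{c_1,\dots,c_n\}) - \vSW(b),
\]
which follows by applying the marginal contribution condition to each $v_i(c_i,b_{-i})$, using submodularity to enlarge the base set $\{b_j : j \neq i\}$ to $\{b_1,\dots,b_n\}\cup\{c_1,\dots,c_{i-1}\}$, telescoping the sum, and finishing with monotonicity. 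Combined with the total utility condition $\sum_i v_i(a) \le \vSW(a)$, this reduces the claim to producing, for each player $i$, a (possibly randomized) communication-equilibrium deviation whose deviation payoffs sum to at least $\OPT - V$, where $V = \E_{\theta \sim \rho}\E_{a\sim\pi(\theta)}[\vSW(a)]$; then $V \ge \sum_i \E\E[v_i(a)] \ge \OPT - V$ gives the bound.

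The deviation I would use is the following: player $i$ with true type $\theta_i$ reports a \emph{fresh} type $\theta'_i$ sampled independently from the marginal $\rho_i$, and replaces the recommendation $a_i \in A_i^{\theta'_i}$ by a relabeled action $\phi(a_i)\in A_i^{\theta_i}$. Averaging the untruthful-reporting constraints of \Cref{def:com} over $\theta'_i\sim\rho_i$ (with $\phi$ allowed to depend on both $\theta'_i$ and $\theta_i$), multiplying by $\rho_i(\theta_i)$ and summing over $\theta_i$, and using that the prior is a product — so that $(\theta'_i,\theta_{-i})$ is again distributed as a fresh draw $\hat\theta\sim\rho$, independent of $\theta_i$, while $\rho|_{\theta_i}=\rho_{-i}$ — I would obtain an inequality of the shape
\[
\E_{\theta\sim\rho}\E_{a\sim\pi(\theta)}[v_i(a)] \;\ge\; \E_{\hat\theta\sim\rho}\,\E_{a\sim\pi(\hat\theta)}\,\E_{\theta_i\sim\rho_i}\big[v_i(\phi(a_i),\,a_{-i})\big].
\]
Here the recommendation profile $a\sim\pi(\hat\theta)$ plays exactly the role that a freshly resampled strategy profile plays in the proof of \Cref{poa-sfcce}, and crucially the opponents' recommendations $a_{-i}$ are still generated from their true types $\theta_{-i}$. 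Summing over $i$, applying the displayed smoothness inequality with the common base profile $b=a$, and taking the expectation reduces everything to lower-bounding $\E_{\hat\theta}\E_{a\sim\pi(\hat\theta)}\E_{(\theta_i)_i}[f(\{\phi(a_i):i\in N\})]$ by $\OPT$.

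The main obstacle is precisely that last step. A communication equilibrium is not a distribution over strategy profiles, so one cannot "swap" the true and the hypothetically sampled opponent types as in \Cref{poa-sfcce}; the fresh-report device is what restores the required independence, but it must be arranged so that the per-player relabelings $\phi$ can be coupled to a \emph{single} optimal action profile. If one instead lets each player relabel obliviously to an independently drawn optimal action, one only recovers $\E_{X\sim\hat\pi}[f(X)]\ge(1-1/\rme)\OPT$ (exactly the quantity from the proof of \Cref{sr-independent}), giving only $(1-1/\rme)/2$; squeezing out the full $\OPT$ requires using the correlation between $a_{-i}$ and $\hat\theta$ (which the mediator's knowledge of the reported profile provides) to synchronize the relabelings, and carefully checking that every resulting $\phi$ is a legitimate map into $A_i^{\theta_i}$. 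The remaining work — the change of variables, the measure-theoretic re-gluing of $(\theta'_i,\theta_{-i})$, and the invocations of the total utility and marginal contribution conditions and submodularity of $f$ — is routine bookkeeping.
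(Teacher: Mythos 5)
Your setup is sound as far as it goes, but the proof is not complete: the step you yourself flag as ``the main obstacle'' is exactly the crux, and the route you sketch for it would not work. In your framing, player $i$'s deviation is a relabeling $\phi$ of the recommendation, depending only on $(\theta_i,\theta'_i,a_i)$. After your (correct) averaging over a fresh report $\theta'_i\sim\rho_i$, the deviating action is conditionally independent of the true opponent types $\theta_{-i}$ given $\theta_i$: the map cannot see $a_{-i}$ or $\hat\theta_{-i}$, and the recommendation $a_i$ need not carry any usable information about $\theta_{-i}$ (e.g.\ when $\pi$ is uninformative). Hence there is no way to ``synchronize the relabelings'' so that the deviation profile becomes the optimal profile $a^\theta$ for the true $\theta$; as you note, the best an oblivious relabeling can guarantee is the correlation-gap bound, i.e.\ only $(1-1/\rme)/2$. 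Moreover, summing your per-player inequalities does not give a common base profile: for player $i$ the recommendation is drawn from $\pi(\theta'_i,\theta_{-i})$, which differs across $i$, so the telescoping smoothness step does not close to the equilibrium welfare $V$ without further work.

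The paper's proof resolves both issues with a different choice of deviation: it takes $\phi$ to be the \emph{constant} map $\phi(\cdot)=a_i^{\theta_i,\theta'_{-i}}$, where $\theta'_{-i}$ is a hypothetical opponent type profile fixed by (in effect, sampled by) the deviator, in addition to the fresh report $\theta'_i$; the recommendation is ignored entirely. Averaging over $\theta'=(\theta'_i,\theta'_{-i})\sim\rho$ and then \emph{swapping} $\theta_{-i}$ with $\theta'_{-i}$ --- legitimate because the prior is a product, so both are $\rho_{-i}$-distributed and independent of everything else --- simultaneously makes the base profile $a\sim\pi(\theta')$ common to all players and independent of $\theta$, and turns every deviation into $a_i^{\theta}$, i.e.\ the components of a single optimal profile for the true $\theta$. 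From there the total utility condition, marginal contribution condition, submodularity, and monotonicity give $V\ge\OPT-V$ exactly as in \Cref{poa-sfcce}. So while your fresh-report device and the reduction to the smoothness inequality match the paper's strategy, the missing exchangeability swap with constant, hypothetically-indexed deviations is the essential idea, and without it your argument establishes only the weaker $(1-1/\rme)/2$ bound already available for SFCBSs.
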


\begin{proof}
Let $\pi \in \prod_{\theta \in \Theta} \Delta(A^{\theta})$ be any communication equilibrium.
Fix any $\theta'_{-i} \in \Theta_{-i}$.
Let $a^\theta \in \argmax_{a \in A} \vSW(a)$ be an optimal action profile for each $\theta \in \Theta$.
For any $i \in N$ and types $\theta_i,\theta'_i \in \Theta_i$, by considering constant map $\phi(\cdot) = a_i^{\theta_i,\theta'_{-i}}$ in the incentive constraints for communication equilibria, we obtain
\begin{equation*}
\E_{\theta_{-i} \sim \rho|_{\theta_i}} \left[ \E_{a \sim \pi(\theta)} \left[ v_i(a) \right] \right]
\ge
\E_{\theta_{-i} \sim \rho|_{\theta_i}} \left[ \E_{a \sim \pi(\theta'_i,\theta_{-i})} \left[ v_i(a_i^{\theta_i,\theta'_{-i}}, a_{-i}) \right] \right].
\end{equation*}
Assume that $\theta' \sim \rho$ is a random variable independent of $\theta_{-i} \sim \rho|_{\theta_i}$.
By taking the expectation of this inequality for $\theta_i \sim \rho_i$ and $\theta' \sim \rho$, we obtain
\begin{equation*}
\E_{\theta \sim \rho} \left[ \E_{a \sim \pi(\theta)} \left[ v_i(a) \right] \right]
\ge
\E_{\theta' \sim \rho} \left[ \E_{\theta \sim \rho} \left[ \E_{a \sim \pi(\theta'_i,\theta_{-i})} \left[ v_i(a_i^{\theta_i,\theta'_{-i}}, a_{-i}) \right] \right] \right].
\end{equation*}
Since $\rho$ is a product distribution, we can swap random variables $\theta_{-i}$ and $\theta'_{-i}$ on the right-hand side.
Then we obtain
\begin{equation*}
\E_{\theta \sim \rho} \left[ \E_{a \sim \pi(\theta)} \left[ v_i(a) \right] \right]
\ge
\E_{\theta' \sim \rho} \left[ \E_{\theta \sim \rho} \left[ \E_{a \sim \pi(\theta')} \left[ v_i(a_i^{\theta}, a_{-i}) \right] \right] \right].
\end{equation*}
By taking the summation of this inequality for all $i \in N$, we obtain
\begin{align*}
&\E_{\theta \sim \rho} \left[ \E_{a \sim \pi(\theta)} \left[ \vSW(a) \right] \right]\\
&\ge
\sum_{i \in N} \E_{\theta \sim \rho} \left[ \E_{a \sim \pi(\theta)} \left[ v_i(a) \right] \right] \tag{from the total utility condition of valid utility games}\\
&\ge
\sum_{i \in N} \E_{\theta' \sim \rho} \left[ \E_{\theta \sim \rho} \left[ \E_{a \sim \pi(\theta')} \left[ v_i(a_i^{\theta}, a_{-i}) \right] \right] \right]\\
&\ge
\sum_{i \in N} \E_{\theta \sim \rho} \left[ \E_{\theta' \sim \rho} \left[ \E_{a \sim \pi(\theta')} \left[
f \left(a^\theta_i \;\middle|\; \{a_1,\dots,a_{i-1},a_{i+1},\dots,a_n\} \right)
\right] \right] \right] \tag{from the marginal contribution condition of valid utility games}\\
&\ge
\sum_{i \in N} \E_{\theta \sim \rho} \left[ \E_{\theta' \sim \rho} \left[ \E_{a \sim \pi(\theta')} \left[
f \left(a^\theta_i \;\middle|\; \{a_1,\dots,a_n\} \cup \left\{a_1^\theta, \dots, a_{i-1}^\theta\right\} \right)
\right] \right] \right] \tag{from submodularity}\\
&=
\E_{\theta \sim \rho} \left[ \E_{\theta' \sim \rho} \left[ \E_{a \sim \pi(\theta')} \left[
f \left( \left\{a_1^\theta,\dots,a_n^\theta \right\} \;\middle|\; \{a_1,\dots,a_n\}\right)
\right] \right] \right]\\
&\ge
\E_{\theta \sim \rho} \left[ \E_{\theta' \sim \rho} \left[ \E_{a \sim \pi(\theta')} \left[ 
f\left( \left\{a_1^\theta,\dots,a_n^\theta \right\} \right)
-
f( \{a_1,\dots,a_n\} )
\right] \right] \right] \tag{from monotonicity}\\
&=
\E_{\theta \sim \rho} \left[ 
\vSW\left( a^\theta \right)
\right]
-
\E_{\theta' \sim \rho} \left[ \E_{a \sim \pi(\theta')} \left[ 
\vSW(a)
\right] \right].
\end{align*}
Finally, recalling that $a^\theta \in \argmax_{a \in A} \vSW(a)$, we obtain
\begin{equation*}
\E_{\theta \sim \rho} \left[ \E_{a \sim \pi(\theta)} \left[ 
\vSW(a)
\right] \right]
\ge
\frac{1}{2}
\E_{\theta \sim \rho} \left[ 
\max_{a \in A} \vSW\left( a \right)
\right].
\qedhere
\end{equation*}
\end{proof}

\subsection{For Bayesian solutions}
\label{subsection-poa-bs}
The upper bound on the PoA for the independent case can be improved for Bayesian solutions as follows.
This result shows that the PoA of Bayesian solutions is strictly worse than that of SFCCEs and communication equilibria, for which better lower bounds of $1/2$ are proved in the previous two sections.

\begin{proposition}
There exists a Bayesian valid utility game with an independent prior distribution in which the price of anarchy for Bayesian solutions is at most $0.441$.
\end{proposition}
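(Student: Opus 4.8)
The proof is by explicit construction: I exhibit a single Bayesian valid utility game with an independent prior, together with one Bayesian solution whose expected welfare is at most $0.441$ times $\E_{\theta\sim\rho}[\max_{a\in A^\theta}\vSW(a)]$. The social welfare function will be a coverage-type (hence non-negative, monotone, submodular) function, and the action sets will be type-dependent. The design principle is the feature that distinguishes Bayesian solutions from SFCCEs and communication equilibria: a Bayesian solution $\pi\in\prod_\theta\Delta(A^\theta)$ may correlate its recommendation with $\theta_{-i}$, which player $i$ never observes, so a unilateral deviator has only the remaps $\phi\colon A_i^{\theta_i}\to A_i^{\theta_i}$ of the recommended action at its disposal and, unlike in the smoothness argument behind \Cref{poa-sfcce} and the result of \Cref{subsection-poa-comeq}, cannot deviate to the coordinate $a_i^\theta$ of the type-profile-dependent optimum. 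I therefore let the high-welfare configuration be a matching of players to resources that genuinely depends on the whole profile, while one scarce high-value resource can be over-served, so that a malicious mediator can funnel players into a wasteful but locally stable recommendation $b^\theta$ (many players piled on the scarce resource, the rest on resources already covered by $b^\theta$), possibly randomized with the optimal profile $a^\theta$ at a small probability $p$.

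The concrete steps are: (i) specify the action sets, the coverage function, and the prior (a product of uniform type laws); (ii) define $\pi(\theta)$ to be $b^\theta$ (possibly the $p$-mixture just described), with $p$ tuned to the incentive constraints; (iii) choose the individual utilities — here I use the slack available in \emph{valid} utility games, taking $v_i$ as large as the total utility condition $\sum_i v_i\le\vSW$ permits while respecting the marginal contribution condition — so that following the recommendation weakly dominates every remap; (iv) verify the Bayesian-solution inequalities, which, since it suffices to consider single-action remaps, amounts to checking for each $i$, each $\theta_i$, and each pair $a_i,c\in A_i^{\theta_i}$ that responding to the recommendation $a_i$ by $c$ is not profitable in expectation over $\theta_{-i}\sim\rho|_{\theta_i}$ (which is just the unconditional law of $\theta_{-i}$ for a product prior); and (v) evaluate $\E_\theta[\vSW(\pi(\theta))]/\E_\theta[\max_a\vSW(a)]$ and optimize the parameters (resource weights, number of types, and $p$) to bring it down to $0.441$.

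The heart of the argument — and the main obstacle — is the tension in steps (iii)–(iv) between wastefulness and incentive compatibility. Because the total utility condition holds pointwise, $\E_\theta[\vSW(b^\theta)]\ge\E_\theta[\sum_i v_i(b^\theta)]$, while incentive compatibility forces each $\E_{\theta_{-i}}[v_i(b^\theta)]$ to dominate the value of player $i$'s best fixed response $c$, which by the marginal contribution condition is essentially $\E_{\theta_{-i}}[f(c\mid b^\theta_{-i})]$. Hence $\E_\theta[\vSW(b^\theta)]$ is bounded below by $\sum_i$ of a \emph{type-profile-independent} best-response marginal, and the gap between this quantity and $\OPT$ is precisely a strategy-representability phenomenon; it is this slack that lets the ratio fall below $1/2$, whereas in the naive ``everyone serves the scarce resource'' example the best fixed response already recovers a coordinate of the optimum and one only gets $1/2$. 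Making the strategy-representability loss and the wastefulness loss bind simultaneously in a single coverage game, and then pushing the resulting one- or two-parameter optimization to $0.441$, is where the work lies; I would also check that the two valid-utility conditions hold for \emph{every} type profile, not merely generic ones, since a Bayesian valid utility game requires one submodular $f$ consistent across all of $\Theta$. As a sanity check, \Cref{poa-cbs} combined with \Cref{sr-independent} shows any such example has PoA at least $(1-1/\rme)/2\approx0.316$, so the construction cannot be improved arbitrarily close to this floor by this approach.
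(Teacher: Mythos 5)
There is a genuine gap: what you have written is a design brief, not a proof. The statement is existential, so the entire content of the proof is a concrete game, a concrete Bayesian solution, a verification of its incentive constraints, and a welfare computation landing at or below $0.441$; your steps (i)--(v) defer all of this, and you yourself flag that making the two losses bind simultaneously and carrying out the optimization ``is where the work lies.'' In particular, no candidate $b^\theta$, no mixing probability $p$, no explicit sharing rule $v_i$, and no number are produced, so nothing certifies that the incentive constraints of a Bayesian solution can actually be satisfied by a profile wasteful enough to beat the $1/2$ barrier. The delicate point you identify --- that incentive compatibility forces each player's conditional payoff to dominate the best \emph{fixed} response, and that only a strategy-representability-type slack can push the ratio below $1/2$ --- is exactly the obstacle, but naming the obstacle is not the same as overcoming it.

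For comparison, the paper's construction resolves this with two devices your sketch does not contain. First, half of the players (``low-priority'') have \emph{singleton}, type-determined action sets: each covers one uniformly random element of a universe of size $n$, so they trivially satisfy every incentive constraint while injecting into the mediator's information a random set of already-covered elements. Second, the mediator draws a uniformly random perfect matching between low- and high-priority players and recommends to each high-priority player (whose action set is all of $U$, with element values shared so the game is valid but not basic) the element of its matched partner. By symmetry, the only thing a recommendation reveals to a high-priority player is ``this element is covered by some low-priority player,'' so under the posterior every element has the same expected payoff $2\bigl(1-(1-\tfrac1n)^{n/2}\bigr)$ and every remap $\phi$ is exactly indifferent --- no tuning parameter $p$ is needed; incentive compatibility comes from uniformity, not from balancing inequalities. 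The welfare of this Bayesian solution is $n\bigl(1-(1-\tfrac1n)^{n/2}\bigr)$ versus an optimum of $n\bigl(1-(1-\tfrac1n)^{n/2}\bigr)+\tfrac n2$ (the high-priority players can always cover $n/2$ fresh elements), giving a ratio tending to $\frac{1-\rme^{-1/2}}{3/2-\rme^{-1/2}}\approx 0.4403$. Your proposal is pointed in the right direction --- exploiting recommendations correlated with $\theta_{-i}$ that a deviator cannot imitate, and using the valid-game slack in the utilities --- but without an explicit construction of this kind the claimed bound of $0.441$ is not established.
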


\begin{proof}
We consider a coverage function with universe $U$, where $|U| = n$.
There are two different groups of players.
The first group consists of $n/2$ players with low priority.
The action set of each player in this group is a singleton that contains only one element selected from $U$ uniformly at random.
These players have only one possible action, that is, they do not make any decision.
The second group consists of $n/2$ players with high priority.
Their action set always contains all the elements in $U$ (There is no randomness).
The payoff yielded by each element is distributed equally among high-priority players selecting it if at least one high-priority players select it, and distributed equally among low-priority players selecting it if only low-priority players select it.

In this Bayesian game, the optimal social welfare is $n \left( 1 - \left( 1 - \frac{1}{n} \right)^{n/2} \right) + \frac{n}{2}$ in expectation.
This can be achieved by making all the high-priority players select distinct elements that have not been selected by the low-priority players.
As the low-priority players have only one possible action, there is no option for their selection.
Each element is selected by at least one low-priority players with probability $1-(1-\frac{1}{n})^{n/2}$.
Since there are at least $n/2$ remaining elements, we can let the high-priority players disjointly select them.
In total, the expected number of the selected elements is
\begin{equation*}
n \left( 1 - \left( 1 - \frac{1}{n} \right)^{n/2} \right)
+
\frac{n}{2}.
\end{equation*}

Next, we show that there exists a Bayesian solution whose expected social welfare is $n \left( 1 - \left( 1 - \frac{1}{n} \right)^{n/2} \right)$.
A mediator chooses a perfect matching between the low-priority and high-priority players uniformly at random.
Then the mediator recommends the action selected by the matched low-priority player to each high-priority player.
Although the low-priority players always obtain payoff of $0$, they have no incentive to deviate from the recommendation because they have only one possible action.
If every player follows the mediator's recommendation, the high-priority players obtain the expected payoff $n \left( 1 - \left( 1 - \frac{1}{n} \right)^{n/2} \right) / (n/2) = 2 \left( 1 - \left( 1 - \frac{1}{n} \right)^{n/2} \right)$, because $n \left( 1 - \left( 1 - \frac{1}{n} \right)^{n/2} \right)$ elements are shared equally among the $n/2$ high-priority players in expectation.
Suppose that a high-priority player deviates from the mediator's recommendation.
Since the players' actions are determined by the uniform distribution, the only information each high-priority player has is that the recommended element is selected by some low-priority player.
Hence, the expected payoff of each element according to the posterior distribution for this player is the same as the recommended element, $2 \left( 1 - \left( 1 - \frac{1}{n} \right)^{n/2} \right)$.
Therefore, this player has no incentive to deviate, and the distribution realized by this mediator is a Bayesian solution.

The expected social welfare achieved by this Bayesian solution is $n \left( 1 - \left( 1 - \frac{1}{n} \right)^{n/2} \right)$.
The price of anarchy for Bayesian solutions is at most
\begin{equation*}
\frac{n \left( 1 - \left( 1 - \frac{1}{n} \right)^{n/2} \right)}{n \left( 1 - \left( 1 - \frac{1}{n} \right)^{n/2} \right) + \frac{n}{2}}
\to
\frac{1 - \exp \left( - \frac{1}{2} \right)}{\frac{3}{2} - \exp \left( - \frac{1}{2} \right)} \approx 0.4403
\end{equation*}
as $n \to \infty$.
\end{proof}

\begin{remark}
The constant $\frac{1 - \exp \left( - 1/2 \right)}{3/2 - \exp \left( - 1/2 \right)}$ can be slightly improved by selecting an appropriate proportion of high- and low-priority players.
However, that is still far from being tight compared to the lower bound $(1-1/\rme)/2$.
\end{remark}

Note that the distribution used in the above proof is not a communication equilibrium.
If a low-priority player untruthfully reports a random element to the mediator as if it were their possible action, the mediator recommends it to the matched high-priority player.
As a result, the dishonest low-priority player faces no interference from the matched high-priority player and may obtain a positive payoff.

\section{Price of stability in Bayesian basic utility games}
This section provides lower and upper bounds on the PoS in Bayesian basic utility games for various concepts of Bayes (coarse) correlated equilibria.
In \Cref{subsection-pos-bs}, we derive lower bounds on the PoS for Bayesian solutions.
In \Cref{subsection-pos-bne}, we derive lower and upper bounds on the PoS for Bayes--Nash equilibria using the strategic form.
In \Cref{subsection-pos-comeq}, we provide an example in which the PoS for communication equilibria is at most $4/5$.

\subsection{For Bayesian solutions}
\label{subsection-pos-bs}

First, we show that a PoS lower bound can be obtained from PoS lower bounds for complete-information games realized for each type profile.
This proposition holds for general Bayesian games, not restricted to Bayesian valid or basic utility games.

\begin{proposition}
\label{pos-general}
Let $\alpha$ be the price of stability for Bayesian solutions in a Bayesian game.
For each type profile $\theta \in \Theta$, let $\alpha^\theta$ be the price of stability for correlated equilibria in the complete-information game for fixed $\theta$.
Then $\alpha \ge \min_{\theta \in \Theta} \alpha^\theta$.
\end{proposition}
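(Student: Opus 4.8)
The plan is to build a Bayesian solution for the whole Bayesian game by gluing together, for each type profile $\theta \in \Theta$, a near-optimal correlated equilibrium of the complete-information game induced by $\theta$. Concretely, for each $\theta \in \Theta$ let $\pi^\theta \in \Delta(A^\theta)$ be a correlated equilibrium of the complete-information game with action sets $A_i^{\theta_i}$ and utilities $v_i(\cdot)$ restricted to $A^\theta$, chosen so that $\E_{a \sim \pi^\theta}[\vSW(a)] \ge \alpha^\theta \cdot \max_{a \in A^\theta} \vSW(a)$ (or within $\varepsilon$ of it, if the supremum in the definition of $\PoS$ is not attained; I will take limits at the end). Define $\pi \in \prod_{\theta \in \Theta} \Delta(A^\theta)$ by $\pi(\theta) = \pi^\theta$.

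The key step is to verify that $\pi$ is a Bayesian solution, i.e.\ that for every $i \in N$, every $\theta_i \in \Theta_i$, and every $\phi \colon A_i^{\theta_i} \to A_i^{\theta_i}$,
\begin{equation*}
\E_{\theta_{-i} \sim \rho|_{\theta_i}} \left[ \E_{a \sim \pi(\theta)} \left[ v_i(a) \right] \right]
\ge
\E_{\theta_{-i} \sim \rho|_{\theta_i}} \left[ \E_{a \sim \pi(\theta)} \left[ v_i(\phi(a_i), a_{-i}) \right] \right].
\end{equation*}
Here the crucial observation is that in a Bayesian solution the only deviations allowed are redirections of the recommended action within the \emph{same} type $\theta_i$ (no untruthful type reports), and the recommendation $a \sim \pi(\theta)$ already reveals the full profile $\theta$ in distribution — so conditioned on the recommendation $a_i$, player $i$'s belief over $(\theta_{-i}, a_{-i})$ is exactly the posterior induced by $\rho|_{\theta_i}$ and the $\pi^\theta$'s. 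Because each $\pi^\theta$ is a correlated equilibrium of the complete-information game for $\theta$, we have, for every fixed $\theta$ (hence every fixed $\theta_{-i}$), the per-profile incentive inequality
\begin{equation*}
\E_{a \sim \pi^\theta}\left[ v_i(a) \right] \ge \E_{a \sim \pi^\theta}\left[ v_i(\phi(a_i), a_{-i}) \right],
\end{equation*}
provided $\phi(a_i) \in A_i^{\theta_i}$, which holds by the choice of domain and codomain $A_i^{\theta_i}$. Taking expectations over $\theta_{-i} \sim \rho|_{\theta_i}$ preserves the inequality, which is exactly the Bayesian-solution constraint. Therefore $\pi$ is a Bayesian solution, and its expected social welfare is $\E_{\theta \sim \rho}\left[ \E_{a \sim \pi^\theta}[\vSW(a)] \right] \ge \E_{\theta \sim \rho}\left[ \alpha^\theta \max_{a \in A^\theta}\vSW(a) \right] \ge \left(\min_{\theta} \alpha^\theta\right) \E_{\theta \sim \rho}\left[\max_{a \in A^\theta}\vSW(a)\right]$, so $\alpha \ge \min_\theta \alpha^\theta$.

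The main subtlety — really the only thing to be careful about — is matching the standard correlated-equilibrium deviation (a switching function $\phi$ that may depend on the recommended action) with the Bayesian-solution deviation, and confirming that no cross-type information leaks: a Bayesian solution forbids misreporting one's type, so the mediator's recommendation is conditioned on the true $\theta$, and player $i$ never gets to ``mix'' recommendations intended for different values of $\theta_i$. A secondary technicality is the possible non-attainment of the $\PoS$ suprema: I would pick $\varepsilon$-optimal $\pi^\theta$ for each $\theta$, obtain $\alpha \ge \min_\theta \alpha^\theta - \varepsilon$, and let $\varepsilon \to 0$ (using finiteness of $\Theta$). I expect the whole argument to be short once the glued construction is written down.
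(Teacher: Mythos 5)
Your construction and verification are exactly the paper's proof: glue the per-type-profile correlated equilibria into $\pi(\theta) = \pi^\theta$, note that the complete-information incentive inequality holds pointwise in $\theta_{-i}$ and is preserved under taking the expectation over $\theta_{-i} \sim \rho|_{\theta_i}$, and then average the welfare guarantee over $\theta \sim \rho$. The argument is correct; the extra remarks about posterior beliefs and $\varepsilon$-optimal equilibria are harmless but not needed beyond what the paper does.
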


\begin{proof}
From the assumption, for any $\theta \in \Theta$, there exists a correlated equilibrium $\pi^\theta \in \Delta(A)$ for the complete-information game for $\theta$ that satisfies
\begin{equation}
\label{ce-approx}
\E_{a \sim \pi^\theta} \left[ \vSW(a) \right]
\ge
\alpha^\theta \max_{a \in A} \vSW(a).
\end{equation}

Define $\pi \in \prod_{\theta \in \Theta} \Delta(A^\theta)$ by $\pi(\theta) = \pi^\theta$ for each $\theta \in \Theta$.
We prove that $\pi$ is a Bayesian solution in the Bayesian game.
Fix any $i \in N$ and $\theta_i \in \Theta_i$.
Since $\pi^\theta$ is a correlated equilibrium for the complete-information game for each $\theta_{-i} \in \Theta_{-i}$, it holds
\begin{equation*}
\label{ce-ic}
\E_{a \sim \pi^\theta} \left[ v_i(a) \right]
\ge
\E_{a \sim \pi^\theta} \left[ v_i(\phi(a_i), a_{-i}) \right]
\end{equation*}
for any $i \in N$ and $\phi \colon A_i^{\theta_i} \to A_i^{\theta_i}$.
By taking the expectation in terms of $\theta_{-i} \sim \rho|_{\theta_i}$, we obtain
\begin{equation*}
\E_{\theta_{-i} \sim \rho|_{\theta_i}} \left[ \E_{a \sim \pi(\theta)} \left[ v_i(a) \right] \right]
\ge
\E_{\theta_{-i} \sim \rho|_{\theta_i}} \left[ \E_{a \sim \pi(\theta)} \left[ v_i(\phi(a_i), a_{-i}) \right] \right].
\end{equation*}
Hence, the incentive constraints for Bayesian solutions hold, and $\pi$ is a Bayesian solution.
By taking the expectation of both sides of \eqref{ce-approx}, we obtain
\begin{equation*}
\E_{\theta \sim \rho} \left[ \E_{a \sim \pi(\theta)} \left[ \vSW(a) \right] \right]
\ge
\E_{\theta \sim \rho} \left[ \alpha^\theta \max_{a \in A} \vSW(a) \right]
\ge
\min_{\theta \in \Theta} \alpha^\theta \E_{\theta \sim \rho} \left[ \max_{a \in A} \vSW(a) \right],
\end{equation*}
which implies the price of stability is at least $\min_{\theta \in \Theta}\alpha^\theta$.
\end{proof}

As \citet*{Vet} proved, (non-Bayesian) basic utility games are potential games with the social welfare function as a potential function.
Hence, a maximizer of the social welfare function is a pure Nash equilibrium.
This implies that there always exists a pure Nash equilibrium achieving the optimal social welfare in basic utility games, i.e., the PoS is $1$ for pure Nash equilibria, and correlated equilibria, which are a superset of pure Nash equilibria.
By combining this fact with the above theorem, we obtain the PoS lower bound for Bayesian basic utility games.
On the other hand, the PoS for correlated equilibria in (non-Bayesian) valid utility games is known to be $1/2$, which does not improve on the PoA.
Combining these existing results with \Cref{pos-general}, we obtain the following PoS bounds for Bayesian solutions.

\begin{corollary}
In the correlated case, the price of stability for Bayesian solutions is at least $1$ and $1/2$ in Bayesian basic and valid utility games, respectively.
\end{corollary}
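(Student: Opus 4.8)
The plan is to obtain both bounds as immediate consequences of \Cref{pos-general}, which reduces the price of stability for Bayesian solutions to the worst, over type profiles $\theta \in \Theta$, of the price of stability $\alpha^\theta$ for correlated equilibria in the complete-information game induced by fixing $\theta$. The first step is to observe that, for each fixed $\theta$, this complete-information game---with action sets $A_i^{\theta_i}$ and social welfare function $f$ restricted to subsets of $\bigcup_{i \in N} A_i^{\theta_i}$---is itself a (non-Bayesian) basic (resp.\ valid) utility game: the total-utility and marginal-contribution conditions are imposed for \emph{all} $a \in A$, hence in particular for all $a \in A^\theta$, and restricting a monotone submodular function to a sub-ground-set preserves monotonicity and submodularity.

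For the basic case, I would invoke the result of \citet*{Vet} that every basic utility game is an exact potential game whose potential function is the social welfare function $\vSW$. Consequently, any action profile maximizing $\vSW$ is a pure Nash equilibrium, hence also a correlated equilibrium, so it attains the optimum; thus $\alpha^\theta = 1$ for every $\theta \in \Theta$. \Cref{pos-general} then yields a price of stability of at least $\min_{\theta \in \Theta} \alpha^\theta = 1$, and since a price of stability never exceeds $1$ (as $\E_{a \sim \pi(\theta)}[\vSW(a)] \le \max_{a \in A^\theta} \vSW(a)$ for every $\theta$), it equals $1$.

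For the valid case, I would use the classical price-of-anarchy bound of $1/2$ for valid utility games due to \citet*{Vet}, extended from mixed Nash equilibria to correlated equilibria by \citet*{Roughgarden15}: in any (non-Bayesian) valid utility game, every correlated equilibrium has expected social welfare at least half of the optimum. Since the price of stability is at least the price of anarchy, $\alpha^\theta \ge 1/2$ for every $\theta \in \Theta$, so \Cref{pos-general} gives a price of stability of at least $1/2$.

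Because the argument is merely the composition of \Cref{pos-general} with two cited complete-information facts, I do not anticipate a genuine obstacle; the only point deserving care is the verification in the first step that fixing a type profile leaves the valid/basic utility structure intact, which is immediate from the pointwise nature of the defining conditions.
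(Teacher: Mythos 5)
Your proposal is correct and follows essentially the same route as the paper: apply \Cref{pos-general} together with the complete-information facts that basic utility games are potential games with $\vSW$ as potential (so $\alpha^\theta = 1$) and that correlated equilibria in valid utility games attain at least half the optimum (so $\alpha^\theta \ge 1/2$). Your explicit check that fixing a type profile preserves the valid/basic utility structure is a point the paper leaves implicit, but it does not change the argument.
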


Since there exists a (non-Bayesian) valid utility game whose PoS for coarse correlated equilibria is $1/2$ (see Example 8 of \citep*{MYK15}), this lower bound cannot be improved even for independent distributions.

\subsection{For Bayes--Nash equilibria}
\label{subsection-pos-bne}

Next, we present a PoS lower bound for Bayes--Nash equilibria in Bayesian basic utility games.
To prove it, we consider the strategic form, a complete-information interpretation of a Bayesian game.
In the strategic form, the set of strategies is interpreted as an action set for each player.
Formally, each player $i \in N$ selects a strategy $s_i \in S_i$ and then obtains payoff $\E_{\theta \sim \rho} \left[ v_i(s(\theta)) \right]$.
The value of social welfare for each strategy profile $s \in S$ is $\E_{\theta \sim \rho} \left[ \vSW(s(\theta)) \right]$.
We prove the following lemma, which enables us to use existing PoS lower bounds on (non-Bayesian) basic utility games.
The proof is deferred to \Cref{proofs}.

\begin{restatable}{lemma}{sfbasic}
\label{sf-basic}
The strategic form of any Bayesian basic utility game is a basic utility game.
\end{restatable}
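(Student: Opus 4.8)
The plan is to verify directly that the strategic form inherits the three defining properties of a basic utility game: a non-negative monotone submodular social welfare function represented by a set function on a common ground set, the total utility condition (with equality, in the basic case), and the marginal contribution condition (also with equality). First I would identify the ground set. Recall that in the Bayesian game each player $i$ has action set $A_i = \bigcup_{\theta_i} A_i^{\theta_i}$, and a strategy $s_i \in S_i = \prod_{\theta_i} A_i^{\theta_i}$ is a tuple of actions, one per type. In the strategic form, $s_i$ is the ``action'' of player $i$. I would take the new ground set to be $E' = \bigcup_{i \in N} S_i$ (disjoint across players, which we may assume just as with the $A_i^{\theta_i}$), so that each strategy is a single element, and define the new set function $f' \colon 2^{E'} \to \bbR_{\ge 0}$ by
\begin{equation*}
f'(\{s_{i_1}, \dots, s_{i_m}\}) = \E_{\theta \sim \rho}\left[ f\left( \{ s_{i_1}(\theta_{i_1}), \dots, s_{i_m}(\theta_{i_m}) \} \right) \right]
\end{equation*}
for any subset of strategies from distinct players, so that $f'(\{s_1,\dots,s_n\}) = \E_{\theta\sim\rho}[\vSW(s(\theta))]$ matches the strategic-form social welfare. (One should note that a set of strategies from distinct players is all that arises as a ``played'' set, so this partial definition suffices; to be safe one can extend $f'$ to all of $2^{E'}$ by the same expectation formula using, for a set containing several strategies of the same player, an arbitrary tie-break — but monotonicity/submodularity checks only need the multi-player case, and I would phrase the argument so this subtlety does not bite.)

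Next I would check the three properties of $f'$. Non-negativity and monotonicity are immediate: $f'$ is an expectation of values of $f$, which is non-negative and monotone, and enlarging a strategy set enlarges $\{s_{i_1}(\theta_{i_1}),\dots\}$ pointwise in $\theta$. Submodularity is the one genuine check: for strategy sets $X \subseteq Y \subseteq E'$ and a strategy $s_i \notin Y$, I would show $f'(s_i \mid X) \ge f'(s_i \mid Y)$ by pushing the inequality inside the expectation — for each fixed $\theta$, the image set of $X$ under evaluation at $\theta$ is contained in that of $Y$, and $s_i(\theta_i)$ is a single new element, so submodularity of $f$ gives the pointwise inequality, and taking $\E_{\theta\sim\rho}$ preserves it. Then for the utility side: the strategic-form utility of player $i$ playing $s_i$ against $s_{-i}$ is $v_i'(s) = \E_{\theta\sim\rho}[v_i(s(\theta))]$. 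Summing the Bayesian total utility condition $\vSW(s(\theta)) \ge \sum_i v_i(s(\theta))$ over the randomness in $\theta$ gives the strategic-form total utility condition. For the marginal contribution condition, in the basic case we have exact equality $v_i(a) = \vSW(a) - \vSW(\emptyset_i, a_{-i})$ for every $a$; applying this at $a = s(\theta)$ and taking expectations, and observing that $\vSW(\emptyset_i, s_{-i}(\theta_{-i})) = f(\{s_j(\theta_j) : j \ne i\})$ which equals $f'$ evaluated at $s_{-i}$ with $s_i$ replaced by the empty action, yields $v_i'(s) = f'(\{s_1,\dots,s_n\}) - f'(\text{same with } s_i \text{ removed})$, i.e.\ the marginal contribution condition with equality in the strategic form.

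I don't expect a serious obstacle here — the proof is essentially ``everything is an expectation over $\theta$, and expectation preserves non-negativity, monotonicity, submodularity, and linear (in)equalities'' — so the main thing to be careful about is bookkeeping: making the ground-set-disjointness assumption cleanly, handling the ``empty action'' $\emptyset_i$ consistently between the two formulations (a player who opts out in the strategic form corresponds, type by type, to opting out in the Bayesian game), and making sure the partial definition of $f'$ on multi-player strategy sets is the only thing the three conditions actually refer to, so that the tie-breaking ambiguity for degenerate sets is irrelevant. If one wanted to be fully rigorous about $f'$ being defined on all of $2^{E'}$, the cleanest route is to replay the construction in the ``resource'' formulation of \citet*{Vet}: treat the universe of the strategic form as $\tilde E' = \Theta \times (\text{resources})$ and let strategy $s_i$ ``cover'' $\{(\theta, r) : r \text{ covered by } s_i(\theta_i)\}$ weighted by $\rho(\theta)$, which makes $f'$ manifestly a non-negative monotone submodular (coverage-type) function on a genuine ground set and makes both utility conditions transparent.
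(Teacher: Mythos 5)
Your proposal is correct and follows essentially the same route as the paper: take the ground set $E' = \bigcup_{i \in N} S_i$, define $f'$ as the expectation over $\theta \sim \rho$ of $f$ applied to the realized actions, observe that expectation preserves non-negativity, monotonicity, and submodularity of the fixed-$\theta$ functions, and obtain the total utility and marginal contribution conditions (the latter with equality) by taking expectations of the pointwise conditions. The only point of difference is how $f'$ is handled on sets containing several strategies of the same player: the paper simply sets $f'(X) = \E_{\theta \sim \rho}\left[ f\left( \bigcup_{i \in N} \{ s_i(\theta_i) \mid s_i \in X \cap S_i \} \right) \right]$, including \emph{all} such actions, which makes $f'$ well-defined, monotone, and submodular on all of $2^{E'}$ with no extra device — whereas your suggested arbitrary tie-break could violate monotonicity, so prefer the union definition (or your coverage-style reformulation, which also works).
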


By combining this lemma with the SR gap lower bounds, we obtain the PoS lower bound for Bayes--Nash equilibria.

\begin{proposition}
In Bayesian basic utility games, the price of stability for Bayes--Nash equilibria is at least the strategy representability gap.
\end{proposition}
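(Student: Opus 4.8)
The plan is to combine \Cref{sf-basic} with the classical observation of \citet{Vet} that basic utility games are exact potential games whose potential is the social welfare function. Recall that in a basic utility game the marginal contribution condition holds with equality, so for every player $i$ and every unilateral deviation the change in $v_i$ equals the change in $\vSW$; hence every maximizer of the social welfare function over action profiles is a pure Nash equilibrium.

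First I would pass to the strategic form of the given Bayesian basic utility game. By \Cref{sf-basic} this strategic form is itself a basic utility game: player $i$'s action set is $S_i$, the social welfare of a strategy profile $s \in S$ is $\E_{\theta \sim \rho}\left[\vSW(s(\theta))\right]$, and player $i$'s utility is $\E_{\theta \sim \rho}\left[v_i(s(\theta))\right]$. Applying the potential-game fact, any $s^\star \in \argmax_{s \in S}\E_{\theta \sim \rho}\left[\vSW(s(\theta))\right]$ is a pure Nash equilibrium of the strategic form, i.e.\ no player $i$ gains by switching to a different strategy $s'_i \in S_i$.

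Next I would verify that this pure Nash equilibrium corresponds to a Bayes--Nash equilibrium of the original game. Taking $\sigma$ to be the point mass on $s^\star$, the recommended strategies are deterministic and therefore trivially stochastically independent, and the inequality $\E_{\theta \sim \rho}\left[v_i(s^\star(\theta))\right] \ge \E_{\theta \sim \rho}\left[v_i(s'_i(\theta_i), s^\star_{-i}(\theta_{-i}))\right]$ for all $i \in N$ and $s'_i \in S_i$ is exactly the defining condition of a strategic-form coarse correlated equilibrium specialized to a point mass. Hence $\sigma$ is a (degenerate) Bayes--Nash equilibrium whose expected social welfare equals $\max_{s \in S}\E_{\theta \sim \rho}\left[\vSW(s(\theta))\right]$.

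Finally, since the supremum defining $\PoS_{\SigmaBNE}$ is at least the value attained by $\sigma$, we conclude
\begin{equation*}
\PoS_{\SigmaBNE} \ge \frac{\max_{s \in S}\E_{\theta \sim \rho}\left[\vSW(s(\theta))\right]}{\E_{\theta \sim \rho}\left[\max_{a \in A^\theta}\vSW(a)\right]},
\end{equation*}
and the right-hand side is precisely the strategy representability gap of the game. The substantive content is entirely in \Cref{sf-basic} (already established) and in Vetta's potential-game argument; the remaining steps are bookkeeping. The point that needs the most care is making the identification ``pure Nash equilibrium of the strategic form $=$ deterministic Bayes--Nash equilibrium'' airtight under the paper's definitions --- in particular, checking that the deviation set $S_i$ appearing in the strategic-form Nash condition is the same $S_i$ used in the definition of Bayes--Nash equilibria, and that the strategic form's social welfare value coincides with the Bayesian objective $\E_{\theta \sim \rho}[\vSW(s(\theta))]$.
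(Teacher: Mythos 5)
Your proposal is correct and follows essentially the same route as the paper: pass to the strategic form via \Cref{sf-basic}, use Vetta's fact that a social-welfare maximizer is a pure Nash equilibrium of a basic utility game, and identify that pure equilibrium with a (degenerate) Bayes--Nash equilibrium achieving $\max_{s \in S}\E_{\theta \sim \rho}[\vSW(s(\theta))]$. The only cosmetic difference is that you verify the point-mass profile satisfies the Bayes--Nash conditions directly, whereas the paper invokes Harsanyi's equivalence between Nash equilibria of the strategic form and Bayes--Nash equilibria of the original game; both are valid.
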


\begin{proof}
From \Cref{sf-basic}, the strategic form of a Bayesian basic utility game is a basic utility game.
In basic utility games, the action profile that maximizes the social welfare function is a pure Nash equilibrium, i.e., 
\begin{equation*}
s^* \in \argmax_{s \in S} \E \left[ \vSW(s(\theta)) \right].
\end{equation*}
is a pure Nash equilibrium of the strategic form.
It is known that the set of Bayes--Nash equilibria of a Bayesian game is equivalent to the set of Nash equilibria of its strategic form \citep*{Harsanyi67,Harsanyi68a,Harsanyi68b}.
Hence, $s^*$ is a Bayes--Nash equilibrium of the original Bayesian basic utility game.
Recall that the SR gap represents the ratio of the optimal social welfare to the value achieved by an optimal strategy profile.
Therefore, there exists a Bayes--Nash equilibrium that achieves a social welfare value of the SR gap multiplied by the optimal social welfare.
\end{proof}

The following proposition shows that this lower bound for Bayes--Nash equilibria is tight, even for SFCCEs.
This follows from the general relation between the SR gap and the PoS for SFCCEs in Bayesian games.

\begin{proposition}
In any Bayesian game, the price of stability for strategic-form coarse correlated equilibria is at most the strategy representability gap.
\end{proposition}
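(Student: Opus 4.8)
The plan is to use the defining feature of strategic-form coarse correlated equilibria (SFCCEs): they are, by construction, distributions over strategy profiles, so the expected social welfare of any such equilibrium cannot exceed that of the single best strategy profile, which is exactly $\STR = \max_{s \in S}\E_{\theta\sim\rho}[\vSW(s(\theta))]$. This makes the statement a direct mirror, for PoS, of \Cref{poa-ub-sr-gap}.

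Concretely, I would first fix an arbitrary SFCCE $\sigma \in \Delta(S)$. Since $\sigma$ is a distribution over the finite set $S$ and $\rho$ is a distribution over the finite set $\Theta$, the two expectations commute, and each realized strategy profile $s \in S$ contributes at most $\max_{s' \in S} \E_{\theta \sim \rho}[\vSW(s'(\theta))] = \STR$. Hence
\[
\E_{\theta \sim \rho}\!\left[ \E_{s \sim \sigma}\!\left[ \vSW(s(\theta)) \right] \right]
= \E_{s \sim \sigma}\!\left[ \E_{\theta \sim \rho}\!\left[ \vSW(s(\theta)) \right] \right]
\le \STR .
\]
Taking the supremum over all SFCCEs and then dividing by $\OPT = \E_{\theta \sim \rho}\!\left[\max_{a \in A^\theta}\vSW(a)\right]$ yields $\PoS \le \STR / \OPT$, which is exactly the strategy representability gap of the game by definition.

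I do not expect any genuine obstacle here: the entire content of the argument is that SFCCEs are by definition strategy-representable, so no smoothness machinery or submodularity is needed. The only minor points worth recording are that the supremum is over a nonempty set (the strategic form is a finite normal-form game, hence admits a mixed Nash equilibrium, which is in particular an SFCCE, so the PoS is well defined) and that interchanging the two expectations is legitimate, which is immediate since all underlying probability spaces are finite.
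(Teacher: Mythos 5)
Your proof is correct and follows the same route as the paper: the only substantive step is that an SFCCE is by definition a distribution over strategy profiles, so its expected social welfare is at most $\max_{s \in S}\E_{\theta\sim\rho}[\vSW(s(\theta))]$, and dividing by the optimum gives the SR gap. Your added remarks on exchanging expectations and on non-emptiness of the equilibrium set are harmless extra detail, not a different argument.
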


\begin{proof}
Since any SFCCE is a distribution over strategy profiles, the social welfare achieved by it is upper-bounded by the value achieved by an optimal strategy profile.
Therefore, their ratio to the optimal social welfare is no larger than the SR gap.
\end{proof}

As a corollary of these results, we obtain PoS lower and upper bounds for Bayes--Nash equilibria.

\begin{corollary}
In basic utility games, the price of stability for Bayes--Nash equilibria is at least $1-1/\rme$ in the independent case and $\Omega( 1 / \sqrt{n} )$ in the correlated case, respectively.
These bounds are tight (up to a multiplicative constant in the correlated case) even for strategic-form coarse correlated equilibria.
\end{corollary}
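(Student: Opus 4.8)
The plan is to obtain this corollary purely by chaining the two propositions of \Cref{subsection-pos-bne} with the strategy representability gap bounds from \Cref{section-srgap}. For the lower bounds, I would invoke the proposition stating that in any Bayesian basic utility game the price of stability for Bayes--Nash equilibria is at least the strategy representability gap, and then substitute the SR gap lower bounds: \Cref{sr-independent} gives SR gap $\ge 1-1/\rme$ whenever the prior is a product distribution, and \Cref{sr-correlated} gives SR gap $\ge \frac{1}{(2+1/(1-1/\rme))\sqrt n}=\Omega(1/\sqrt n)$ for arbitrary priors. Since these SR gap estimates hold for every Bayesian basic utility game, the resulting PoS lower bounds are uniform over the whole class, which is exactly what the corollary asserts.

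For the tightness claim, I would use that Bayes--Nash equilibria form a subclass of strategic-form coarse correlated equilibria, so the PoS for Bayes--Nash equilibria is at most the PoS for SFCCEs, together with the proposition that in any Bayesian game the PoS for SFCCEs is at most the strategy representability gap. It then suffices to produce Bayesian basic utility games whose SR gap matches each lower bound: \Cref{sr-gap-independent-ub} provides one with an independent prior and SR gap exactly $1-1/\rme$, and the final proposition of \Cref{sec:sr-correlated} provides one with SR gap at most $2/\sqrt n$. These are stated as valid utility games, but — as already noted, the SR gap depends only on the social welfare function — any non-negative monotone submodular $\vSW$ together with the exact marginal-contribution utilities $v_i(a)=\vSW(a)-\vSW(\emptyset_i,a_{-i})$ forms a basic utility game, the total utility condition $\sum_{i\in N} v_i(a)\le\vSW(a)$ following from the standard telescoping/submodularity argument. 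Hence both examples can be taken to be Bayesian basic utility games with the same SR gap, so in those games the PoS for SFCCEs, and therefore for Bayes--Nash equilibria, is at most $1-1/\rme$ in the independent case and $O(1/\sqrt n)$ in the correlated case, matching the lower bounds exactly (independent) and up to a multiplicative constant (correlated).

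The only step that is not purely mechanical is the last observation, namely that the SR gap witnesses constructed in \Cref{sec:sr-independent} and \Cref{sec:sr-correlated} can be realized as basic rather than merely valid utility games; but this is a one-line remark once one recalls that exact marginal-contribution utilities satisfy both defining conditions of a basic utility game for any non-negative monotone submodular social welfare. I therefore do not expect any genuine obstacle: beyond that remark, the proof is bookkeeping that pairs each case (independent, correlated) with the matching SR gap bound (lower from \Cref{sr-independent} or \Cref{sr-correlated}, upper from \Cref{sr-gap-independent-ub} or the final proposition of \Cref{sec:sr-correlated}) and applies the two PoS--versus--SR-gap propositions.
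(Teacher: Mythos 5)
Your proposal is correct and takes essentially the same route the paper intends for this corollary: combine the proposition that the PoS for Bayes--Nash equilibria in Bayesian basic utility games is at least the SR gap with \Cref{sr-independent} and \Cref{sr-correlated} for the lower bounds, and combine the proposition that the PoS for SFCCEs is at most the SR gap with the tight SR-gap examples for the upper bounds. Your extra remark that the valid-game witnesses can be realized as basic utility games via exact marginal-contribution utilities is exactly the paper's observation (stated after the SR gap definition) that the SR gap depends only on the social welfare function, not on the valid/basic distinction.
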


\begin{figure}
\centering
\begin{tikzpicture}[line width=1pt]
\node (u1) [draw, circle, minimum width=0.05\paperwidth] at (0, 0) {$2$};
\node (u2) [draw, circle, minimum width=0.03\paperwidth] at (2, 0) {$1$};
\node (u3) [draw, circle, minimum width=0.01\paperwidth, inner sep=2pt] at (3.5, 0) {\scriptsize $\epsilon$};
\node at (u1.south) [anchor=north] {$u_1$};
\node at (u2.south) [anchor=north] {$u_2$};
\node at (u3.south) [anchor=north] {$u_3$};
\node (a2) at (1, 1) {$a_2$};
\node (a2p) at (2.5, 1) {$a'_2$};
\node (t1) at (3.5, 1) {$\theta_1$};
\node (t1p) at (0, 1) {$\theta'_1$};
\draw (a2) -- (u1);
\draw (a2) -- (u3);
\draw (a2p) -- (u2);
\draw (t1) -- (u3);
\draw (t1p) -- (u1);
\end{tikzpicture}
\caption{An example of a Bayesian basic utility game in which an optimal communication equilibrium is suboptimal.}
\label{figure-com}
\end{figure}
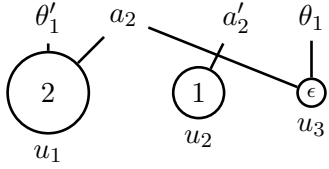

\subsection{For communication equilibria}
\label{subsection-pos-comeq}

Recall that the PoS for Bayesian solutions is always $1$ in Bayesian basic utility games.
In contrast, the PoS for equilibrium concepts between Bayes--Nash equilibria and SFCCEs coincides with the SR gap, hence suboptimal.
The PoS for communication equilibria is still unclear.
The following example shows that the PoS for communication equilibria can be smaller than $1$ even in the independent case.

\begin{proposition}
There exists a Bayesian basic utility game with an independent prior distribution in which the price of stability for communication equilibria is at most $4/5$.
\end{proposition}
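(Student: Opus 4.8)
The plan is to verify that the two-player game sketched in \Cref{figure-com} has the claimed property. Take $f$ to be the weighted coverage function on the universe $U=\{u_1,u_2,u_3\}$ with weights $2$, $1$, and a small constant $\epsilon>0$, respectively. Player~$1$ has two equally likely types $\theta_1,\theta'_1$ and, for each type, exactly one available action: the action for $\theta_1$ covers $\{u_3\}$ and the action for $\theta'_1$ covers $\{u_1\}$. Player~$2$ has a single type and two actions, $a_2$ covering $\{u_1,u_3\}$ and $a'_2$ covering $\{u_2\}$. Define each $v_i$ to be the marginal contribution $v_i(a)=\vSW(a)-\vSW(\emptyset_i,a_{-i})$, so that this is a Bayesian basic utility game, and the prior is trivially independent since only player~$1$ has a nontrivial type space. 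A routine evaluation of these marginal contributions gives the relevant payoffs: player~$1$ with type $\theta_1$ gets $0$ against $a_2$ and $\epsilon$ against $a'_2$; player~$1$ with type $\theta'_1$ gets $0$ against $a_2$ and $2$ against $a'_2$; player~$2$ gets $2$ against $a_2$ and $1$ against $a'_2$ when the type is $\theta_1$, and $\epsilon$ against $a_2$ and $1$ against $a'_2$ when the type is $\theta'_1$.

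First I would compute the optimum: for type $\theta_1$ it is $2+\epsilon$ (player~$2$ plays $a_2$, covering $\{u_1,u_3\}$) and for type $\theta'_1$ it is $3$ (player~$2$ plays $a'_2$, covering $\{u_1,u_2\}$), so $\OPT=\E_{\theta}\left[\max_{a}\vSW(a)\right]=\tfrac12(2+\epsilon)+\tfrac12\cdot 3=\tfrac{5+\epsilon}{2}$. Next I would characterize the communication equilibria. Since player~$1$'s action is forced by its type, such an equilibrium is fully described by $q_t=\Pr[\text{player~$2$ is recommended }a_2\mid\text{reported type }t]$ for $t\in\{\theta_1,\theta'_1\}$. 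Player~$1$'s incentive constraint against misreporting gives, for type $\theta_1$, $(1-q_{\theta_1})\epsilon\ge(1-q_{\theta'_1})\epsilon$, hence $q_{\theta_1}\le q_{\theta'_1}$; and for type $\theta'_1$, $2(1-q_{\theta'_1})\ge 2(1-q_{\theta_1})$, hence $q_{\theta'_1}\le q_{\theta_1}$. Therefore $q_{\theta_1}=q_{\theta'_1}=:q$, so the recommendation reveals nothing about player~$1$'s type and player~$2$'s posterior stays uniform.

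Now I would apply the action-deviation constraint for player~$2$: whenever $a'_2$ is recommended (which has positive probability iff $q<1$), switching to $a_2$ yields $\tfrac12\cdot 2+\tfrac12\cdot\epsilon=1+\tfrac{\epsilon}{2}$, strictly more than the $1$ obtained by following $a'_2$; hence $q=1$. One then checks that the resulting policy — always recommending $a_2$ — is indeed a communication equilibrium (when recommended $a_2$, player~$2$ gets $1+\tfrac{\epsilon}{2}\ge 1$, and player~$1$'s two constraints hold with equality), and its social welfare is $\tfrac12(2+\epsilon)+\tfrac12(2+\epsilon)=2+\epsilon$. Consequently the price of stability for communication equilibria in this game equals $\dfrac{2+\epsilon}{(5+\epsilon)/2}=\dfrac{2(2+\epsilon)}{5+\epsilon}$, which tends to $4/5$ as $\epsilon\to 0$; since $\epsilon>0$ can be taken arbitrarily small, this establishes the asserted bound.

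The step I expect to be the main obstacle is establishing that the above characterization of communication equilibria is exhaustive: one must confirm that, given the singleton action sets for player~$1$, the only relevant deviations are player~$1$'s two misreports (each composed with the forced relabeling $\phi$) and player~$2$'s single action swap, that player~$1$'s two constraints together pin down $q_{\theta_1}=q_{\theta'_1}$, and that player~$2$'s constraint then collapses the whole family to the single ``always $a_2$'' equilibrium. The auxiliary point that the perturbation $\epsilon>0$ is precisely what makes these constraints strict, and that the value $4/5$ is attained only in the limit $\epsilon\to 0$, should also be made explicit.
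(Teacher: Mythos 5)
Your proposal is correct and is essentially the paper's own proof: it uses the same two-player weighted-coverage example of \Cref{figure-com}, derives $p=q$ from player~1's misreporting constraints, forces $q=1$ via player~2's action deviation under the uniform posterior, and compares the resulting welfare $2+\epsilon$ to $\OPT=(5+\epsilon)/2$. Your explicit verification that ``always recommend $a_2$'' is indeed an equilibrium and your remark that the ratio $2(2+\epsilon)/(5+\epsilon)$ only reaches $4/5$ in the limit $\epsilon\to 0$ match (and slightly sharpen) the paper's ``$(4/5+o(1))$'' phrasing.
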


\begin{proof}
We provide an example of a Bayesian basic utility game with two players (\Cref{figure-com}).
The first player has two types and one action, and the second player has one type and two actions.
Let $\Theta_1 = \{\theta_1,\theta'_1\}$ and $A_2 = \{a_2,a'_2\}$.
Assume the prior distribution is uniform over $\Theta_1$.
We consider a weighted coverage function with universe $U = \{u_1,u_2,u_3\}$ and weights $w(u_1) = 2$, $w(u_2) = 1$, and $w(u_3) = \epsilon$.
The first player with type $\theta_1$ covers $u_3$ and $\theta'_1$ covers $u_1$ (Note that the first player has only one action and does not have any option).
The second player's action $a_2$ covers $u_1$ and $u_3$, and $a'_2$ covers $u_2$.
The optimal social welfare is $(5+\epsilon)/2$, which is achieved when the second player chooses $a_2$ if the first player's type is $\theta_1$ and $a'_2$ if $\theta'_1$.

We prove that any communication equilibrium achieves at most $4/5$ of the social welfare.
Let $p \in [0,1]$ be the probability that the mediator recommends $a_2$ when $\theta_1$ is reported by the first player, and $q \in [0,1]$ the probability that the mediator recommends $a_2$ when $\theta'_1$ is reported.
From the incentive constraints of communication equilibria, the first player must not be better off by untruthfully reporting the type.
The first player with type $\theta_1$ obtains a payoff $\epsilon$ if the second player chooses $a'_2$ and $0$ if the second player chooses $a_2$.
Hence, if $p > q$, the first player reports an untrue type, and hence $p \le q$ must hold.
On the other hand, the first player with type $\theta'_1$ obtains payoff $2$ if the second player chooses $a'_2$ and $0$ if the second player chooses $a_2$.
Hence, if $p < q$, the first player becomes untruthful, and hence $p \ge q$ must hold.
Therefore, any communication equilibrium must satisfy $p=q$.

Furthermore, the incentive constraint for the second player requires the recommendation to the second player to be incentive-compatible.
Since $p=q$, when $a'_2$ is recommended, the posterior distribution of the first player's type is uniform.
However, the deviation to $a_2$ yields $(2+\epsilon)/2$, while following the recommendation obtains only $1$.
Therefore, $p=q=1$ is the only communication equilibrium in this game.
The social welfare achieved by it is $2+\epsilon$, which is $(4/5+o(1))$-approximation to the social optimal.
\end{proof}

\section*{Acknowledgment}
Kaito Fujii was supported by JSPS KAKENHI Grant Number 22K17857.

\bibliographystyle{ACM-Reference-Format}
\bibliography{main}

\appendix

\section{Omitted proofs}\label{proofs}

\poacbs*

\begin{proof}
Let $s^* \in \argmax_{s \in S} \E \left[ \vSW(s(\theta)) \right]$.
Let $\pi \in \prod_{\theta \in \Theta} \Delta(A^\theta)$ be any SFCBS.
From its definition, we have
\begin{align*}
\E_{\theta \sim \rho} \left[ \E_{a \sim \pi(\theta)} \left[ v_i(a) \right] \right] 
\ge
\E_{\theta \sim \rho} \left[ \E_{a \sim \pi(\theta)} \left[ v_i(s_i^*(\theta_i), a_{-i}) \right] \right] 
\end{align*}
for any $i \in N$.
Then we obtain
\begin{align*}
&\E_{\theta \sim \rho} \left[ \E_{a \sim \pi(\theta)} \left[ \vSW(a) \right] \right]\\
&\ge
\sum_{i \in N} \E_{\theta \sim \rho} \left[ \E_{a \sim \pi(\theta)} \left[ v_i(a) \right] \right] \tag{from the total utility condition of valid utility games}\\
&\ge
\sum_{i \in N} \E_{\theta \sim \rho} \left[ \E_{a \sim \pi(\theta)} \left[ v_i(s^*_i(\theta_i),a_{-i}) \right] \right] \tag{from the property of SFCBSs}\\
&\ge
\sum_{i \in N} \E_{\theta \sim \rho} \left[ \E_{a \sim \pi(\theta)} \left[ \vSW(s^*_i(\theta_i),a_{-i}) - \vSW(\emptyset_i,a_{-i}) \right] \right] \tag{from the marginal contribution condition of valid utility games}\\
&=
\sum_{i \in N} \E_{\theta \sim \rho} \left[ \E_{a \sim \pi(\theta)} \left[ f \left( s^*_i(\theta_i) \;\middle|\; \{a_1,\dots,a_{i-1},a_{i+1},\dots,a_n\} \right) \right] \right] \tag{from the definition of $\vSW$}\\
&\ge
\sum_{i=1}^n \E_{\theta \sim \rho} \left[ \E_{a \sim \pi(\theta)} \left[ f \left( s^*_i(\theta_i) \;\middle|\; \{a_1,\dots,a_n\} \cup \{ s^*_1(\theta_1),\dots,s^*_{i-1}(\theta_{i-1}) \} \right) \right] \right] \tag{from submodularity}\\
&=
\E_{\theta \sim \rho} \left[ \E_{a \sim \pi(\theta)} \left[ f \left( \left\{s_1^*(\theta_1),\dots,s_n^*(\theta_n)\right\} \;\middle|\; \{a_1,\dots,a_n\} \right) \right] \right]\\
&\ge
\E_{\theta \sim \rho} \left[ \E_{a \sim \pi(\theta)} \left[ f \left( \left\{s_1^*(\theta_1),\dots,s_n^*(\theta_n)\right\} \right) - f \left( \{a_1,\dots,a_n\} \right) \right] \right] \tag{from monotonicity}\\
&=
\E_{\theta \sim \rho} \left[ \vSW(s^*(\theta)) \right] - \E_{\theta \sim \rho} \left[ \E_{a \sim \pi(\theta)} \left[ \vSW(a) \right] \right]. \tag{from the definition of $\vSW$}
\end{align*}
Rearranging this inequality, we obtain $\E_{\theta \sim \rho} \left[ \E_{a \sim \pi(\theta)} \left[ \vSW(a) \right] \right] \ge \frac{1}{2} \E_{\theta \sim \rho} \left[ \vSW(s^*(\theta)) \right]$.
\end{proof}

\poasfcce*

\begin{proof}
Let $\sigma \in \Delta(S)$ be any SFCCE.
From the definition of SFCCEs, for any $i \in N$ and strategy $s^*_i \in S_i$, it holds that
\begin{equation*}
\E_{\theta \sim \rho} \left[ \E_{s \sim \sigma} \left[ v_i(s(\theta)) \right] \right]
\ge
\E_{\theta \sim \rho} \left[ \E_{s \sim \sigma} \left[ v_i(s^*_i(\theta_i), s_{-i}(\theta_{-i})) \right] \right].
\end{equation*}
Let $a^\theta \in \argmax_{a \in A} \vSW(a)$ be an optimal action profile for each $\theta \in \Theta$, and we set $s^*_i(\theta_i) = a^{\theta_i, \theta'_{-i}}$.
We introduce another random variable $\theta' \in \Theta$ that is also generated from $\rho$ independent of $\theta$ and take the expected value of both-hand sides.
Note that $\theta'_{-i}$ is a random variable not depending on $\theta_{-i}$.
Since $\rho$ is a product distribution, we can swap random variables $\theta_{-i}$ and $\theta'_{-i}$ on the right-hand side.
Then we obtain
\begin{equation}\label{variable-swap}
\E_{\theta \sim \rho} \left[ \E_{s \sim \sigma} \left[ v_i(s(\theta)) \right] \right]
\ge
\E_{\theta \sim \rho} \left[ \E_{\theta' \sim \rho} \left[ \E_{s \sim \sigma} \left[ v_i(a^\theta_i, s_{-i}(\theta'_{-i})) \right] \right] \right].
\end{equation}
Using this inequality, we obtain
\begin{align*}
&\E_{\theta \sim \rho} \left[ \E_{s \sim \sigma} \left[ \vSW(s(\theta)) \right] \right]\\
&\ge
\sum_{i \in N} \E_{\theta \sim \rho} \left[ \E_{s \sim \sigma} \left[ v_i(s(\theta)) \right] \right] \tag{from the total utility condition of valid utility games}\\
&\ge
\sum_{i \in N} \E_{\theta \sim \rho} \left[ \E_{\theta' \sim \rho} \left[ \E_{s \sim \sigma} \left[ v_i(a^\theta_i, s_{-i}(\theta'_{-i}))
\right] \right] \right] \tag{from \eqref{variable-swap}}\\
&\ge
\sum_{i \in N} \E_{\theta \sim \rho} \left[ \E_{\theta' \sim \rho} \left[ \E_{s \sim \sigma} \left[
f \left(a^\theta_i \;\middle|\; \left\{s_1(\theta'_1),\dots,s_{i-1}(\theta'_{i-1}),s_{i+1}(\theta'_{i+1}),\dots,s_n(\theta'_n)\right\} \right)
\right] \right] \right] \tag{from the marginal contribution condition of valid utility games}\\
&\ge
\sum_{i=1}^n \E_{\theta \sim \rho} \left[ \E_{\theta' \sim \rho} \left[ \E_{s \sim \sigma} \left[
f \left(a^\theta_i \;\middle|\; \left\{ s_1(\theta'_1),\dots,s_n(\theta'_n) \right\} \cup \{a_1^\theta,\dots,a_{i-1}^\theta\} \right)
\right] \right] \right] \tag{from submodularity}\\
&=
\E_{\theta \sim \rho} \left[ \E_{\theta' \sim \rho} \left[ \E_{s \sim \sigma} \left[
f \left(\{a_1^\theta,\dots,a_n^\theta\} \;\middle|\; \{ s_1(\theta'_1),\dots,s_n(\theta'_n) \}\right)
\right] \right] \right]\\
&\ge
\E_{\theta \sim \rho} \left[ \E_{\theta' \sim \rho} \left[ \E_{s \sim \sigma} \left[ 
f\left( \left\{ a_1^\theta,\dots,a_n^\theta \right\} \right)
-
f\left( \left\{ s_1(\theta'_1),\dots,s_n(\theta'_n) \right\}\right)
\right] \right] \right] \tag{from monotonicity}\\
&=
\E_{\theta \sim \rho} \left[ 
\vSW\left( a^\theta \right)
\right]
-
\E_{\theta' \sim \rho} \left[ \E_{s \sim \sigma} \left[ 
\vSW(s(\theta'))
\right] \right].
\end{align*}
Finally, recalling that $a^\theta \in \argmax_{a \in A} \vSW(a)$, we obtain
\begin{equation*}
\E_{\theta \sim \rho} \left[ \E_{s \sim \sigma} \left[ 
\vSW(s(\theta))
\right] \right]
\ge
\frac{1}{2}
\E_{\theta \sim \rho} \left[ 
\max_{a \in A} \vSW\left( a \right)
\right].
\qedhere
\end{equation*}
\end{proof}

\sfbasic*

\begin{proof}
We verify that the strategic form satisfies the properties of basic utility games.
Let $E' = \bigcup_{i \in N} S_i$ be the set of all players' strategies.
We define $f' \colon 2^{E'} \to \bbR_{\ge 0}$ as the set function representing the social welfare by
\begin{equation*}
f'(X) = \E_{\theta \sim \rho} \left[ f\left( \bigcup_{i \in N} \{ s_i(\theta_i) \mid s_i \in X \cap S_i \} \right) \right]
\end{equation*}
for each $X \subseteq E'$,
where $f$ is the non-negative monotone submodular function for the original Bayesian basic utility game.
For each $s \in S$, we have $f'(\{s_1,\dots,s_n\}) = \E_{\theta \sim \rho} \left[ \vSW(s(\theta)) \right]$.
For each fixed $\theta \in \Theta$, the set function inside the expectation interprets each strategy $s_i \in S_i$ as an element $s_i(\theta_i) \in E$, hence being non-negative, monotone, and submodular.
Moreover, since taking the expectation does not violate these properties, $f'$ is also non-negative, monotone, and submodular.

The total utility condition $
\sum_{i \in N} \E_{\theta \sim \rho} \left[ v_i(s(\theta)) \right]
\le
\E_{\theta \sim \rho} \left[ \vSW(s(\theta)) \right]
$
for any $s \in S$
can be obtained by taking the expectation of
the total utility condition for the original Bayesian basic utility game.
The marginal contribution condition
$
\E_{\theta \sim \rho} \left[ \vSW(s(\theta)) \right] - 
\E_{\theta \sim \rho} \left[ \vSW(\emptyset_i, s_{-i}(\theta_{-i})) \right] 
=
\E_{\theta \sim \rho} \left[ v_i(s(\theta)) \right] 
$
for any $s \in S$
also directly follows by taking the expectation of the marginal contribution condition of the original Bayesian basic utility game.
\end{proof}

\section{On Bayesian valid/basic utility games}

\subsection{On the formulation}\label{setting}

In the standard formulation of Bayesian games, the set of actions $\tilde{A}_i$ is fixed for each player $i \in N$, and the utility function $\tilde{v}_i \colon \Theta \times \tilde{A} \to \bbR_{\ge 0}$ depends not only on action profile $\tilde{a} \in \tilde{A}$ but also on type profile $\theta \in \Theta$.
On the other hand, we consider the formulation with an action set $A_i^{\theta_i}$ depending on the type $\theta_i \in \Theta_i$ for each player $i \in N$.

Given a Bayesian game of the standard formulation, we can formulated it in our formulation by copying each action $\tilde{a}_i \in \tilde{A}_i$ for each type $\theta_i \in \Theta_i$.
Formally, we can define $A_i^{\theta_i} = \{(\theta_i,\tilde{a}_i) \mid \tilde{a}_i \in \tilde{A}_i\}$ for each $i \in N$ and $\theta_i \in \Theta_i$ and $v_i((\theta_1,a_1),\dots,(\theta_n,a_n)) = \tilde{v}_i(\theta; a)$.
This reduction works for any Bayesian game.

If the original game is a valid/basic utility game for each fixed type profile $\theta \in \Theta$, there exists some monotone submodular function $f^\theta \colon 2^{\tilde{A_1} \cup \cdots \cup \tilde{A_n}} \to \bbR_{\ge 0} $ for each $\theta \in \Theta$.
In most applications, we can naturally combine these functions into a single monotone submodular function $f \colon 2^{E} \to \bbR_{\ge 0}$, where $E = \bigcup_{i \in N} \bigcup_{\theta_i \in \Theta_i} A_i^{\theta_i}$ is the set of all actions.
Using this function as the social welfare, we can define it as a Bayesian valid/basic utility game in our formulation.

Here we illustrate the idea of this reduction by using simple resource allocation games.
These generalize Bayesian extensions of congestion games provided by \citet*{Rou15}.

\begin{example}[Resource allocation games with unknown weights]
Let $\tilde{E}$ be the set of resources.
Each player $i \in N$ is associated with possible choices $\tilde{A}_i \subseteq 2^{\tilde{E}}$ and weight $\theta_i \in \bbR_{\ge 0}$.
Each resource $e \in \tilde{E}$ is associated with a non-negative, concave, and monotonically non-decreasing utility function $u_e \colon \bbR_{\ge 0} \to \bbR_{\ge 0}$.
If each player $i \in N$ chooses $P_i \in \tilde{A}_i$, each resource yields payoff $u_e(\sum_{i \in N \colon e \in P_i} \theta_i)$, which is shared by the players selecting $e$ proportionally to their weight, i.e., player $i \in N$ such that $e \in P_i$ receives $\frac{\theta_i}{\sum_{i \in N \colon e \in P_i} \theta_i} u_e(\sum_{i \in N \colon e \in P_i} \theta_i)$ from resource $e$.
Suppose that all players' weights $\theta$ are unknown in advance and jointly generated from a commonly known distribution $\rho$.
Then we can define $E = \left( \bigcup_{i \in N} \Theta_i \right) \times 2^{\tilde{E}}$ and social welfare function $f(X) = \sum_{e \in \tilde{E}} u_e(\sum_{(w,P) \in X \colon e \in P} w)$ for each $X \subseteq E$.
\end{example}

\begin{example}[Resource allocation games with uncertain action sets]
We can naturally deal with uncertain action sets in our formulation.
For example, \citet*{Rou15} considered routing games with uncertain origin-destination pairs.
The set of actions for each player $i \in N$ is the set of all paths that connect two vertices $o_i \in V$ and $d_i \in V$ on a given graph.
If the pair $(o_i, d_i)$ is uncertain and generated from a prior distribution, we can define $\theta_i = (o_i, d_i)$, and let $A_i^{(o_i,d_i)}$ be the set of all paths connecting $o_i$ and $d_i$.
\end{example}

We should note that this reduction does not always apply.
One can easily observe that this reduction does not work if $\{f^\theta\}_{\theta \in \Theta}$ is not consistent.
For example, if the value of $f^\theta(\{\tilde{a}_i\})$ is different depending on $\theta_{-i}$ for some $i \in N$ and $\tilde{a}_i \in \tilde{A}_i$, it is clearly impossible to find a single $f$ consistent with all $\{f^\theta\}_{\theta \in \Theta}$.

Without the assumption of the consistent social welfare function, the SR gap can deteriorate to the trivial lower bound $\Omega(1/n)$, as illustrated by the following simple example.

\begin{example}
There are $n$ players, and the type $\theta_i$ of player $i$ is defined as $\theta_i \equiv ai + b \pmod{n}$, where $a$ and $b$ are random numbers chosen from $\{0,1,\dots,n-1\}$. Each player $i$ chooses an action $a_i$ from $\{0,1,\dots,n-1\}$, and the social welfare function is defined as the number of players who choose $a_i = a$. In this setting, the optimal social welfare is $n$, while the expected social welfare under an optimal strategy profile achieves only $O(1)$.
\end{example}

\subsection{Applications}\label{applications}

We present more detailed descriptions of the examples of Bayesian valid and basic utility games. In these games, it is natural to introduce mediators who aim to coordinate players' decisions, and the social welfare function satisfies submodularity.

\begin{enumerate}
\item In a transportation setting, each commuter (player) $i$ has a private schedule preference (type) $\theta_i$ and must choose a transit option $r$ (e.g., a train or bus line). Each option $r$ has an increasing concave function $u_{r,t} \colon \mathbb{Z} \to \mathbb{R}$ representing the comfort or service quality for commuters of type $t$. Let $N_{r,t}$ denote the set of commuters with type $t$ who choose option $r$. As the number of commuters increases, crowding reduces comfort and efficiency, and each individual in $N_{r,t}$ receives only the marginal improvement in service, $u_{r,t}(|N_{r,t}|) - u_{r,t}(|N_{r,t}|-1)$. The total utility (social welfare) is a monotone submodular function, forming a Bayesian basic utility game. A transportation planner (mediator) recommends routes based on reported schedules to balance the load.

\item In a volunteer task assignment setting, each participant (player) $i$ has a private constraint or preference (type) $\theta_i$, such as time availability or physical ability. The participant selects a task $a_i \in A_i^{\theta_i}$, where $A_i^{\theta_i} \subseteq E$ is the set of feasible tasks for type $\theta_i$. The utility of each participant is their contribution, modeled as the reciprocal of the number of participants assigned to the same task. The total number of completed tasks (social welfare) forms a monotone submodular function. A volunteer coordinator (mediator) collects private information and recommends task assignments to improve efficiency.

\item In a wireless network, each user device (player) $i$ has demand $d_i$ and location $l_i$, which together form the type $\theta_i = (d_i, l_i)$. The device selects a base station $a_i \in E \cap B(l_i, r)$ for communication, where $E$ is the set of all base stations and $B(l_i, r)$ is the ball of radius $r$ centered at $l_i$. The utility of each device is $d_i$ if the total load on the selected base station does not exceed its capacity; otherwise, the excess is proportionally subtracted from each device's demand. The sum of all players’ utilities (social welfare) is a monotone submodular function. A network planner (mediator) collects location and demand information and recommends base stations to balance the load.
\end{enumerate}

\section{Overview of equilibrium concepts for Bayesian games}\label{bce}
This section provides an overview of various equilibrium concepts mentioned in this paper for completeness.
For more detailed descriptions, see \citep*{Forges93} or recent surveys \citep*{Forges06,Fujii23}.

\subsection{Bayes correlated equilibria}
There are multiple natural extensions of correlated equilibria in (complete-information) normal-form games to Bayesian games.

\paragraph{Strategic-form correlated equilibria (SFCEs)}
A Bayesian game can be interpreted as a (complete-information) normal-form game by treating a strategy $s_i \in S_i := \prod_{\theta_i \in \Theta_i} A_i^{\theta_i}$ as a single action.
This interpretation of Bayesian games as a normal-form game is called the strategic form (also referred to as the random-vector model or the normal form).
An SFCE is defined as a correlated equilibrium in this strategic form.
This concept can be implemented by mediators who provide each player with a bundle of action recommendations for all possible types, without knowing the players' actual types.
This recommendation can be likened to providing a comprehensive manual that outlines specific actions to take in every possible scenario.

\begin{definition}[Strategic-form correlated equilibria (SFCEs)]
A distribution $\sigma \in \Delta(S)$ is a strategic-form correlated equilibrium
if
for any $i \in N$ and any $\phiSF \colon S_i \to S_i$, it holds that
\begin{equation*}\label{eq:ic-sf}
\E_{\theta \sim \rho} \left[ \E_{s \sim \sigma} \left[ v_i(s(\theta)) \right] \right]
\ge                              
\E_{\theta \sim \rho} \left[ \E_{s \sim \sigma} \left[ v_i((\phiSF(s_i))(\theta_i),s_{-i}(\theta_{-i})) \right] \right].
\end{equation*}
\end{definition}

\paragraph{Agent-normal-form correlated equilibria (ANFCEs)}
Bayesian games can also be interpreted as normal-form games, an approach known as the agent normal form (also referred to as the posterior-lottery model, the Selten model, or the population interpretation).
In the agent normal form, a single player with different types is hypothetically treated as multiple distinct players.
Specifically, the set of players is represented as $\{(i,\theta_i) \mid i \in N, \theta_i \in \Theta_i\}$.
Once the type profile $\theta \sim \rho$ is realized, only the hypothetical players corresponding to the realized types $(i, \theta_i)$ for each $i \in N$ participate the game and receive payoffs.
An ANFCE is a correlated equilibrium in this agent normal form.
There is no known realistic scenario where ANFCEs naturally serve as an equilibrium concept.

\begin{definition}[Agent-normal-form correlated equilibria (ANFCEs)]
A distribution $\sigma \in \Delta(S)$ is an agent-normal-form correlated equilibrium
if
for any $i \in N$, $\theta_i \in \Theta_i$, and $\phi \colon A_i^{\theta_i} \to A_i^{\theta_i}$, it holds that
\begin{equation*}\label{eq:ic-anf}
\E_{\theta_{-i} \sim \rho|_{\theta_i}} \left[ \E_{s \sim \sigma} \left[ v_i(s(\theta)) \right] \right]
\ge                                        
\E_{\theta_{-i} \sim \rho|_{\theta_i}} \left[ \E_{s \sim \sigma} \left[ v_i(\phi(s_i(\theta_i)), s_{-i}(\theta_{-i})) \right] \right].
\end{equation*}
\end{definition}

\begin{remark}
Similarly, Nash equilibria can be defined in both the strategic form and the agent normal form.
Unlike correlated equilibria, the sets of Nash equilibria in these two normal-form games are identical \citep*{Harsanyi67} and collectively referred to as \textit{Bayes--Nash equilibria}.
\end{remark}

\paragraph{Communication equilibria}
Communication equilibria were first proposed by \citet*{Myerson82} as an extension of Bayesian incentive-compatible mechanism (see \citet*[Chapter 6]{Myerson}).
This equilibrium is realized by bidirectional communication between a mediator and players.
The players tell their types to the mediator via a private communication channel, and then the mediator recommends an action to each player also via a private communication channel.
The mediator's strategy is represented by a type-dependent distribution $\pi \in \prod_{\theta \in \Theta} \Delta(A^\theta)$, that is, given the type profile $\theta \in \Theta$ reported by the players, the mediator recommends an action profile $a \sim \pi(\theta)$.
If any player does not have an incentive to tell an untrue type nor deviate from the recommendation, this type-dependent distribution is called a communication equilibrium.

\begin{definition}[Communication equilibria]
A type-dependent distribution $\pi \in \prod_{\theta \in \Theta} \Delta(A^\theta)$ is a communication equilibrium
if
for any $i \in N$, $\theta_i, \theta'_i \in \Theta_i$, and $\phi \colon A_i^{\theta'_i} \to A_i^{\theta_i}$, it holds that
\begin{equation*}\label{eq:ic-com}
\E_{\theta_{-i} \sim \rho|_{\theta_i}} \left[ \E_{a \sim \pi(\theta)} \left[ v_i(a) \right] \right]
\ge
\E_{\theta_{-i} \sim \rho|_{\theta_i}} \left[ \E_{a \sim \pi(\theta'_i,\theta_{-i})} \left[ v_i(\phi(a_i), a_{-i}) \right] \right].
\end{equation*}
\end{definition}

\begin{remark}
Not all type-dependent distributions $\prod_{\theta \in \Theta} \Delta(A^\theta)$ can be represented by strategy distributions $\Delta(S)$.
\citet*{Fujii23} referred to type-dependent distributions that have an equivalent strategy distribution as \textit{strategy-representable}.
The strategy representability gap represents the ratio of the optimal social welfare achieved by strategy distributions to that achieved by type-dependent distributions.
\end{remark}

\paragraph{Bayesian solutions}
A Bayesian solution is implemented by mediators who know the true type profile or can verify the types reported by players.
The mediator can recommend the (possibly correlated) actions to the players depending on the actual types under the incentive constraints.

\begin{definition}[Bayesian solutions]
A type-dependent distribution $\pi \in \prod_{\theta \in \Theta} \Delta(A^\theta)$ is a Bayesian solution
if
for any $i \in N$, $\theta_i \in \Theta_i$, and $\phi \colon A_i^{\theta_i} \to A_i^{\theta_i}$, it holds that
\begin{equation*}\label{eq:ic-bs}
\E_{\theta_{-i} \sim \rho|_{\theta_i}} \left[ \E_{a \sim \pi(\theta)} \left[ v_i(a) \right] \right]
\ge
\E_{\theta_{-i} \sim \rho|_{\theta_i}} \left[ \E_{a \sim \pi(\theta)} \left[ v_i(\phi(a_i), a_{-i}) \right] \right].
\end{equation*}
\end{definition}

\subsection{Bayes coarse correlated equilibria}

Coarse correlated equilibria in complete-information games are implemented by mediators with somewhat unnatural power to force players to follow the recommendation once they receive them.
In contrast to correlated equilibria, where deviations can depend on the recommendation provided by the mediator, each player must choose to follow the recomendation or deviate to an action without observing the recommendation.

\paragraph{Strategic-form coarse correlated equilibria (SFCCEs) and agent-normal-form coarse correlated equilibria (ANFCCEs)}
As with SFCEs and ANFCEs, we can define coarse correlated equilibria in the strategic form and agent normal form, respectively.

\begin{definition}[Strategic-form coarse correlated equilibria (SFCCEs)]
A distribution $\sigma \in \Delta(S)$ is a strategic-form coarse correlated equilibrium
if
for any $i \in N$ and any $s'_i \in S_i$, it holds that
\begin{equation*}
\E_{\theta \sim \rho} \left[ \E_{s \sim \sigma} \left[ v_i(s(\theta)) \right] \right]
\ge                              
\E_{\theta \sim \rho} \left[ \E_{s \sim \sigma} \left[ v_i(s'_i(\theta_i),s_{-i}(\theta_{-i})) \right] \right].
\end{equation*}
\end{definition}

\begin{definition}[Agent-normal-form coarse correlated equilibria (ANFCCEs)]
A distribution $\sigma \in \Delta(S)$ is an agent-normal-form coarse correlated equilibrium
if
for any $i \in N$, $\theta_i \in \Theta_i$, and $a'_i \in A_i$, it holds that
\begin{equation*}
\E_{\theta_{-i} \sim \rho|_{\theta_i}} \left[ \E_{s \sim \sigma} \left[ v_i(s(\theta)) \right] \right]
\ge                              
\E_{\theta_{-i} \sim \rho|_{\theta_i}} \left[ \E_{s \sim \sigma} \left[ v_i(a'_i,s_{-i}(\theta_{-i})) \right] \right].
\end{equation*}
\end{definition}

\begin{remark}
Whereas SFCEs are a subset of ANFCEs, ANFCCEs are a subset of SFCCEs.
The first inclusion comes from the difference of information used for deviations in SFCEs and ANFCEs;
in SFCEs, each player can deviate to another action depending on the recommended \textit{strategy} (recommendations for all types), while in ANFCEs, depending only the recommended action to the realized type.
The second inclusion comes from the difference of granularity of deviations in SFCCEs and ANFCCEs.
Recall that in CCEs, each player must choose whether to follow the recommendations or stick to a single action without observing recommendations.
If each player decides not to follow the recommendations, the player must stick to a strategy (some action for each type) in SFCCEs.
On the other hand, in ANFCCEs, the player must stick to an action for a single type while following recommendations for the other types.
\end{remark}

\paragraph{Coarse Bayesian solutions}

We can consider CCEs in the setting where the mediator knows the true type profile in advance, which can be called \textit{coarse Bayesian solutions}.
Coarse Bayesian solutions were considered by several PoA/PoS studies \citep*{CKK15,JL23}, and then its difference from other definitions of Bayes coarse correlated equilibria was clarified by \citet*{Fujii23}.
As in SFCCEs and ANFCCEs, we can consider two different versions of coarse Bayesian solutions, each of which allows players to deviate to a strategy or an action for a single type, respectively.

\begin{definition}[{Strategic-form coarse Bayesian solutions}]
A type-dependent distribution $\pi \in \prod_{\theta \in \Theta} \Delta(A^\theta)$ is a \textit{strategic-form coarse Bayesian solution} (SFCBS)
if
for any $i \in N$ and $s_i \in S_i$, it holds that
\begin{equation*}
\E_{\theta \sim \rho} \left[ \E_{a \sim \pi(\theta)} \left[ v_i(a) \right] \right]
\ge
\E_{\theta \sim \rho} \left[ \E_{a \sim \pi(\theta)} \left[ v_i(s_i(\theta_i), a_{-i}) \right] \right].
\end{equation*}
\end{definition}

\begin{definition}[{Agent-normal-form coarse Bayesian solutions}]
A type-dependent distribution $\pi \in \prod_{\theta \in \Theta} \Delta(A^\theta)$ is an \textit{agent-normal-form coarse Bayesian solution} (ANFCBS)
if
for any $i \in N$, $\theta_i \in \Theta_i$, and $a'_i \in A_i$, it holds that
\begin{equation*}
\E_{\theta_{-i} \sim \rho|_{\theta_i}} \left[ \E_{a \sim \pi(\theta)} \left[ v_i(a) \right] \right]
\ge
\E_{\theta_{-i} \sim \rho|_{\theta_i}} \left[ \E_{a \sim \pi(\theta)} \left[ v_i(a'_i, a_{-i}) \right] \right].
\end{equation*}
\end{definition}

SFCBSs and ANFCBSs are defined on type-dependent distributions $\prod_{\theta \in \Theta} \Delta(A^\theta)$, while SFCCEs and ANFCCEs are defined on strategy distributions $\Delta(S)$.
This difference of domains is the only difference between these concepts, and the inclusions indicated in \Cref{figure} hold.

\end{document}